\def\doi{8(3:19)2012}
\newcommand{\addend}{\color{black}}
\newcommand{\impaddbeg}{\color{Red}}
\newcommand{\impadded}[1]{\impaddbeg #1 \addend}
\theoremstyle{plain}
\newcommand{\hole}{\Box}
\newcommand{\templ}[1] {\mathsf{TL}[#1]}
\newcommand{\langclass}{\mathscr L}
\newcommand{\algclass}{\mathscr A}
\newcommand{\tmodels}{\models_t}
\newcommand{\fmodels}{\models_f}
\newcommand{\tle}{\mathsf E}
\newcommand{\tlx}{\mathsf X}
\newcommand{\tlf}{\mathsf F}
\newcommand{\tlu}{\mathsf U}
\begin{document}

\title[Wreath Products of Forest Algebras, with  Applications to Tree Logics]{Wreath Products of Forest Algebras, with  Applications to Tree Logics\rsuper*}

\author[M.~Boja\'nczyk]{Mikolaj Boja\'nczyk\rsuper a}
\address{{\lsuper a}University of Warsaw}
\email{bojan@mimuw.edu.pl}

\author[H.~Straubing]{Howard Straubing\rsuper b}
\address{{\lsuper b}Boston College}	
\email{Straubing@cs.bc.edu}
\thanks{{\lsuper b}Research supported by  NSF Grant
    CCF-0915065}	

\author[I.~Walukiewicz]{Igor Walukiewicz\rsuper c}	
\address{{\lsuper c}LaBRI (Universit\'e de Bordeaux - CNRS)}
\email{igw@labri.fr}	
\thanks{{\lsuper c}Research supported by ANR 2010 BLAN 0202 01 FREC}	

\keywords{tree language, temporal logic, forest algebra, wreath product}
\subjclass{F.4.3}
\titlecomment{{\lsuper*}Extended abstract of the paper has appeared at LICS'09}

\begin{abstract} 
  We use the recently developed theory of forest algebras to find
  algebraic characterizations of the languages of unranked trees and
  forests definable in various logics. These include the temporal
  logics {CTL} and {EF}, and first-order logic over the ancestor
  relation.  While the characterizations are in general non-effective,
  we are able to use them to formulate necessary conditions for
  definability and provide new proofs that a number of languages are
  not definable in these logics.
\end{abstract}

\maketitle

\section{Introduction}

Logics for specifying properties of labeled trees play an important
role in several areas of Computer Science.  We say that a class of
regular languages of trees $\langclass$ has an \emph{effective
  characterization} if there is an algorithm which decides if a given
regular language of trees belongs to $\langclass$.  Effective
characterizations are known only for a few logics. In particular, we
do not know if such characterizations exist for the classes of
languages defined by the most common logics such as : CTL, CTL*, PDL,
or first-order logic with the ancestor relation.

In this paper we consider logics for {\it unranked} trees, in which
there is no {\it a priori} bound on the number of children a node may
have. Many such logics, including all the logics that are considered
in this paper, are no more expressive than monadic second-order logic,
and thus the properties they define can be described using automata.
Barcelo and Libkin~\cite{barcelo-libkin} and Libkin~\cite{libkin}
catalogue a number of such logics and contrast their expressive power.
We use recently developed theory of forest algebras to find algebraic
characterisations of the languages of unranked trees definable in some
most common logics. While the characterizations are in general
non-effective, we are able to use them to formulate necessary
conditions for definability and provide new proofs that a number of
languages are not definable in these logics.

For properties of {\it words,} such questions
 have been fruitfully studied by algebraic means.
 Whether or not a regular word language $L$ can be defined in a given logic can often be determined by verifying some property of the {\it syntactic monoid} of $L$---the transition monoid of the minimal automaton of $L.$  The earliest work in this direction is due to McNaughton and Papert~\cite{mcnaughton-papert} who studied first-order logic with linear order, and showed that a language is definable in this logic if and only if its syntactic monoid is aperiodic---that is, contains no nontrivial groups.  A comprehensive survey treating many different predicate logics is given in Straubing~\cite{straubing}; temporal logics are studied by Cohen, Perrin and Pin~\cite{perrin-pin} and Wilke~\cite{wilke}, among others.

 Algebraic techniques provide a striking alternative to purely
 model-theoretic methods for studying the expressive power of logics
 over words.  In many cases they have led to effective
 characterizations of certain logics, and actually to
 reasonably efficient algorithms. Even in the absence of effective
 characterizations, it is frequently possible to obtain effective
 necessary conditions for expressibility in a logic and use these to
 show the non-expressibility of certain languages.  For instance, the
 strictness of the $\Sigma_k$-hierarchy in first-order logic on
 words--the {\it dot-depth} hierarchy-- was first proved by such
 algebraic means (Brzozowski and Knast~\cite{brz-kna},
 Straubing~\cite{tcs-schutz}), while effective characterization of the
 levels of the hierarchy remains an open problem.

 There have been a number of efforts to extend this algebraic theory
 to trees; a notable recent instance is in the work of \'Esik and Weil
 on preclones~\cite{esik-weil1,esik-weil2}.
 Recently, Boja\'nczyk and Walukiewicz~\cite{bw}
 introduced  {\it forest
   algebras}, and along with it the {\it syntactic forest algebra,}
 which generalize monoids and  the syntactic monoid for
 languages of forests of unranked trees.  This algebraic model is
 rather simple, and in contrast to others studied in the literature,
 has already yielded effective criteria for definability in a number
 of logics: see Boja\'nczyk~\cite{boj2way},
 Boja\'nczyk-Segoufin-Straubing~\cite{boj-seg-str},
 Boja\'nczyk-Segoufin~\cite{boj-seg}.  Forest algebras are also
 implicit in the work of Benedikt and Segoufin~\cite{ben-seg} on
 first-order logic with successor and of Place and
 Segoufin~\cite{pl-seg} on locally testable tree languages.

 In the present paper we continue the study of forest
 algebras, by developing a theory of composition of forest
 algebras, using the {\it wreath product}.  The wreath
 product of transformation monoids plays an
 important role in the theory for words. In particular, it is
   connected to a composition operation on languages and to
   generalized temporal operators. This paper is concerned with
   describing the connection between formula composition and the
   wreath product of forest algebras, in the case of unranked
   trees. Here is a brief summary of our results:

\begin{enumerate}[(1)]

\item To each logic $\langclass$ among EF, CTL,  CTL*, first-order logic with ancestor, PDL and graded PDL, we associate a class of forest algebras, called the base of $\langclass$. We show that a language of forests is definable in the logic $\langclass$ if and only if it is recognized by an iterated wreath product of the forest algebras from the base of $\langclass$. (Theorem ~\ref{thm.main}.)

\item In the cases of EF and CTL, the base has a single forest algebra. For  the other cases we show that there is no finite base.  As a consequence, none of these logics can be generated by a finite collection of generalized temporal operators.  Using our algebraic framework, we give a simple and general proof of this fact. (Theorem ~\ref{thm.nofinitebase}.)

\item For the logics that do not have a finite base, we give an
  effective characterization of the base.  (Theorems
  ~\ref{thm:distributive-algebra} and ~\ref{thm:path-algebra}.) Note
  that an effective characterization of a base does not imply an
  effective characterization for wreath products of the base, so this
  result does not give an effective
  characterization of any of the logics mentioned in item~(1).

\item Going one step further, we  provide an effective
 characterization for the \emph{path languages}
 (Theorem~\ref{thm:path-algebra}): boolean combinations of languages
 from the base of graded PDL.

\item We give a new proof, based on the wreath product, of an effective characterization of the logic  {EF}. This result was proved earlier by other means. (Bojanczyk and Walukiewicz~\cite{bw}.)  Our argument here computes a decomposition based on the ideal structure of the underlying forest algebra.

\item Although we do not find effective
  characterizations for other prominent logics from our
    list, we are able to use our framework to
  establish necessary conditions for definability in these logics, and
  consequently to prove that a number of specific languages are not
  definable in them. (Theorem~\ref{thm:confusion}.)
   
\item We give an effective characterization of CTL* languages within
  first-order definable languages. Similarly for PDL languages within languages
  definable in graded PDL   (Theorem~\ref{thm.foctlstar}.)

\end{enumerate}  

\noindent{\it Plan of the paper.} In Sections 2-4 we present the basic terminology concerning, respectively, trees, logic, and forest algebras.  Our treatment of temporal logics is somewhat unorthodox, since our algebraic theory requires us to interpret formulas in forests as well as in trees, therefore the precise syntax and semantics are different in the two cases.  Section 4 includes a detailed treatment of the wreath product of forest algebras.  In Section 5 we establish the first of our main results, giving wreath product characterizations of all the logics under consideration.  In Section 6 we give the effective characterization of {EF}, and in Section 7 the necessary conditions for definability in the other logics.  Section 8 is devoted to applications of these conditions.

We note that \'Esik and Ivan~\cite{esik, esik-ivan} have done work of a similar flavor for { CTL} (for trees of bounded rank).  Our work here is of considerably larger scope, both in the number of different logics considered, and the concrete consequences our algebraic theory permits us to deduce. 

The present article is the complete version of an extended abstract presented at the 2009 IEEE Symposium on Logic in Computer Science.



\section{Trees, Forests and Contexts}

Let $A$ be a finite alphabet.  Formally, forests and trees over $A$
are expressions generated by the following rules: {\it (i)} if $s$ is
a forest and $a\in A$ then $as$ is a tree; {\it (ii)} if
$(t_1,\ldots,t_k)$ is a finite sequence of trees, then
$t_1+\cdots+t_k$ is a forest. We permit this summation to take place
over an empty sequence, yielding the {\it empty forest,} which we
denote by 0, and which gets the recursion started.  So, for example, the following forest  with two roots
\medskip
\begin{center}
	\includegraphics{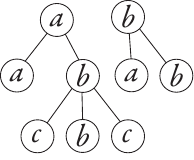}
\end{center}
is described by the  expression
\begin{eqnarray*}
	a(a0+b(c0+b0+c0))+b(a0+b0).
\end{eqnarray*}
Normally, when we write
such expressions, we delete the zeros.
 We denote
the set of forests over $A$ by $H_A.$ This set forms a monoid with
respect to forest concatenation $s+t$, with the empty forest $0$ being
the identity. We denote the set of trees over $A$ by $T_A.$

If $x$ is a node in a forest, then the {\it subtree} of $x$ is simply
the tree rooted at $x,$ and the {\it subforest} of $x$ is the forest
consisting of all subtrees of the children of $x.$ In other words, if
the subtree of $x$ is $as,$ with $a\in A$ and $s\in H_A,$ then the
subforest of $x$ is $s.$ Note that the subforest of $x$ does not
include the node $x$ itself, and is empty if $x$ is a leaf.

A {\it forest language} over $A$ is any subset of $H_A$.

A {\it context} $p$ over $A$ is formed by replacing a leaf of  a
nonempty forest by a special symbol $\hole$.  Think of $\hole$ as a kind
of place-holder, or {\it hole}.  Given a context $p$ and a forest $s$,
we form a forest $ps$ upon substituting $s$ for the hole in $p.$ In
the interpretation of forests as expressions, this really is just
substitution of the expression $s$ for the hole of $p$; the graphical
interpretation of this operation is depicted below.

\begin{center}
  \includegraphics[scale=0.7]{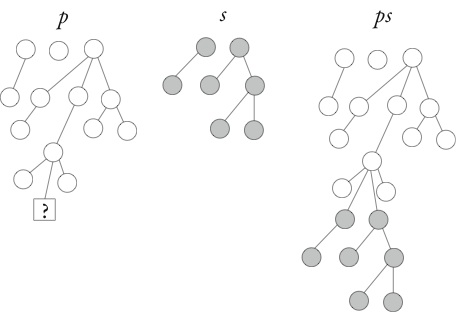}
\end{center}

In a similar manner, we can substitute another context $q$ for the
hole, and obtain a new context $pq.$ We obtain in this way a
composition operation on contexts. We denote the set of contexts over
$A$ by $V_A.$ This set forms a monoid, with respect to this
composition operation, with the empty context $\hole$ as the identity.

Note that for any $s,t \in H_A,$ $V_A$ contains a context $s+\hole+t,$
in which the hole has no parent, such that $(s+\hole+t)u=s+u+t$ for
all $u\in H_A.$

Our trees, forests and contexts are ordered, so
that $s+t$ is a different forest from $t+s$ unless $s=t$ or one of $s, t$
is 0.  This noncommutativity is important in a number of applications.
However the present article really deals with unordered trees, so there is no harm in thinking of $+$ as a
commutative operation on forests.




\section{Logics for Forest Languages}\label{section.logic}

We can define {\it regular} forest languages by means of an automaton
model that is a minor modification of the standard bottom-up tree
automaton.  The transition function has to be altered to cope with
unbounded branching, and the acceptance condition needs to take
account of the sequence of states in the roots of all the trees in the
forest.  See~\cite{bw} for a precise definition of such an automaton
model.  The usual equivalence between monadic second-order logic and
regularity holds in this setting.

For a general treatment of predicate and temporal logics for unranked
trees, we refer the reader to Libkin~\cite{libkin} and
Barcel\'o-Libkin~\cite{barcelo-libkin}. We will have to give a
somewhat different description of similar logics in order to express
properties of forests as well as of trees.  In all cases the logics
that we describe are fragments of monadic second-order logic, and thus
the languages they define are all regular forest languages.

\subsection{First-order logic for trees and forests}\label{subsection.fologic}
Let $A$ be a finite alphabet.  Consider first-order logic equipped
with unary predicates $Q_a$ for each $a\in A,$ and a single binary
predicate $\prec.$ Variables are interpreted as nodes in forests over
$A.$ Formula $Q_ax$ is interpreted to mean that node $x$ is labeled
$a,$ and $x\prec y$ to mean that node $x$ is
a (non-strict) ancestor of node $y.$ A {\it sentence} $\phi$---that
is, a formula without free variables---consequently defines a language
$L_{\phi}\subseteq H_A$ consisting of forests over $A$ that satisfy
$\phi.$ For example, the sentence
$$\exists x \exists y(Q_ax\wedge Q_ay\wedge \neg(x\prec y)\wedge\neg(y\prec x))$$
defines the set of forests containing two incomparable occurrences of $a.$  We denote this logic  by $FO[\prec].$   Note that this logic has no predicates to access the order of siblings. In particular, any language defined by the logic will be horizontally commutative, i.e.~closed under reordering sibling trees. 

It is more traditional to consider logics over trees rather than over forests.  For $FO[\prec]$ we need not worry too much about this distinction, since we can express in first-order logic the property that a forest has exactly one root (by the sentence $\exists x\forall y(x\prec y)$).  Thus the question of whether a given set of trees is first-order definable does not depend on whether we choose to interpret sentences in trees or in forests. 

\subsection{Temporal
  logics}~\label{subsection.temporallogic} We describe here a general
framework for temporal logics interpreted in trees and forests.  By
setting appropriate parameters in the framework we generate all sorts
of 
temporal logics that are traditionally studied.

The general framework is  called \emph{graded propositional
  dynamic logic} (graded PDL).

\noindent{\it Syntax of temporal formulas.} We distinguish between two kinds of formulas: {\it tree formulas} and {\it forest formulas}.  The syntax of these formulas is defined by mutual recursion, as follows:

\begin{iteMize}{$\bullet$}
\item {\bf T} and {\bf F} are forest formulas.
\item If $a\in A,$ then $a$ is a tree formula.  (Such formulas are called \emph{label formulas.})
\item Finite boolean combinations of tree formulas are tree formulas, and finite boolean combinations of forest formulas are forest formulas.
\item  Every forest formula is a tree formula.
\item Before defining the key construction we need to introduce the
  concept of an \emph{unambiguous} set of formulas. Such a set
  $\Phi=\set{\phi_1,\dots,\phi_{n+1}}$ is 
  constructed from a sequence of
  tree formulas $\psi_1,\ldots,\psi_n$ by a simple syntactic
  operation ensuring that every tree satisfies exactly one
  formula from $\Phi$:
	\begin{eqnarray*}
          \phi_i=
          \psi_i \land \bigwedge_{j \le {i-1}} \neg \psi_j  \mbox{ for
            $i=1,\ldots,n$} \qquad\mbox{ and }\qquad \phi_{n+1}= 
	\bigwedge_{j \le n} \neg \psi_j.
	\end{eqnarray*}

\item If  $\Phi$ is a finite unambiguous set of tree formulas, $k>0$
  is an integer,  and
  $L\incl \Phi^*$ is a 
  regular language then  $\tle^k L$ is a forest formula.
\end{iteMize}

\noindent{\it Semantics of temporal formulas.}   We define two notions of satisfaction: 
\emph{tree satisfaction} $t \tmodels \phi$, where $t$ is a tree and  $\phi$ is a tree formula, which
coincides with the usual notion of satisfaction; and  \emph{forest
  satisfaction} $t \fmodels \varphi$, where $t$ is a forest and $\phi$ is a forest formula, which is somewhat unusual. Again, these relations are defined by mutual recursion.
  \begin{iteMize}{$\bullet$}
  \item If $t\in H_A$ then $t\fmodels {\bf T}$ and $t\not\fmodels {\bf F}.$
  \item If $t\in T_A$ and $a\in A,$ then $t\tmodels a$ if and only if the root node of $t$ is labeled $a.$
  \item Boolean operations have their usual meaning; {\it e.g.,} if $t\in H_A$ and $\phi_1,$ $\phi_2$ are forest formulas, then $t \fmodels\phi_1\wedge\phi_2$ if and only if $t\fmodels \phi_1$ and $t\fmodels\phi_2.$
  \item Let $\phi$ be a forest formula and $t\in T_A,$  so that $t=as$ for a unique $s\in H_A,$ 
  $a\in A.$ Then $t\tmodels\phi$ if and only if $s\fmodels\phi.$

\item Let $k > 0$, and let $\Phi$ be a finite unambiguous set of
    tree formulas, with $L\subseteq\Phi^*$ a regular language. Let
    $s\in H_A.$ If $x$ is a node of $s,$ then we label the node by
    $\phi_i$ if $t_x\tmodels\phi_i,$ where $t_x$ is the subtree of
    $x.$ Note that because of the unambiguity requirement, there is
    exactly one such label. A path $x_1,\ldots, x_n$ of consecutive
    nodes of $s$ beginning at a root, but not necessarily extending to
    a leaf, thus yields a unique word
    $\phi_{i_1}\ldots\phi_{i_n}\in\Phi^*,$ which we call the
    $\Phi$-path of $x_n.$ We say $s\fmodels\tle^kL$ if there are at least
    $k$ nodes in the forest whose $\Phi$-path belongs to $L.$
  \end{iteMize}
\noindent We stress that in counting paths, we do not require the paths to be disjoint, and we do not require them to extend all the way to the leaves.  For example, the forest $aa+a$ contains three different nonempty paths from the root, so this forest satisfies the formula $\tle^3 a^+.$

Given a  temporal formula $\psi$, we write $L_\psi$ for the set of
forests that forest satisfy $\psi$:
\begin{equation*}
  L_\psi=\set{s\in H_a : s\fmodels \psi}
\end{equation*}

We specialize the above framework by restricting either the value of $k$ or the language $L$ in the application of the operator $E^kL,$ or both.  This leads, in the case of trees, to some logics that have been widely studied.  We catalogue these below:




\paragraph*{\it EF} As a first example, we show how to implement the
operator ``there exists a descendant'', often denoted by $\tle
\tlf$. This example also highlights the difference between tree
satisfaction and forest satisfaction.  Consider the special case of
$\tle^k L$ where, for some tree formula $\psi$,
\begin{equation}\label{eq:ef-syntactic-restriction}
	\Phi = \set{\psi,\neg \psi} \qquad k=1 \qquad L=(\neg \psi)^* \psi.
\end{equation}
In this special case, we write $\tle \tlf \psi$ instead of $\tle^k
L$. It is easy to see that $\tle \tlf \psi$ is forest satisfied by a
forest $s$ if and only if $\psi$ is satisfied by some subtree of
$s$. In the special case when $s$ is a tree, this subtree may be $s$
itself.  The semantics 
shows how to interpret $\tle \tlf \psi$ as a tree formula.  If $t$ is a
tree, then $t$ tree satisfies $\tle \tlf \psi$ if and only if $t$ has
a proper subtree that satisfies $\psi$. In other words, tree
satisfaction of $\tle \psi$ corresponds to the so called “strict
semantics”, while forest satisfaction of $\tle \psi$ corresponds to
the “non strict semantics”. We will use the term EF for the fragment
of graded PDL where the operator $\tle^k L$ is only used in the
special case of $\tle \tlf \psi$.

  \paragraph*{\it CTL } As a second example, we show how to implement the operator $\tle \psi \tlu \phi$ of CTL.
Consider the special case of $\tle^k L$ where, for some tree formulas $\psi$ and $\phi$,
\begin{equation}\label{eq:ctl-syntactic-restriction}
	\Phi = \set{\psi\wedge\neg\phi,\neg\psi\wedge\neg\phi,\phi} \qquad k=1 \qquad L=(\psi\wedge\neg\phi)^*\phi.
\end{equation}
In this special case, we write $\tle \psi \tlu \phi$ instead of
$\tle^k L$. It is easy to see that $\tle \psi \tlu \phi$ is forest
satisfied by a forest $s$ if and only if the subtree of some node $x$
tree satisfies the formula $\phi,$ and the subtree at every proper
ancestor of $x$ tree satisfies $\psi.$ Let us look now at
  the tree semantics of the formula $\tle \psi \tlu
\phi$.  If $t$ is a tree, then $t$ tree satisfies $\tle \psi \tlu
\phi$ if and only if the subtree of some non-root node $x$ tree
satisfies $\phi$, and every non-root proper ancestor of $x$ tree
satisfies $\psi$. As was the case with the operator $\tle\tlf$, tree satisfaction corresponds to “strict semantics” and forest satisfaction corresponds to “non strict semantics”. We will use
the term CTL for the fragment of graded PDL where the operator $\tle^k
L$ is only used in the special case of $\tle \psi \tlu
\phi$. \footnote{In most presentations CTL also has the ``next''
  operator $\tle \tlx \phi$, as well as the dual operator $\tle
  \neg(\psi \tlu \phi)$. The next operator is redundant thanks to the
  strict semantics, and the dual operator is redundant in finite
  trees.}

\paragraph*{\it First-order logic }
We  use our temporal framework to
  characterize the languages definable in $FO[\prec].$ 

 \begin{thm}\label{thm.fo} A forest language is definable in
     $FO[\prec]$ if and only if it is definable by a forest formula in
     which the operator $\tle^k L$ is restricted to word languages $L$
     that are first-order definable over an unambiguous finite alphabet
     $\Phi$ of tree formulas.
 \end{thm}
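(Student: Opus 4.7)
The strategy is to translate between the two formalisms via a rank-by-rank composition-method argument on $FO[\prec]$-types. The easy direction translates each temporal operator using the ancestor relation plus distinct-element counting; the hard direction expresses every bounded-quantifier-rank $FO[\prec]$-type as a formula in the restricted temporal logic.

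For the ``if'' direction, I would translate, by induction on formula structure, each tree formula $\phi$ into an $FO[\prec]$-formula $\widehat{\phi}(x)$ holding iff the subtree rooted at $x$ tree-satisfies $\phi$, and each forest formula $\psi$ into $\widetilde{\psi}(x)$ holding iff the subforest of $x$ forest-satisfies $\psi$. Label formulas become $Q_a x$; Boolean operations translate literally. For $\tle^k L$, if $L$ is defined by an $FO[<]$-formula $\alpha$ over $\Phi^*$, I relativize $\alpha$ to the nodes that are strict descendants of $x$ and non-strict ancestors of a designated endpoint $y$, interpreting $<$ as $\prec$ and each predicate $Q_\phi z$ as $\widehat{\phi}(z)$; unambiguity of $\Phi$ guarantees a unique $\Phi$-label at each such ancestor. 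Then $\widetilde{\tle^k L}(x)$ is $\exists y_1 \cdots \exists y_k$, pairwise distinct strict descendants of $x$, each satisfying the relativization. A closed sentence over forests is obtained by dropping the role of $x$ and replacing ``strict descendant of $x$'' by ``arbitrary node'' throughout.

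For the ``only if'' direction, fix an $FO[\prec]$-sentence of quantifier rank $n$ and construct, by induction on $n$, temporal formulas $\Theta_\tau^n$ (tree) and $\Sigma_\sigma^n$ (forest) characterizing each rank-$n$ $FO[\prec]$-type of trees and of forests. The sentence is then equivalent to the disjunction of those $\Sigma_\sigma^n$ whose type $\sigma$ makes it true. The base case $n=0$ is trivial. For the inductive step I would use two standard composition-method facts: (i) the rank-$(n{+}1)$ type of a structure is determined by the threshold-counted multiset of rank-$n$ $1$-types realized by its elements; and (ii) for $FO[\prec]$, which is horizontally commutative, the rank-$n$ $1$-type of a pointed forest $(s,x)$ is determined by the rank-$n$ type of the subtree at $x$ together with the ``context spine'' above $x$---the sequence, from the root downward, of ancestors annotated with their labels and the rank-$n$ types of their ``sibling'' subforests (the subforest at each ancestor with the child leading to $x$ removed).

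Taking $\Phi$ to be the unambiguous set of tree formulas of the form $a \wedge \Sigma_\sigma^n$ for $(a,\sigma)$ ranging over label-type pairs---available by the induction hypothesis---the $\Phi$-path of $x$ records the label and rank-$n$ subforest-type of every ancestor, and the sibling-subforest type at each ancestor can then be recovered by threshold-multiset subtraction, an operation expressible in $FO[<]$ over words. Hence for each rank-$n$ $1$-type $\xi$, the set $L_\xi \subseteq \Phi^*$ of $\Phi$-paths of nodes realizing $\xi$ is $FO[<]$-definable. Using $\tle^k L_\xi$ to count realizations of each $1$-type and taking Boolean combinations yields $\Sigma_\sigma^{n+1}$; then $\Theta_\tau^{n+1} = a_\tau \wedge \Sigma_{\sigma_\tau}^{n+1}$ recovers tree types from the unique pair $(a_\tau,\sigma_\tau)$ determining $\tau$ via the EF decomposition. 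The main obstacle is establishing the $FO[<]$-definability of each $L_\xi$: formally, this is a Feferman--Vaught composition along the spine, verifying that both the stepwise multiset subtraction (to recover sibling types) and the final assembly of the rank-$n$ $1$-type from the spine sequence stay within $FO[<]$. The bookkeeping is delicate but routine given the finiteness of the type sets.
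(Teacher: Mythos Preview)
Your approach is the composition method carried out directly, which is a genuinely different route from the paper's proof: the paper does not reprove the equivalence from scratch but instead reduces to the theorem of Moller and Rabinovich (closely related to Hafer--Thomas) that Counting-CTL$^*$ coincides with Monadic Path Logic, together with the observation that over finite trees $FO[\prec]$ and MPL agree because a maximal path is determined by its two endpoints. What you gain by your route is a self-contained argument; what the paper gains is brevity.

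That said, there is a concrete gap in your inductive step. You take $\Phi$ to record rank-$n$ subforest types and then claim that the rank-$n$ \emph{sibling}-subforest type at each ancestor is recoverable by ``threshold-multiset subtraction''. Threshold subtraction is not well-defined: if the rank-$n$ type of the subforest at $y$ records ``at least $k_n$ trees of tree-type $\tau$'' and the child on the path to $x$ also has tree-type $\tau$, you cannot tell whether the sibling subforest has $k_n-1$ or $\ge k_n$ such trees --- and the rank-$n$ $1$-type of $(s,x)$ can see this difference. A small example: over $\{a,b\}$ take a root labelled $b$ with $n{+}1$ (respectively $n$) children all labelled $a$, one of them $x$. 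The rank-$n$ subforest types at the root agree, so the $\Phi$-paths of $x$ coincide, yet the rank-$n$ formula ``there exist $n$ $a$-nodes distinct from $x$'' separates the two $1$-types. Hence your $L_\xi$'s would overlap and $\tle^k L_\xi$ would overcount.

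The fix is to refine $\Phi$. At stage $n{+}1$ you already have all tree-type formulas $\Theta^n_\tau$; using $\tle^j$ applied to the one-letter language $\{\Theta^n_\tau\}$ you can express, for any $j$, ``at least $j$ root trees of tree-type $\tau$''. So you can build forest formulas that record the threshold-$(k_n{+}1)$ multiset of rank-$n$ tree types (a strict refinement of the rank-$n$ forest type), and let $\Phi$ range over $a\wedge\psi$ with $\psi$ among these refined formulas. Now subtracting one tree from a threshold-$(k_n{+}1)$ multiset yields a well-defined threshold-$k_n$ multiset, i.e., a genuine rank-$n$ sibling-subforest type, and your spine argument goes through. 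After this correction the remaining issue you flag --- $FO[<]$-definability of each $L_\xi$, which amounts to aperiodicity of the monoid of rank-$n$ context types under composition --- is indeed routine via an Ehrenfeucht--Fra\"iss\'e pumping argument along the path.
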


 \proof The theorem is very similar to the result of Hafer and
 Thomas~\cite{haferthomas} who show that first-order logic
 coincides with CTL* on finite binary trees. The theorem is even
 closer to the result  Moller and
 Rabinovich~\cite{moller-rabinovich} who show that over infinite
 unranked trees Counting-CTL* is
 equivalent to \emph{Monadic Path Logic (MPL)}. To deduce our theorem
 from their result it is enough to clarify the relations between
 different logics. 
 
 The logics considered by Moller and Rabinovich express properties of
 infinite unranked trees, which can have both infinite branches and
 finite branches that end in leaves.  A {\it maximal path} is
 therefore defined as a path that begins in any node, is directed away
 from the root, and either continues infinitely or ends in a leaf.
 Monadic Path Logic (MPL) is the restriction of monadic second-order logic over the
 predicate $\prec$ in which second-order quantification is restricted
 to maximal paths. In other words, MPL is the extension of $FO[\prec]$
 that allows quantification over maximal paths.  Over infinite trees,
 MPL is more expressive than first-order logic, since it can define
 the property ``some path contains infinitely many $a$'s '', which
 cannot be defined in $FO[\prec]$. However, over finite unranked
 trees, $FO[\prec]$ has the same expressive power as MPL. This is
 because a maximal path in a finite tree can be described by its first
 node and the leaf where it ends.

 The logic counting-CTL* can be interpreted as the the fragment
 of graded PDL where the operator $\tle^k L$ is only allowed in the
 following two restricted forms:
\begin{iteMize}{$\bullet$}
\item A next operator $\tlx^k$. This formula holds in a
  tree if subtrees of at least $k$ children of the root satisfy
  $\phi$. If $\tlx^k\phi$ is a formula of counting-CTL* and $\wh\phi$
  is a translation of $\phi$ into graded-PDL then $\tlx^k\phi$ is
  translated into a tree formula $\tle^kL_\phi$ where
  $L_\phi=\set{\wh\phi}$. Indeed, such a formula requires existence of
  $k$ different paths of length $1$ whose labellings belong to
  $\wh\phi$. 

\item An existential path operator, which we denote here by
    $\tle'$ 
  (the original paper uses $\tle$, but we use $\tle'$ to highlight the
  slight change in semantics). This operator works like our $\tle L$,
  but with the the difference that $\tle' L$ is a tree formula, and
  the path begins in the unique root of the tree. 
  Rabinovich and Moller require that $L$ is definable in LTL, which is
  equivalent to first-order definability.
\end{iteMize}
So counting-CTL* can be translated to a fragment of graded-PDL
  using only first-order definable word languages in
  quantification. Hence, by the result of Moller and Rabinovich we get
  a translation of $FO[\prec]$ to this fragment. The translation in the opposite
  direction is straightforward.  \qed

Note that Theorem~\ref{thm.fo} fails without the restriction on
unambiguity of the alphabet $\Phi$.  For instance, if we took
$A=\{a,b,c\},$ $\Phi=\{\phi_1,\phi_2 \},$ where $\phi_1=a\vee c,$
$\phi_2=b\vee c,$ then $L=(\phi_1\phi_2)^+$ is first-order definable
as a word language. One can imagine what the semantics of $\tle
  L$ should be in the case of such $\Phi$: a node labelled with $c$ can
  be labelled either with $\phi_1$ or with $\phi_2$.  With
  this semantics however, the language defined by $\tle L$ is not
first-order definable. (If it were, we would be able to define in
first-order logic the set of forests consisting of a single path with
an even number of occurrences of $c$.)

Actually, one can show, using composition theorems similar to
  those used by Hafer and Thomas, or Moller and
  Rabinovich, that graded PDL has the same expressive power as chain
  logic, which is the fragment of monadic second order logic where set
  quantification is restricted to chains, i.e.~subsets of paths.

\paragraph*{CTL* and PDL} 
Finally, we define two more temporal logics by modifying the
definitions above.  CTL* is like the fragment of temporal logic in
Theorem~\ref{thm.fo}, except that we only allow $k=1$ in $\tle^k L$.
In particular, CTL* is a subset of $FO[\prec].$ We also consider PDL,
which is obtained by restricting the temporal formulas $\tle^k L$ to
$k=1$, but without the requirement that  $L$ be first-order definable.
If we place no restriction on either the multiplicity $k$ or the
regular language $L,$ we obtain {\it graded PDL.}  

\subsection{Language composition and bases} 
\label{section.composition} 
In this section we provide a more general notion of temporal logic,
where the operators are given by regular forest languages.  This is
similar to notions introduced by \'Esik in \cite{esik}. The benefit of the general framework is twofold. First, it
corresponds nicely with the algebraic notion of wreath product
presented later in the paper.  Second, it allows us to state and prove
negative results, for instance our infinite base theorem, which says
that the number of operators needed to obtain first-order logic is
necessarily large.

We introduce a composition operation on forest
languages.  Fix an alphabet $A$, and let $\{L_1,\ldots,L_k\}$ be a
partition of $H_A$. Let $B=\{b_1,\ldots,b_k\}$ be
another alphabet, with one letter $b_i$ for each block $L_i$ of the
partition. The partition and alphabet are used to define a relabeling
\begin{eqnarray*}
  t \in H_A \qquad \mapsto \qquad t[L_1,\ldots,L_k] \in H_{A
    \times B}
\end{eqnarray*}
in the following manner. The  nodes in the forest $t[L_1,\ldots,L_k]$ are
the same as in the forest $t$, but the labels are different. A node $x$ that had label $a$ in $t$ gets label  $(a,b_i)$ in the new forest, where $b_i$ corresponds to the unique language $L_i$
that contains the subforest of $x$ in $t$. For the partition and $B$
as above, and $L$ a language of forests over $A \times B$, we define
$L[L_1,\ldots,L_k] \subseteq H_A$ to be the set of all forests $t$
over $A$ for which $t[L_1,\ldots,L_k]\in L.$

The operation of language composition is similar to formula
composition. The definitions below use this intuition, in order to
define a ``temporal logic'' based on operators given as forest
languages. Formally, we will define the closure of a language class
under language composition. First however, we need to comment on a
technical detail concerning alphabets.  In the discussion below, a
forest language is given by two pieces of information: the forests it
contains, and the input alphabet.  For instance, we distinguish
between the set $L_1$ of all forests over alphabet $\set a$, and the
set $L_2$ of all forests the alphabet $\set {a,b}$ where $b$ does not
appear. The idea is that sometimes it is relevant to consider a
language class $\langclass$ that contains $L_1$ but does not contain
$L_2$, such as the class of \emph{definite} languages that only look at a bounded prefix of the input forest  (such classes will not appear in this particular
paper). This distinction will be captured by our notion of language
class: a language class is actually a mapping $\langclass$, which
associates to each finite alphabet a class of languages over this
alphabet.

Let $\langclass$ be a class of forest languages, which will be called
the \emph{language base}. The temporal logic with language base
$\langclass$ is defined to be the smallest class $\templ \langclass$ of
forest languages that contains $\Ll$ and is closed under
boolean combinations and language
composition, i.e.
\begin{eqnarray*}
  L_1,\ldots,L_k, L \in  \templ \langclass \quad \Rightarrow \quad
  L[L_1,\ldots,L_k] \in \templ \langclass.
\end{eqnarray*}
Formally speaking, in the above we should highlight the alphabets (the languages $L_1,\ldots,L_k$ and $L[L_1,\ldots,L_k]$ belong to the part of $\templ \langclass$ for alphabet $A$, while the language $L$ belongs to the part of $\templ \langclass$ for alphabet $A \times B$, as in the definition of the composition operation).

We can translate  the definitions of the temporal logics we have considered in terms of
language composition.  This gives  the following theorem.
\begin{thm} \label{thm.composition}
The logics EF, CTL, $FO[\prec]$, CTL* , PDL and graded PDL have language bases as depicted
in Figure~\ref{fig:bases}.
\end{thm}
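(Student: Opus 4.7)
\proof[Proof plan]
The plan is to set up a uniform translation between formulas of the form $\tle^k L$ and the language composition operator $L[L_1,\dots,L_n]$. The key observation is that the semantics of $\tle^k L$ in Section~\ref{subsection.temporallogic} proceeds by first relabeling each node $x$ by the unique formula $\phi_i\in\Phi$ that its subtree satisfies, and then asking a regular path-counting property of the relabeled forest. This is exactly the pattern of the composition operation $t \mapsto t[L_{\phi_1},\dots,L_{\phi_n}]$: the partition is induced by the unambiguous set $\Phi$, and the outer language $L$ is a forest language over the product alphabet that tests for at least $k$ root-to-node paths whose second-coordinate labels spell a word in the regular word language $L \subseteq \Phi^*$. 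So for each logic I would single out, as the language base, the family of forest languages implementing precisely the shape of $\tle^k L$ that the logic is allowed to use.

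Concretely, I would specify the base for each logic as follows, and take this as the content of Figure~\ref{fig:bases}. For graded PDL: for every finite alphabet $B$, every regular word language $L\subseteq B^*$, and every $k\ge 1$, the forest language $E^kL\subseteq H_B$ consisting of forests that have at least $k$ nodes whose root-to-node label word lies in $L$. For PDL: the same, but restricted to $k=1$. For CTL*: the same as PDL but restricted to first-order definable $L$. For $FO[\prec]$: the same as CTL* but without the $k=1$ restriction, i.e.\ counting allowed. For CTL: the single forest language $E^1L$ where $B=\{b_1,b_2,b_3\}$ and $L=\{b_1\}^*\{b_3\}$, which directly encodes the ``until'' pattern in the shape fixed by~(\ref{eq:ctl-syntactic-restriction}). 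For EF: the single forest language $E^1L$ where $B=\{b_1,b_2\}$ and $L=\{b_1\}^*\{b_2\}$, following~(\ref{eq:ef-syntactic-restriction}).

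With the bases pinned down, I would prove the theorem by induction on formula structure in one direction and on composition depth in the other. For the ``formula $\Rightarrow$ composition'' direction, given a forest formula $\psi$ with top-level operator $\tle^k L$ over an unambiguous set $\Phi=\{\phi_1,\dots,\phi_n\}$, I inductively translate each $\phi_i$ into a forest language $L_{\phi_i}\in\templ{\langclass}$, observe that these form a partition of $H_A$ by unambiguity, and then write $L_\psi = L'[L_{\phi_1},\dots,L_{\phi_n}]$, where $L'\in\langclass$ is the base language $E^kL$ lifted to the product alphabet $A\times\Phi$ via projection on the second coordinate (one extra closure under boolean combinations handles the product with $A$). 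For the ``composition $\Rightarrow$ formula'' direction, one checks that every base language $E^kL$ is trivially expressible by the single atomic tree-formula alphabet $B$ viewed as $\{b_1,\dots,b_n\}$ with $\psi_i$ = ``root label has second coordinate $b_i$''; then a language composition $L[L_1,\dots,L_k]$ with $L_1,\dots,L_k\in\templ{\langclass}$ is encoded by a $\tle^k L$ formula whose unambiguous alphabet is obtained by applying the recipe just before the $\tle^kL$ clause in Section~\ref{subsection.temporallogic} to the formulas defining $L_1,\dots,L_k$.

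The main obstacle is bookkeeping rather than depth: one must make sure that the relabeling step of the composition operation agrees with the semantics of $\tle^kL$ on the \emph{nose}, i.e.\ that the ``$\Phi$-path of $x$'' in Section~\ref{subsection.temporallogic} really matches the path labels read in $t[L_1,\dots,L_k]$, and that boolean closure in the logic is enough to absorb the discrepancy between testing labels from $A\times B$ (as in composition) and testing the $B$-component only (as in the base). Once the dictionary between $\Phi$-labels, partitions of $H_A$, and letters of $B$ is nailed down, the inductive translations go through and give matching inclusions $L_\psi \in \templ{\langclass}$ and $\templ{\langclass} \subseteq \{L_\psi : \psi\text{ a formula of the logic}\}$, which together prove the theorem. \qed
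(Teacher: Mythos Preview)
Your inductive-translation approach is exactly what the paper has in mind; the paper itself gives no proof beyond the remark that one ``can translate the definitions of the temporal logics we have considered in terms of language composition'' together with the observation that the $FO[\prec]$ case rests on Theorem~\ref{thm.fo}. For graded PDL, PDL, CTL$^*$ and $FO[\prec]$ your proposed bases coincide with those in Figure~\ref{fig:bases}, and your two-way induction goes through.

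There is, however, a genuine gap in your treatment of EF and CTL. Figure~\ref{fig:bases} specifies a \emph{per-alphabet family}: for EF, the languages ``some node has label $a$'' as $a$ ranges over $A$; for CTL, the languages ``some path in $B^*b$'' for $B\subseteq A$ and $b\in A$. You instead propose a single language over one fixed two- or three-letter alphabet. That base does not generate the logic, because $\templ{\langclass}$ has no alphabet-changing mechanism. Composition $L[L_1,\ldots,L_k]$ returns a language over the alphabet $A$ of the $L_i$, but it requires as input an $L\in\templ{\langclass}$ over $A\times B$; boolean combinations stay within a single alphabet. Starting from a single language over $\{b_1,b_2\}$ one can never reach a language over any $A$ with $|A|\neq 2$, let alone over $A\times\Phi$. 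Your sentence ``the base language $E^kL$ lifted to the product alphabet $A\times\Phi$ via projection on the second coordinate (one extra closure under boolean combinations handles the product with $A$)'' is exactly where this breaks for your EF and CTL bases. For the four richer logics the lifting succeeds not via boolean closure but because the base for alphabet $A\times\Phi$ already contains the language $E^k(\pi^{-1}L)$ outright, $\pi$ being the projection $A\times\Phi\to\Phi$; your single-language EF and CTL bases do not enjoy this property. Once you replace them by the per-alphabet families of Figure~\ref{fig:bases}, the remainder of your argument is correct and matches the paper's intended (if unwritten) proof.
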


\begin{figure}
  \centering
  \begin{tabular}{l|l}
Logic & Languages in the language base for alphabet $A$\\
\hline
EF & $    \set{ \mbox{``some node with  $a$''}: a \in A}$ \\
CTL & $  \set{ \mbox{``some path in
        $B^*b$''}: B \subseteq A, b \in A}$\\
$FO[\prec]$ &    $ \set{ \mbox{``at least $k$ paths in $L$''}:  k \in
      \Nat,  L \in FO_A[<]}$\\
CTL*  &  $ \set{ \mbox{``some path in $L$''}:   L \in
  FO_A[<]}$\\
 PDL & $  \set{ \mbox{``some path in $L\subseteq A^+$''}:   L
      \mbox{ regular}}$\\
 graded PDL&$ \set{ \mbox{``at least $k$ paths in $L\subseteq A^+$''}:  k \in
      \Nat,  L  \mbox{ regular}}$
\end{tabular}
\caption{Language bases for temporal logics}
  \label{fig:bases}
\end{figure}

\noindent Note that the assertion about $FO[\prec]$ depends on Theorem
~\ref{thm.fo}.



\section{Forest Algebras}
\label{sec:forest-algebras}
\subsection{Definition of forest algebras}
Forest algebras, introduced in~\cite{bw} by Boja\'nczyk and Walukiewicz, extend the algebraic theory of syntactic monoid and
syntactic morphism for regular languages of words to the setting of
unranked trees and forests.  A {\it forest algebra} is a pair $(H,V)$
of monoids together with a faithful monoidal left action of $V$ on the
set $H.$ This means that for all $h\in H,$ $v\in V,$ there exists
$vh\in H$ such that (i) $(vw)h=v(wh)$ for all $v,w\in V$ and $h\in H,$
(ii) if $1\in V$ is the identity element, then $1h=h$ for all $h\in
H,$ and (iii) if $vh=v'h$ for all $h\in H,$ then $v=v'.$ We write the
operation in $H$ additively, and denote the identity of $H$ by 0. We
call $H$ and $V,$ respectively, the {\it horizontal} and {\it
  vertical} components of the forest algebra. The idea is that $H$
represents forests and $V$ represents contexts. As was the case with
the addition in $H_A,$ this is not meant to suggest that $H$ is a
commutative monoid, although in all the applications in the present
paper $H$ will indeed be commutative. We require one additional
condition: For each $h\in H$ there are elements $1+h, h+1\in V$ such
that for all $g\in H,$ $(1+h)g=g+h,$ and $(h+1)g=h+g.$ A consequence is that every element $h \in H$ can be written as $h=v0$ for some $v \in V$, namely $v=h+1$. A homomorphism
of forest algebras consists of a pair of monoid homomorphisms
$(\alpha_H,\alpha_V):(H,V)\to (H',V')$ such that
$\alpha_H(vh)=\alpha_V(v)\alpha_H(h)$ for all $v\in V$ and $h\in H.$
We usually drop the subscripts on the component morphisms and simply
write $\alpha$ for both these maps.

Of course, if $A$ is a finite alphabet, then $(H_A,V_A)$ is a forest
algebra.  The empty forest 0 is the identity of $H_A,$ and the empty
context $\hole$ is the identity of $V_A.$ This is the {\it free forest
  algebra} on $A,$ and we denote it $A^{\Delta}.$ It has the property
that if $(H,V)$ is any forest algebra and $f:A\to V$ is a map, then
there is a unique homomorphism $\alpha$ from $A^{\Delta}$ to $(H,V)$
such that $\alpha(a\hole)=f(a)$ for all $a\in A.$

\subsection{Recognition and  syntactic forest algebra}

Given a homomorphism $\alpha:A^{\Delta}\to (H,V),$ and a subset $X$ of
$H,$ we say that $\alpha$ {\it recognizes} the language
$L=\alpha^{-1}(X),$ and also that $(H,V)$ recognizes $L.$ A forest
language is regular if and only if it is recognized in this fashion by
a {\it finite} forest algebra.  Moreover, for every forest language
$L\subseteq H_A,$ there is a special homomorphism
$\alpha_L:(H_A,V_A)\to (H_L,V_L)$ recognizing $L$ that is minimal in
the sense that $\alpha_L$ is surjective, and factors through every
homomorphism that recognizes $L.$ 
We call $\alpha_L$ the {\it
  syntactic morphism} of $L,$ and $(H_L,V_L)$ the {\it syntactic
  forest algebra} of $L.$ If $s,s'\in H_A,$ then
$\alpha_L(s)=\alpha_L(s')$ if and only if for all $v\in V_A,$ $vh\in
L\Leftrightarrow vh'\in L.$ This equivalence is called the {\it
  syntactic congruence} of $L.$ An important fact in applications of
this theory is that one can effectively compute the syntactic morphism
and algebra of a regular forest language $L$ from any automaton that
recognizes $L.$ (See~\cite{bw}.)

We say that a forest algebra $(H_1,V_1)$ \emph{divides}
  $(H_2,V_2)$, in symbols $(H_1,V_1)\prec (H_2,V_2)$ if $(H_1,V_1)$ is
  a quotient of a subalgebra of $(H_2,V_2)$. In particular,
  $(H_L,V_L)$ {\it divides} every forest algebra that recognizes $L.$

There is a subtle point in the definition of division of forest
algebras given above that we will need to address.  We have defined
this in a way that directly generalizes the standard notion of
division of monoids: A divisor of a monoid $M$ is a quotient of a
submonoid of $M.$ But a forest algebra, is, in particular, a
transformation monoid, and there is a second notion of division, which
comes from the theory of transformation monoids, that will be
particularly useful when we deal with wreath products: We say that
$(H,V)$ {\it tm-divides} $(H',V')$ if there is a submonoid $K$ of
$H',$ and a surjective monoid homomorphism $\Psi:K\to H$ such that for
each $v\in V$ there exists $\hat v \in V'$ with $\hat v K\subseteq K,$
and for all $k\in K,$
$$\Psi(\hat vk)=v\Psi(k).$$
Fortunately, the two notions of division coincide, as shown in the following Lemma.

\begin{lem}\label{lemma.division}
Let $(H_1,V_1)$ and $(H_2,V_2)$ be forest algebras.  $(H_1,V_1)\prec(H_2,V_2)$ if and only if $(H_1,V_1)$ {\it tm}-divides $(H_2,V_2).$
\end{lem}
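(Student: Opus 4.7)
The plan is to translate between the two definitions in both directions: the forward direction is essentially bookkeeping, while the reverse direction requires a small construction followed by a faithfulness quotient, which is the only non-routine step.

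For $\prec \Rightarrow$ tm-divides, suppose $\alpha = (\alpha_H, \alpha_V):(H',V') \to (H_1,V_1)$ is a surjective homomorphism, where $(H',V')$ is a subalgebra of $(H_2,V_2)$. Take $K := H' \subseteq H_2$, which is a submonoid of $H_2$, and set $\Psi := \alpha_H$, which is a surjective monoid homomorphism from $K$ to $H_1$. For each $v \in V_1$, use surjectivity of $\alpha_V$ to pick some $\hat v \in V' \subseteq V_2$ with $\alpha_V(\hat v) = v$. The inclusion $\hat v K \subseteq K$ is exactly the statement that $(H',V')$ is closed under the action of its vertical part on its horizontal part, and the compatibility $\Psi(\hat v k) = v \Psi(k)$ is just the fact that $\alpha$ is a homomorphism of forest algebras.

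For the reverse direction, starting from a tm-division with data $K$, $\Psi$, and choices $v \mapsto \hat v$, the plan is to produce a subalgebra of $(H_2, V_2)$ whose quotient is $(H_1, V_1)$. Take the horizontal component to be $K$ itself and define the vertical component
\[
  V' \;:=\; \{\, w \in V_2 \,:\, wK \subseteq K \text{ and } \exists\, v \in V_1 \text{ with } \Psi(wk) = v\,\Psi(k) \text{ for all } k \in K \,\}.
\]
Because $\Psi(K) = H_1$ and the action of $V_1$ on $H_1$ is faithful, the witnessing element $v$ is unique, so we obtain a well-defined map $\phi : V' \to V_1$. Routine checks show that $V'$ is a submonoid of $V_2$ and $\phi$ is a monoid homomorphism; surjectivity of $\phi$ is witnessed by the $\hat v$. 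Moreover, for every $k \in K$ the elements $1+k, k+1$ of $V_2$ lie in $V'$, with $\phi(1+k) = 1+\Psi(k)$ and $\phi(k+1) = \Psi(k)+1$, so the ``completion'' axiom of forest algebras is inherited by $(K, V')$. Combined with $V' K \subseteq K$, this almost exhibits $(K, V')$ as a subalgebra of $(H_2, V_2)$ surjecting onto $(H_1, V_1)$ via $(\Psi, \phi)$.

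The point I expect to require genuine care is the faithfulness axiom: two elements of $V'$ can act identically on $K$ without being equal in $V_2$, so $(K, V')$ need not itself satisfy the forest-algebra faithfulness requirement. This is cured by passing to the quotient $\bar V'$ of $V'$ modulo the congruence $w \sim w'$ iff $wk = w'k$ for all $k \in K$. By faithfulness of $V_1$ acting on $H_1 = \Psi(K)$, this congruence refines $\ker \phi$, so $\phi$ descends to a surjection $\bar\phi : \bar V' \to V_1$. The resulting $(K, \bar V')$ is a genuine forest algebra, sitting inside $(H_2, V_2)$ in the standard sense (where subalgebras are implicitly reduced to enforce faithfulness), and $(\Psi, \bar\phi)$ is the desired quotient homomorphism onto $(H_1, V_1)$, establishing the division. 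Apart from this faithfulness quotient, every step is a direct unfolding of the definitions.
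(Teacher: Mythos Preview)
Your proof is correct. The forward direction matches the paper's argument essentially verbatim. For the converse, however, you and the paper take genuinely different routes.

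You construct the subalgebra of $(H_2,V_2)$ directly: take $K$ as the horizontal part, define $V'$ as the set of all $w\in V_2$ stabilizing $K$ and inducing some $V_1$-action through $\Psi$, verify closure and the $1+k$ axiom, then pass to the faithfulness quotient. This is elementary and self-contained; the only subtlety is the quotient step, which you handle correctly (and which the paper also flags as implicit in the notion of subalgebra).

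The paper instead goes through the free forest algebra. It picks an alphabet $A$ large enough to surject onto $V_1$, extends this to a surjective homomorphism $\gamma:A^\Delta\to(H_1,V_1)$, defines a second homomorphism $\delta:A^\Delta\to(H_2,V_2)$ by sending each letter $a$ to $\widehat{\gamma(a)}$, and then proves that $\delta(x)=\delta(y)$ implies $\gamma(x)=\gamma(y)$, so $\gamma$ factors through $\delta$. The subalgebra is then the image of $\delta$, and the quotient map comes from the factorization.

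Your approach is more direct and avoids the detour through $A^\Delta$; the paper's approach has the advantage of making the division explicit as a factorization of free-algebra homomorphisms, which fits naturally with how recognition and syntactic morphisms are set up elsewhere in the paper. Both arguments ultimately produce the same subalgebra (your $V'\cdot 0$ equals $K$, and the image of the paper's $\delta$ sits inside your $(K,V')$), so the difference is one of presentation rather than substance.
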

\proof First suppose $(H_1,V_1)$ divides $(H_2,V_2).$  Then there is a submonoid
$V'$ of $V_2$ and a  forest algebra homomorphism 
$$\alpha:(V'\cdot 0,V')\to (H_1,V_1).$$
(Strictly speaking, we should reduce $V'$ to the quotient that acts faithfully on $V'\cdot 0,$ but leaving this reduction out does not change the argument.)  Let $v\in V_1,$ and set $\hat v$ to be any 
element of $V'$ such that $\alpha({\hat v})=v.$  We then have for $h\in V'\cdot 0,$

$$\alpha({\hat v}h)=\alpha({\hat v})\alpha(h)=v\alpha(h),$$
so $(H_1,V_1)$ tm-divides $(H_2,V_2).$ 



Conversely, suppose  $(H_1,V_1)$ tm-divides $(H_2,V_2),$ with underlying homomorphism $\alpha:H'\to H_1.$   Let $A$ be an alphabet at least as large as $V_1,$ and let $\gamma:A\to V_1$
be an onto map.  This extends, because of the universal property of the free forest algebra, to a (surjective) forest algebra homomorphism $\gamma:A^{\Delta}\to(H_1,V_1).$  We define $\delta:A\to V_2$ by setting
$$\delta(a)=\widehat{\gamma(a)}$$
for all $a\in A,$ and consider its  extension $\delta$ to a forest algebra homomorphism.
It is enough to show that for $x,y\in V_{A},$ $\delta(x)=\delta(y)$ implies $\gamma(x)=\gamma(y).$  This will imply that $\gamma$ factors through $\delta$ and give the required division.

Observe that if $s\in H_{A},$ then $\delta(s)$ is in the domain $H'$ of $\alpha,$ because $s= x\cdot 0$ for some $x\in V_1,$ and thus  
\begin{eqnarray*}
\gamma(s) &=&\gamma(x)\gamma(0)\\
                    &=&\gamma(x)\alpha(\delta(0))\\
                    &=&\alpha(\widehat{\gamma(x)}\delta(0))\\
                    &=&\alpha(\delta(x)\delta(0))\\
                    &=&\alpha(\delta(x\cdot 0))\\
                    &=&\alpha(\delta(s)).
\end{eqnarray*} 
So by assumption, we have
$$\gamma(a)\alpha(\delta(s))=\alpha(\delta (a)\delta(s))$$
for all $s\in H_{A},$ $a\in A$.  A straightforward induction on the number of nodes in $x$ implies that for any $x\in V_{A},$
$$\gamma(x)\alpha(\delta(s))=\alpha(\delta(x)\delta(s)).$$
Now suppose $h\in H_1$ and $\delta(x)=\delta(y). $  As noted above, $h=\alpha(\delta(s))$ for some
$s\in H_{A},$ and consequently
\begin{eqnarray*}
\gamma(x)\cdot h &=&\gamma(x)\alpha(\delta(s))\\
	&=&\alpha(\delta(x)\delta(s))\\
	&=&\alpha(\delta(y)\delta(s))\\
	&=&\gamma(y)\alpha(\delta(s))\\
	&=&\gamma(y)\cdot h.
\end{eqnarray*}

Since $h$ was arbitrary, we get $\gamma(x)=\gamma(y),$ by faithfulness.

\qed

\subsection{Wreath product}

Here we introduce the wreath product of forest algebras. We first try to give some intuition behind the construction.  The wreath product originally arose in the theory of permutation groups, but it was subsequently adapted to provide an algebraic model of serial composition of automata.  The idea is that the first automaton reads an input word $a_1\cdots a_n$ beginning in state $q_0.$  The second automaton sees both the run of the first automaton on this input string, as well as the original input string---that is, it reads the sequence
$$(q_0,a_1), (q_0a_1,a_2),\ldots, (q_0a_1\cdots a_{n-1},a_n)$$
as an input word, beginning in its initial state $p_0.$    This defines a composite action of words over the original input alphabet $A$ on pairs of states $(p,q).$  The wreath product is, essentially, the transition monoid of this action.

The idea behind the wreath product of two forest algebras is also to 
model sequential composition. The first algebra `runs' on an input forest, and then a second automaton runs on the same forest, but also gets to see to see the run
of the first automaton.  We will make this composition precise by defining the
sequential composition of two homomorphisms. Assume that
\begin{eqnarray*}
   \alpha : A^\Delta \to (G,W)
\end{eqnarray*}
is a forest algebra homomorphism. For a forest $t$ over $A$, let
$t^\alpha$ be the forest over $A \times G$ obtained from $t$ by
changing the label of each node $x$ from $a$ to the pair $(a,g)$,
where $g\in G$ is the value assigned by $\alpha$ to the subforest of~$x$. In other words,
 $t^{\alpha}$ is the forest $t[L_1,\ldots,L_k],$ where $G=\{g_1,\ldots,g_k\}$ and $L_i=\alpha^{-1}(g_i).$ 
The sequential composition, will use a second homomorphism that reads
the relabeling $t^\alpha$ and yields a value in a second forest algebra; that is, 
\begin{eqnarray*}
  \beta : (A \times G)^\Delta \to (H,V)\ .
\end{eqnarray*}
The sequential composition of $\alpha$ and $\beta$ is the function
$\alpha \otimes \beta : H_A \to G \times H$ defined by
\begin{eqnarray*}
  t\quad \mapsto \quad  (\alpha(t),\beta(t^{\alpha}))\ .
\end{eqnarray*}
The wreath product $(G,W) \circ (H, V)$ of forest algebras is defined 
 to capture this notion of sequential composition.
While it is hardly surprising that there is an algebraic construction that
models sequential composition for forests, just as there is such a construction for words, it is rather remarkable that the construction for forest algebras is identical to the one used for transformation monoids. (In fact, one
could even argue that the wreath product is better suited to forest
languages, since it works directly on the forest algebra, while for
word languages one goes from monoids to transformation monoids.)


We now present the definition of the wreath product of two forest algebras $(H_1,V_1)$ and $(H_2,V_2)$. This wreath product denoted by $(H_1,V_1)\circ (H_2,V_2)$. 

Note that forest algebras are transformation monoids,  for which the wreath product is a classical operation. We will apply the classical definition without changes in this setting, yielding some of the ingredients of a forest algebra, namely: 1) the carriers of the horizontal and vertical monoid; 2)  the action of the vertical monoid on  the horizontal monoid; and 3) the composition operation in the vertical monoid. The missing ingredient, not given by the classical definition,  will be 4) the monoid operation in the horizontal monoid. 
  
We describe below the classical definition of wreath product of
transformation monoids, as applied to the special case of forest
algebras.  The states that are transformed, which in the case of
forest algebras correspond to the horizontal monoid, are the cartesian
product $H_1\times H_2$ with component-wise addition. The transforming
monoid, which in the case of forest algebra corresponds to the
vertical monoid, is more sophisticated, its carrier set is $ V_1\times
V_2^{H_1}$.  The action of the transforming monoid $ V_1\times
V_2^{H_1}$ on the transformed states $H_1 \times H_2$ is defined by
$$(v_1,f)(h_1,h_2)=(v_1h_1,f(h_1)h_2).$$
The composition  operation in the transforming monoid $V_1 \times V_2^{H_1}$ is defined by
\begin{align*}
(v,f) \cdot (v',f') = (vv', f'')  \qquad f''(h)=(f(v' h)) \cdot (f'(h)).
\end{align*}
As is well known, this definition turns $V_1\times V_2^{H_1}$ into a
monoid of faithful transformations on $H_1\times H_2.$ 
 (Observe that
since we define forest algebras using a left action of $V$ on $H,$
rather than a right action, our definition of the wreath product is
the reverse of the customary one, with the first algebra in the composition written as the left-hand factor in the wreath product, rather than as the right-hand factor.) 

By applying the definition of wreath product for transformation monoids, we have obtained most of the ingredients of forest algebra. We are missing the monoid operation on the horizontal monoid; for this we use the usual direct product.

The  last missing condition is that for every element $h$ of the horizontal monoid, a forest algebra should have elements $1+h$ and $h+1$ of the vertical monoid that satisfy 
\begin{align*}
	(1+h)g=g+h \qquad \mbox{and}\qquad (h+1)g=h+g.
\end{align*}We show that these elements exist in the wreath product. Let then $h=(h_1,h_2) \in H_1 \times H_2$. Consider the map $f:H_1\to V_2$ that sends every element to
$(1+h_2).$
 Then for any $g = (g_1,g_2) \in H_1 \times H_2,$ we have
\begin{eqnarray*} (1+h_1,f)(g_1,g_2)&=&((1+h_1)g_1, (1+h_2)g_2)\\
&=&(g_1+h_1,g_2+h_2)\\
&=& (g_1,g_2)+ (h_1,h_2).
\end{eqnarray*}
Therefore, the element $(1+h_1,f)$ plays the role of $1+(h_1,h_2)$.
Similarly, we find $V_1\times V_2^{H_1}$ contains the transformation
$(h_1,h_2)+1.$ 

Thus {\it the wreath product of two forest algebras is
  a forest algebra}. 

Well-known properties of the wreath product of transformation semigroups and monoids carry over unchanged to this setting.  In particular, the wreath product is associative, so we can talk about the wreath product of any sequence of forest algebras, and about the {\it iterated wreath product} of an arbitrary number of copies of a single forest algebra.  Likewise, the direct product of two forest algebras embeds in their wreath product in either direction. As a consequence, if $L_1,L_2$ are recognized by forest algebras $(H_1,V_1), (H_2,V_2)$ respectively, then their union and intersection are both recognized by $(H_1,V_1)\circ (H_2,V_2).$

The connection with sequential composition is given by:

\begin{thm}\label{thm:sequential-composition}
  For every pair of forest algebra homomorphisms
  \begin{eqnarray*}
\alpha : A^\Delta \to (G,W)\qquad      \beta : (A \times G)^\Delta \to (H,V)\ .
  \end{eqnarray*}
  there is a homomorphism into the wreath product $(G, W) \circ (H, V)$
  that, when restricted to forests, is equal to the sequential
  composition 
\begin{align*}
	\alpha \otimes \beta : H_A \to G \times H.
      \end{align*} Conversely, every homomorphism from a free forest
      algebra $A^{\Delta}$ into the wreath product of two forest
      algebras is realized in this manner by the sequential
      composition of two homomorphisms.
\end{thm}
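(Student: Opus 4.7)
The plan is to use the universal property of the free forest algebra $A^\Delta$: any homomorphism from $A^\Delta$ into a forest algebra is uniquely determined by its values on the one-letter contexts $a\hole$, $a\in A$. For the forward direction, given $\alpha$ and $\beta$, I define $\gamma:A^\Delta\to (G,W)\circ(H,V)$ by setting, for each $a\in A$,
\[
\gamma(a\hole)\;=\;\bigl(\alpha(a\hole),\; g\mapsto \beta((a,g)\hole)\bigr)\in W\times V^G,
\]
and extend by the universal property. Then I verify by induction on the structure of $t\in H_A$ that $\gamma(t)=(\alpha(t),\beta(t^\alpha))$. The base case $t=0$ is immediate. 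For a tree $t=as$, the key observation is that since the root of $as$ has subforest $s$, its label in the relabeling is $(a,\alpha(s))$, so $(as)^\alpha=(a,\alpha(s))\cdot s^\alpha$. Unfolding the wreath product action
\[
\gamma(a\hole)\cdot\gamma(s)\;=\;(w_a,f_a)(\alpha(s),\beta(s^\alpha))\;=\;\bigl(w_a\alpha(s),\; f_a(\alpha(s))\cdot\beta(s^\alpha)\bigr),
\]
the first component becomes $\alpha(a\hole)\alpha(s)=\alpha(as)$, and the second becomes $\beta((a,\alpha(s))\hole)\cdot\beta(s^\alpha)=\beta((as)^\alpha)$, exactly as required. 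For a forest concatenation $s+t$, I use that the horizontal operation in the wreath product is componentwise and that $(s+t)^\alpha=s^\alpha+t^\alpha$.

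For the converse, given any $\gamma:A^\Delta\to (G,W)\circ(H,V)$, I first extract $\alpha$ by projection: the maps $G\times H\to G$ and $W\times V^G\to W$ forget the second coordinates, and one checks directly from the wreath product definitions that together they form a forest algebra homomorphism $\pi:(G,W)\circ(H,V)\to(G,W)$. Set $\alpha=\pi\circ\gamma$. Next, write $\gamma(a\hole)=(w_a,f_a)$ for each $a\in A$ (so necessarily $w_a=\alpha(a\hole)$), and define $\beta:(A\times G)^\Delta\to(H,V)$ by
\[
\beta((a,g)\hole)\;=\;f_a(g)\qquad\text{for every }(a,g)\in A\times G,
\]
extending by the universal property. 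The same inductive calculation as in the forward direction, run in reverse, shows $\gamma(t)=(\alpha(t),\beta(t^\alpha))$ for every $t\in H_A$, so $\gamma$ restricted to forests coincides with $\alpha\otimes\beta$.

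The main obstacle is purely bookkeeping in the inductive step for a tree $as$: one has to match the second coordinate of the wreath product action $(w_a,f_a)(\alpha(s),\beta(s^\alpha))$, namely $f_a(\alpha(s))\cdot\beta(s^\alpha)$, with $\beta((as)^\alpha)=\beta((a,\alpha(s))\hole)\cdot\beta(s^\alpha)$. This match is forced by—and in fact dictates—the correspondence $f_a(g)\leftrightarrow\beta((a,g)\hole)$ that is set up by the definition of $\gamma(a\hole)$ in one direction and read off from $\gamma(a\hole)$ in the other. The only other detail to check is the compatibility of the composition rule $(v,f)\cdot(v',f')=(vv',h\mapsto f(v'h)\cdot f'(h))$ with the intended interpretation, which is needed to extend $\gamma$ consistently from one-letter contexts to arbitrary contexts, and this is precisely why the wreath product's vertical monoid has the form $V_1\times V_2^{H_1}$ in the first place.
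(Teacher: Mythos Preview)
Your proof is correct and follows essentially the same approach as the paper: define $\gamma(a\hole)=(\alpha(a\hole),\,g\mapsto\beta((a,g)\hole))$, extend by the universal property, and verify $\gamma(t)=(\alpha(t),\beta(t^\alpha))$ by induction, with the key step being the tree case $t=as$ using $(as)^\alpha=(a,\alpha(s))\cdot s^\alpha$. For the converse the paper reads off $\alpha(a\hole)=w_a$ directly from $\gamma(a\hole)=(w_a,f_a)$ rather than invoking the projection homomorphism, and then concludes $\gamma=\alpha\otimes\beta$ by uniqueness of the extension rather than rerunning the induction; your version is a harmless variant.
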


\proof
Given homomorphisms $\alpha,$ $\beta$ as above, consider the map from $A$ into the vertical monoid of $(G,W)\circ (H,V)$ given by
$$a\mapsto (\alpha(a\hole),f_a),$$
where for all $a\in A,$ $g\in G,$
$$f_a(g)=\beta((a,g)\hole).$$
By the universal property of $A^{\Delta},$ this map extends to a unique homomorphism $\gamma$ with domain $A^{\Delta}.$  A straightforward induction on the construction of a forest $t\in H_A$ shows that $\gamma(t)=(\alpha(t),\beta(t^{\alpha}))$:  The crucial step is when $t=as$ for some $a\in A,$ $s\in H_A.$  We then have $t^{\alpha}=(a,\alpha(s))\cdot s^{\alpha},$ so that

\begin{eqnarray*}
\gamma(t) &=& \gamma(a)\cdot\gamma(s)\\
&=& (\alpha(a\hole),f_a)\cdot (\alpha(s),\beta(s^{\alpha}))\\
&=& (\alpha(a\hole)\cdot\alpha(s),\beta(f_a(\alpha(s)))\cdot\beta(s^{\alpha}))\\
&=& (\alpha(as),\beta((a,\alpha(s))\cdot \beta(s^{\alpha}))\\
&=& (\alpha(t),\beta(t^{\alpha})).
\end{eqnarray*}

Conversely, if $\gamma:A^{\Delta}\to(G,W)\circ(H,V)$ is a homomorphism, then for each $a\in A,$ $\gamma(a\hole))$ has the form $(w_a,f_a)$ for some $w_a\in W,$
$f_a:G\to V.$  We define homomorphisms 
\begin{eqnarray*}
\alpha : A^\Delta \to (G,W)\qquad      \beta : (A \times G)^\Delta \to (H,V)\ .
  \end{eqnarray*}
  by setting, for each $a\in A,$ $g\in G,$
  \begin{eqnarray*}
\alpha(a\hole)=w_a\qquad      \beta((a,g)\hole)=f_a(g)\ .
  \end{eqnarray*}
As we saw above, $\alpha\otimes\beta$ is the unique homomorphism mapping $a\hole$ to $(w_a,f_a),$ so $\gamma=\alpha\otimes\beta.$
\qed









\section{Wreath Product Characterizations of Language Classes}

When  $\algclass$ is a class of forest algebras, we write $\templ
\algclass$ for the class of languages recognized by iterated wreath
products of forest algebras from $\algclass$.  The following corollary
to Theorem~\ref{thm:sequential-composition} justifies this notation.

\begin{cor} Let $\langclass$ be the class of languages recognized by a
  class of forest algebras $\algclass$.  Then $\templ \langclass =
  \templ \algclass$.
\end{cor}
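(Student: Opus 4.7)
The plan is to prove both inclusions by induction, using Theorem~\ref{thm:sequential-composition} as the main bridge between homomorphisms into wreath products and the sequential composition $\alpha \otimes \beta$ of two homomorphisms.

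For the inclusion $\templ \langclass \subseteq \templ \algclass$, I would induct on the construction of the closure $\templ \langclass$. The base case is immediate: each language in $\langclass$ is recognized by a single algebra in $\algclass$, which we regard as a length-one iterated wreath product. Closure under boolean combinations is handled by the standard fact, noted in Section~\ref{sec:forest-algebras}, that the direct product $(H_1,V_1) \times (H_2,V_2)$ embeds in $(H_1,V_1) \circ (H_2,V_2)$, so finite unions and intersections of languages recognized by iterated wreath products are again recognized by one. For closure under language composition $L[L_1,\ldots,L_k]$, I would pick an iterated wreath product $(G,W)$ in $\templ \algclass$ together with a homomorphism $\alpha : A^\Delta \to (G,W)$ refining the partition $L_1,\ldots,L_k$, and an iterated wreath product $(H,V)$ in $\templ \algclass$ with a homomorphism $\beta' : (A \times B)^\Delta \to (H,V)$ recognizing $L$. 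I then define $\beta : (A \times G)^\Delta \to (H,V)$ by precomposing $\beta'$ with the alphabet map $(a,g) \mapsto (a,b_i)$, where $b_i$ indexes the block containing $\alpha^{-1}(g)$. By Theorem~\ref{thm:sequential-composition}, $\alpha \otimes \beta$ is (the restriction to forests of) a homomorphism into $(G,W) \circ (H,V)$, an iterated wreath product of members of $\algclass$, and this homomorphism recognizes $L[L_1,\ldots,L_k]$ since $\beta(t^\alpha) = \beta'(t[L_1,\ldots,L_k])$.

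For the reverse inclusion $\templ \algclass \subseteq \templ \langclass$, I would induct on the length $n$ of the iterated wreath product. When $n=1$ we have a single algebra from $\algclass$, whose recognized languages lie in $\langclass \subseteq \templ \langclass$. For the inductive step, consider a homomorphism $\gamma : A^\Delta \to (G,W) \circ (H,V)$ where $(G,W) \in \algclass$ and $(H,V)$ is a shorter iterated wreath product. By Theorem~\ref{thm:sequential-composition}, $\gamma = \alpha \otimes \beta$ with $\alpha : A^\Delta \to (G,W)$ and $\beta : (A \times G)^\Delta \to (H,V)$. Any language recognized by $\gamma$ is a finite boolean combination of sets of the form $\alpha^{-1}(g) \cap \{t \in H_A : \beta(t^\alpha) = h\}$. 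The first factor lies in $\langclass \subseteq \templ \langclass$. The second factor equals $M[L_1,\ldots,L_k]$, where $M = \beta^{-1}(h) \subseteq H_{A \times G}$ and the partition $L_i = \alpha^{-1}(g_i)$ is indexed by $G$. Each $L_i$ lies in $\langclass$, and $M$ lies in $\templ \langclass$ over the alphabet $A \times G$ by the inductive hypothesis, so the composition is in $\templ \langclass$ by definition, and hence so is the boolean combination.

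The only real obstacle is the bookkeeping with alphabets: carefully matching the abstract partition alphabet $B$ used in the definition of language composition against the horizontal monoid $G$ that naturally appears in the wreath product and in the relabeling $t^\alpha$. Once this identification is handled, both directions reduce to direct applications of Theorem~\ref{thm:sequential-composition} combined with the closure operations defining $\templ \langclass$ and $\templ \algclass$.
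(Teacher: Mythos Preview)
Your proposal is correct and is essentially the natural unfolding of the corollary that the paper leaves unproved (the paper simply states it as a corollary to Theorem~\ref{thm:sequential-composition} without further argument). Your two inductions, together with the use of the direct-product-embeds-in-wreath-product fact for boolean closure and the identification of $t^{\alpha}$ with $t[L_1,\ldots,L_k]$ when $B=G$ and $L_i=\alpha^{-1}(g_i)$, are exactly the steps one expects; the only minor point you gloss over in the forward direction is that recognizing all blocks $L_1,\ldots,L_k$ of the partition simultaneously requires first passing to a direct product of the recognizing algebras (which again embeds in an iterated wreath product), but this is routine.
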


We also say that $\algclass$ is an \emph{algebraic base} of the
language class $\templ \algclass$ (note that there may be several
algebraic bases, just as there may be several language bases). We will
now exhibit algebraic bases for the logics discussed in
Section~\ref{section.logic}. By the above corollary, all we need to do
is to provide, for each logic, a class of forest algebras that
captures the language base. We could, of course, simply say that an algebraic base consists of the syntactic forest algebras of the members of the language base, but we prefer more explicit algebraic descriptions.  These are given in the following theorem;
 the algebras used in the statement are described immediately afterwards, while the detailed proofs are not given until Section 7. 
\begin{figure}
  \centering
  \begin{tabular}{l|l}
Logic & Algebraic base\\
\hline
EF & $     \mathcal{U}_1$\\
CTL & $    \mathcal{U}_2$\\
$FO[\prec]$ &     aperiodic path algebras\\
CTL*  &  distributive aperiodic algebras\\
 PDL &   distributive algebras\\
 graded PDL & path algebras
\end{tabular}
\caption{Algebraic bases for temporal logics}
  \label{fig:algebraic-bases}
\end{figure}

\begin{thm}\label{thm.main}
The logics EF, CTL, $FO[\prec]$, CTL*, PDL and graded PDL have algebraic bases as depicted
in Figure~\ref{fig:algebraic-bases}.
\end{thm}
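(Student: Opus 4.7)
By the corollary stated immediately before the theorem, it suffices, for each of the six logics L, to show that the algebraic base of L from Figure~\ref{fig:algebraic-bases} recognizes (up to iterated wreath products and division) exactly the same class of forest languages as the language base of L from Figure~\ref{fig:bases}; the latter class coincides with L by Theorem~\ref{thm.composition}. For each L I plan to split the equality into two inclusions: (i) every language in the language base is recognized by some algebra in the algebraic base, and (ii) every algebra in the algebraic base divides an iterated wreath product of syntactic algebras of languages in the language base.

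The cases of EF and CTL are direct checks, since the algebraic base is a single explicit finite algebra, $\mathcal U_1$ or $\mathcal U_2$. I would verify that $\mathcal U_1$ is isomorphic to the syntactic algebra of every language of the form ``some node labeled $a$'' and that $\mathcal U_2$ is isomorphic to the syntactic algebra of every language ``some path in $B^{\ast} b$''. Both inclusions then follow because the base consists of a single algebra coinciding, up to isomorphism, with the syntactic algebras in question. For the remaining four logics the algebraic base is a structurally defined class. Inclusion (i) is straightforward from the definitions: the syntactic algebra of ``at least $k$ paths in $L$'' is, by construction, a path algebra, and is aperiodic precisely when the word language $L$ is first-order definable, by the classical correspondence between aperiodicity and first-order definability; the CTL* and PDL cases restrict to $k=1$, which is exactly what distinguishes distributive algebras among path algebras.

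The substantive content is in inclusion (ii), which will be carried out in Section 7. Given an arbitrary path or distributive algebra $(H,V)$, the plan is to invoke Theorems~\ref{thm:distributive-algebra} and~\ref{thm:path-algebra}, which describe such algebras structurally, and to use these descriptions to decompose $(H,V)$ as a sequential composition of homomorphisms, each of which tracks a single counting-path statistic. By Theorem~\ref{thm:sequential-composition}, this sequential composition is realized by a wreath product, and iterating peels $(H,V)$ apart one layer at a time. The aperiodic variants are handled by the same decomposition, using that aperiodicity is preserved and reflected through each path-language factor. The main obstacle I foresee is bookkeeping the alphabets: at each wreath-product layer the input alphabet is enlarged by the values of the previous factor, so one must certify that the next counting-path language still lies in the language base for the new alphabet. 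The fact that the language base is defined uniformly for every finite alphabet, as stressed in Section~\ref{section.composition}, is precisely what makes this bookkeeping go through.
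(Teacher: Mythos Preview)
Your high-level reduction is correct and matches the paper: by the corollary, one only has to show that the languages recognised by each algebraic base coincide, up to boolean combinations, with the corresponding language base from Figure~\ref{fig:bases}; the paper says exactly this just before the theorem. For EF and CTL this is indeed a short direct check, though note a small overstatement: $\mathcal U_2$ is not literally \emph{isomorphic} to the syntactic algebra of every language ``some path in $B^*b$'' (for $B=A$ this collapses to ``some node labeled $b$'', whose syntactic algebra is $\mathcal U_1$). What is needed, and what suffices, is only that $\mathcal U_2$ \emph{recognises} each such language and that, conversely, every language recognised by $\mathcal U_2$ is a boolean combination of them.

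Where your plan goes off track is the last paragraph. You propose to obtain inclusion~(ii) by decomposing an arbitrary path (or distributive) algebra as a sequential composition, i.e.\ a wreath product, ``peeling off one layer at a time''. This is neither needed nor what the paper does. Theorems~\ref{thm:distributive-algebra} and~\ref{thm:path-algebra} are not structural tools to be fed into a further decomposition; they \emph{are} inclusion~(ii). They assert that every language recognised by a distributive (resp.\ path) algebra is already a boolean combination of languages $\tle L$ (resp.\ $\tle^k L$). Once that is known, no wreath-product decomposition of $(H,V)$ is required at all: a boolean combination is recognised by a direct product of the syntactic algebras of the base languages, and direct products embed in wreath products. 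Your alphabet-bookkeeping worry is therefore addressing a problem that does not arise.

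The paper's actual proofs of Theorems~\ref{thm:distributive-algebra} and~\ref{thm:path-algebra} in Section~\ref{sec:path-algebras-path} make no use of wreath products. For distributive algebras the argument is short: distributivity and horizontal idempotence force $\alpha(t)$ to depend only on the \emph{set} of $\alpha$-values of root-to-node paths in $t$, and each such constraint is a formula $\tle L_g$ with $L_g$ recognised by $V$. For path algebras the argument is much more delicate: one shows, by induction along Green's $\mathcal R$-order on $V$ (on the size of $vV$), that every set $\{t : v\alpha(t)=h\}$ is a path language, via the machinery of negligible forests, guarded forests and guarded profiles (Lemmas~\ref{lemma:multiset}--\ref{lemma:profile-can-be-described}). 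This is where identities~\eqref{eq:path-ident} and~\eqref{eq:fall-apart} actually do their work, and it bears no resemblance to a sequential-composition argument. As written, your sketch is either circular (invoking the theorems it is meant to prove) or promises a genuinely new wreath-product decomposition of path algebras for which no mechanism is indicated.
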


We now proceed to describe the algebras mentioned in
Figure~\ref{fig:algebraic-bases}. The bases have been chosen so that
each base is either finite, or in the case it is an
infinite class of algebras, then it has an effective characterization,
i.e.~there is an algorithm that checks if the syntactic algebra
of a given forest language
belongs to the base. Furthermore, the infinite algebraic bases are
given by identities in the forest algebra, and therefore the algorithm
reduces to checking if the identities hold.

First, we recall that an {\it aperiodic} finite monoid $S$ is one
that contains no nontrivial groups. Equivalently, there exists $m>0$
such that $s^m=s^{m+1}$ for all $s\in S.$ When we say that a forest
algebra $(H,V)$ is aperiodic, we mean that the vertical monoid $V$ is
aperiodic (which implies that $H$ is aperiodic).

$\mathcal{U}_1$  is the forest algebra $(\{0,\infty\},\{1,0\}),$ with
$0\cdot \infty=0\cdot 0= \infty.$  Note that since we use additive
notation in the horizontal monoid, the additive absorbing element is
denoted $\infty,$ while the multiplicative absorbing element is 0. 
The vertical monoid of $\mathcal{U}_1$ is the unique smallest nontrivial
aperiodic monoid, denoted $U_1$ in the literature. 
Another description of  $\mathcal U_1$ is that it is  the syntactic forest algebra of the forest language ``some node with $a$'' over an alphabet $A \ni a$ with at least two letters. 
If follows that every language in
the language base of EF is recognized by
$\mathcal{U}_1,$ and every language  recognized by $\mathcal{U}_1$ is
a boolean combination of members of
the language base of EF, so this algebra forms an algebraic base for EF.

$\mathcal{U}_2$ is the forest algebra
$(\{0,\infty\},\{1,c_0,c_{\infty}\})$ with $c_h\cdot h'=h$ for all
horizontal elements $h,h'.$ If one reverses the action from left to
right and ignores the additive structure, $\mathcal{U}_2$ is the
aperiodic unit in the Krohn-Rhodes Theorem.  The underlying monoid of
this transformation semigroup is usually denoted $U_2.$ Every language
recognized by $\mathcal{U}_2$ is a boolean combination of members of the
language base of CTL, and all languages recognized by $\mathcal{U}_2$ are in CTL, so $\mathcal{U}_2$ forms an algebraic base for CTL.

So much for the singleton bases. We now describe the infinite bases.

A {\it distributive algebra} is a forest algebra $(H,V)$ such that $H$
is commutative and such that the action of $V$ on $H$
is distributive: $v(h_1+h_2)=vh_1+vh_2$ for all $v\in V,$ $h_1,h_2\in
H.$ The assertion that distributive algebras form algebraic bases for the given language classes is a consequence of the following theorem:

\begin{thm}\label{thm:distributive-algebra}
  A forest language is a boolean combination of languages $\tle L$
  (respectively, languages $\tle L$ with $L$ first-order definable) if
  and only if it is recognized by a distributive forest algebra
  (respectively, an aperiodic distributive forest algebra).
\end{thm}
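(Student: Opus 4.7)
The plan is to prove both implications, tracking aperiodicity via Schützenberger's theorem. For the $(\Leftarrow)$ direction, let $\alpha : A^\Delta \to (H,V)$ be a surjective homomorphism into a distributive algebra recognizing $L = \alpha^{-1}(X)$. The starting observation is the \emph{absorption identity} $v(0) + v(h) = v(h)$, obtained by specialising $v(h_1 + h_2) = vh_1 + vh_2$ to $h_1 = 0$; in particular every element of the form $v(0)$ is idempotent in $H$. Iterating distributivity on $\alpha(a(t_1 + \cdots + t_k)) = \sum_i \alpha(a)\alpha(t_i)$ and expanding subtrees by induction on height produces the decomposition
\begin{equation*}
\alpha(s) \;=\; \sum_{\pi} \alpha(\pi)(0),
\end{equation*}
where $\pi$ ranges over the root-to-node paths of $s$ and $\alpha(\pi) = \alpha(a_1)\cdots\alpha(a_n) \in V$ for $\pi = a_1 \cdots a_n$. (Internal-node contributions are absorbed by those of their leaf-descendants, so one may equivalently sum only over root-to-leaf paths.) Each summand is idempotent and $H$ is commutative, so $\alpha(s)$ depends only on the set $S(s) = \{\alpha(\pi)(0) : \pi \text{ a root-to-node path of } s\} \subseteq H$.

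For each $h \in H$, the word language $W_h = \{\pi \in A^+ : \alpha(\pi)(0) = h\}$ is regular (the preimage of $\{v \in V : v(0) = h\}$ under the homomorphism $a \mapsto \alpha(a) \in V$), and first-order definable whenever $V$ is aperiodic. Since $h \in S(s)$ precisely when $s \fmodels \tle W_h$, the condition $S(s) = S_0$ is expressible as a Boolean combination of the formulas $\tle W_h$, and $L$ is therefore the union, over subsets $S_0 \subseteq H$ with $\sum_{h \in S_0} h \in X$, of such Boolean combinations (with the empty forest handled separately as $\neg \tle A^+$).

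For the $(\Rightarrow)$ direction, a direct product of distributive (resp.\ aperiodic distributive) algebras is of the same type and recognizes every Boolean combination of languages recognized by the factors, so it suffices to exhibit, for each regular (resp.\ first-order definable) $L \subseteq A^+$, a distributive (aperiodic) algebra recognizing $\tle L$. Fix a DFA $(Q,A,\delta,q_0,F)$ for $L$ with transition monoid $N$, and let $H = \mathcal{P}(Q)$ with union (commutative and idempotent). Each context $v$ acts as
\begin{equation*}
X \;\mapsto\; A(v) \cup w_v^{-1}(X),
\end{equation*}
where $w_v \in A^*$ labels the path from the root of the hole's tree in $v$ down to the parent of the hole, and $A(v) \subseteq Q$ collects the starting states from which $v$ alone already has an accepting path; this map obviously preserves union. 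Letting $V$ be the faithful quotient of the monoid generated by these context actions yields a distributive forest algebra, and interpreting $\alpha(s) = \{q : \text{some root-to-node path of } s \text{ starting from } q \text{ ends in } F\}$ with accepting set $\{X \subseteq Q : q_0 \in X\}$ recognizes $\tle L$. The vertical monoid $V$ is a quotient of the semidirect product $\mathcal{P}(Q) \rtimes N$, hence aperiodic whenever $N$ is (direct computation on iterated powers $(A,w)^n$ suffices).

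The main obstacle is the $(\Leftarrow)$ direction, specifically the decomposition of $\alpha(s)$ over root-to-node paths and the subsequent use of absorption to collapse multiset data to set data, since the horizontal monoid of a distributive algebra is not in general idempotent. Once that collapse is in hand the encoding via the $\tle W_h$ is bookkeeping; the $(\Rightarrow)$ direction is essentially a standard subset construction whose distributivity is manifest from the formula for the context action.
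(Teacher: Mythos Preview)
Your argument is correct. For the $(\Leftarrow)$ direction you follow essentially the same route as the paper: decompose $\alpha(s)$ as a sum over root paths, collapse the multiset to a set via idempotency, and encode ``$h$ occurs among the path values'' by the formula $\tle W_h$. One remark: your closing worry that ``the horizontal monoid of a distributive algebra is not in general idempotent'' is misplaced. The paper observes that \emph{every} element of $H$ is idempotent, via
\[
h \;=\; (h+1)(0+0) \;=\; (h+1)0 + (h+1)0 \;=\; h+h,
\]
using the forest-algebra axiom $h=(h+1)0$. So your careful restriction to summands of the form $v(0)$, while correct, is unnecessary; full horizontal idempotency comes for free.

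For the $(\Rightarrow)$ direction you take a genuinely different path. The paper argues syntactically: the identities defining a distributive algebra (commutativity and $v(h_1+h_2)=vh_1+vh_2$) do not alter the set of root paths of a forest, so the syntactic algebra of any language $\tle L$ satisfies them, and Boolean combinations are absorbed by direct products. You instead build an explicit recognizer $(\mathcal P(Q),\,X\mapsto A(v)\cup w_v^{-1}X)$ from a DFA for $L$. Both are valid; the paper's syntactic route is shorter, while yours has the merit of exhibiting the algebra concretely and making the aperiodicity transfer (via the semidirect product $\mathcal P(Q)\rtimes N$) explicit rather than implicit in the syntactic congruence.
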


Let us define a {\it path language}  to be any boolean combination of members of the language base
of graded PDL, and an {\it fo path language} to be a boolean combination of members of the language 
base of $FO[\prec].$  We have the following analogue to Theorem~\ref{thm:distributive-algebra}.

\begin{thm}\label{thm:path-algebra}
A finite forest algebra $(H,V)$ recognizes  only path languages if and only if $H$ is aperiodic and commutative
and 
\begin{equation}
  \label{eq:path-ident}
    vg + vh = v(g+h) + v0
\end{equation}

\begin{equation}
  \label{eq:fall-apart}
  u(g+h)  = u (g + uh)
\end{equation}
hold for all $g,h\in H$ and $u,v\in V$ with $u^2=u.$  $(H,V)$ recognizes only
fo-path languages if and only if $H$ is aperiodic and commutative, $V$ is aperiodic,
and $(H,V)$ satisfies the two identities above.
\end{thm}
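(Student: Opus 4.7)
The proof splits into the two directions of the iff, plus the refinement to the fo-path setting. For the forward (only-if) direction, I would proceed by contrapositive and show that whenever one of the algebraic conditions fails in $(H,V)$, some language recognized by $(H,V)$ fails to be a path language. Commutativity of $H$ follows because every path-count language is horizontally commutative. Aperiodicity of $H$ follows from the standard observation that path counts are measured only up to a finite threshold, so sufficiently high horizontal powers must collapse. For identity~(\ref{eq:path-ident}), I would fix any regular $R\subseteq A^+$ and any context $w\in V_A$ and exhibit a label-path-preserving bijection between the nodes of $w(vg+vh)$ and those of $w(v(g+h)+v0)$, thereby showing that their path-counts in $R$ agree; since this must hold for every $w$, every $R$, and every threshold $k$ witnessed by $(H,V)$, the identity holds in $(H,V)$. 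A similar bijective argument, exploiting $u^2=u$, handles identity~(\ref{eq:fall-apart}).

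For the backward direction, fix a homomorphism $\alpha:A^\Delta\to(H,V)$ and an element $h\in H$; the task is to realize $\alpha^{-1}(h)$ as a boolean combination of path-count languages. For each word $p=a_1\cdots a_d\in A^+$ define the \emph{spine value} $\sigma(p)=\alpha(a_1\cdots a_d\hole)\cdot 0\in H$, and for a node $x$ write $\sigma(x)$ for $\sigma$ applied to the root-to-$x$ label sequence. An induction on $t$, using the iterated form $vh_1+\cdots+vh_n=v(h_1+\cdots+h_n)+(n{-}1)v0$ of~(\ref{eq:path-ident}), establishes the key identity
\begin{equation*}
\alpha(t)+\sum_{y\text{ internal}}k_y\,\sigma(y)=\sum_{x\in t}\sigma(x),
\end{equation*}
where $k_y$ denotes the number of children of $y$. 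The right-hand side is a weighted count of nodes by label-path; the correction on the left can be rewritten as $\sum_{z\text{ non-root}}\sigma(\mathrm{parent}(z))$, which is also a pure path count. By aperiodicity of $H$ every multiplicity can be thresholded at a finite bound $N$. Since $H$ need not be cancellative, the displayed equation alone does not recover $\alpha(t)$ from the right-hand side; this is where~(\ref{eq:fall-apart}) enters, absorbing idempotent corrections so that forests with identical thresholded path-count statistics produce equal values of $\alpha$. Packaging the thresholded path-counts as a boolean combination of ``at least $k$ paths in $L$'' predicates then yields the desired presentation of $\alpha^{-1}(h)$ as a path language.

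The fo-path variant follows by tracking aperiodicity: when $V$ is aperiodic, the languages $L_\sigma=\{p\in A^+:\sigma(p)=\sigma\}$ are recognized by quotients of $V$ and hence are aperiodic and first-order definable; conversely, a nontrivial group inside $V$ would produce a language recognized by $(H,V)$ that cannot be expressed by fo-path counting, by a Sch\"utzenberger-style argument. The principal obstacle of the whole proof is the correction term in the backward direction: identity~(\ref{eq:fall-apart}) must be applied recursively to show that the cascade of $v0$-corrections introduced at each internal node by~(\ref{eq:path-ident}) can be absorbed once thresholding is in place, turning the displayed algebraic equation into an actual recovery of $\alpha(t)$ from its path-count statistics.
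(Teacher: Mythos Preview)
Your forward direction is fine and matches the paper: the identities are designed so that any syntactic algebra of a path language satisfies them, and your bijective path-count arguments make that precise.

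The backward direction, however, has a genuine gap at exactly the point you flag. Your displayed identity
\[
\alpha(t)+\sum_{y\text{ internal}}k_y\,\sigma(y)=\sum_{x\in t}\sigma(x)
\]
is (up to bookkeeping) a correct unwinding of~(\ref{eq:path-ident}); indeed the paper proves the closely related fact that $\alpha(t)$ depends only on the \emph{multiset} of root-to-node paths. But passing from ``same multiset'' to ``same thresholded profile'' is precisely where the difficulty lies, and your sentence ``identity~(\ref{eq:fall-apart}) must be applied recursively to show that the cascade of $v0$-corrections \ldots\ can be absorbed'' is not an argument. Aperiodicity of $H$ gives $N\cdot h=(N{+}1)\cdot h$, but that does \emph{not} say that adding one more occurrence of a path whose $V$-type already appears $N$ times leaves $\alpha(t)$ unchanged: in the normal form a new path may attach as a leaf deep inside the tree, and the resulting change in $\alpha(t)$ is governed by a context acting on a sum, not by a horizontal power. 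The paper's example $a^*(a+a)$ is placed exactly to warn that (\ref{eq:path-ident}) alone, even with horizontal aperiodicity, does not yield path-profile testability; you must actually exploit~(\ref{eq:fall-apart}), and you have not shown how.

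The paper's route is substantially different from yours. It does not try to recover $\alpha(t)$ globally from path counts. Instead it proves, by induction on the size of the Green $\mathcal R$-class $vV$, that each set $\{t:v\alpha(t)=h\}$ is a path language. The base case shows that $v$ with minimal $vV$ is a constant context. For the step, the paper introduces \emph{preserving} path contexts (those $p$ with $v\alpha(p)\sim v$), \emph{negligible} forests (sums of $p0$ with $p$ preserving), and \emph{guarded} forests (sums $p_ia_it_i$ where $p_i$ is preserving but $p_ia_i$ is not). Identity~(\ref{eq:fall-apart}) is used in a targeted way, first to prove $v=v(1+u0)$ whenever $vu\sim v$ (so negligible forests can be discarded), and then to prove that $v(u\hole+h)\not\sim v$ whenever $vu\not\sim v$ (so a single ``guarded'' step drops the $\mathcal R$-class and the inductive hypothesis applies to each $t_i$ via a coarser $v{+}$-equivalence). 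The final description is a boolean combination of path-count formulas built from the word languages $K_{w,a}=\{a_1\cdots a_na:\alpha(a_1\cdots a_n\hole)=w\}$, which are recognized by $V$ and hence first-order when $V$ is aperiodic.

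In short: your plan stalls at the non-cancellativity obstacle you yourself identify, and the fix is not a local absorption trick but a structural induction on $\mathcal R$-classes in $V$ that uses~(\ref{eq:fall-apart}) twice in quite specific algebraic lemmas.
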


We define a {\it path algebra} to be a forest algebra $(H, V)$
satisfying identities \ref{eq:path-ident} and \ref{eq:fall-apart} with
$H$ aperiodic and commutative.  We will give the proofs of Theorems~\ref{thm:distributive-algebra} and \ref{thm:path-algebra} in Section~\ref{sec:path-algebras-path}. 

Because of the connection with logic, we will call divisors of the
six kinds of iterated wreath products described above EF-algebras,
CTL-algebras, CTL*-algebras, FO-algebras,  PDL-algebras, and graded PDL-algebras,
respectively.

Note that for EF and CTL, the algebraic base has one algebra, while
our other bases contain infinitely many algebras. This turns out to be
optimal, as stated below. 
\begin{thm}[Infinite base theorem]\label{thm.nofinitebase}
  None of the language classes CTL*, $FO[\prec],$ PDL, or graded PDL has a finite algebraic base.
\end{thm}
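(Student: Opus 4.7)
The plan is to argue by contradiction: suppose a finite algebraic base $\algclass=\{(H_1,V_1),\dots,(H_n,V_n)\}$ existed for one of the four language classes. Since $\templ\algclass$ is closed under division and iterated wreath product, the strategy is to identify, in each case, a structural invariant of forest algebras that is bounded across $\templ\algclass$ but unbounded across an explicit infinite family of syntactic algebras of languages in the class. Two such invariants handle the two cases below.

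\emph{Graded PDL and PDL.} For each prime $p$, I would consider the PDL-definable language $L_p\subseteq H_{\{a\}}$ consisting of those forests that contain some node at depth divisible by $p$; this is given by a single formula $\tle L$ with the regular path language $L=(a^p)^+$. A direct calculation shows that the cyclic group $\mathbb Z/p\mathbb Z$ embeds into the vertical monoid of the syntactic forest algebra of $L_p$, witnessed by the action of the depth-$p$ context $a^p\hole$. The key ingredient is then the forest-algebra analogue of the Krohn--Rhodes division property, which holds essentially by definition here: since the wreath product of forest algebras coincides with that of transformation monoids on the horizontal and vertical components, any finite group dividing the vertical monoid of an iterated wreath product must divide the vertical monoid of one of the factors. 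Only finitely many primes divide $|V_1|\cdots|V_n|$, so choosing $p$ larger than all of them yields $L_p\notin\templ\algclass$, a contradiction.

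\emph{FO[$\prec$].} The base is aperiodic, so the group-theoretic obstruction above is empty. In its place I would use the aperiodic index $e(H)=\min\{m:h^m=h^{m+1}\text{ for all }h\in H\}$ of the horizontal monoid. Since the horizontal monoid of $(H_1,V_1)\circ(H_2,V_2)$ is the direct product $H_1\times H_2$, we have $e(H_1\times H_2)=\max(e(H_1),e(H_2))$, and $e$ is non-increasing under division; hence $e(H)\le N:=\max_i e(H_i)$ for every syntactic forest algebra $(H,V)$ of a language in $\templ\algclass$. The FO-definable family $L_n^{FO}=\{s\in H_{\{a,b\}}:\text{some node has at least }n\text{ children labeled }a\}$ has horizontal monoid with aperiodic index exactly $n$ (routine analysis of the syntactic congruence: the element counting $a$-roots stabilizes precisely at $n$), so taking $n>N$ produces a contradiction.

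\emph{CTL* and the main obstacle.} The CTL* case is the delicate one and where the plan requires the most work. Because CTL* is restricted to $k=1$ and cannot count siblings, the family $L_n^{FO}$ above is not CTL*-expressible; moreover a short check shows that every CTL*-syntactic horizontal monoid is idempotent (since $s$ and $s+s$ agree on all CTL* formulas in every context), so the aperiodic-index invariant is vacuous. A different obstruction, compatible with idempotent horizontals and with the distributive action structure of the CTL* base, is required. My plan is to construct a CTL*-definable family from nested path formulas of growing first-order quantifier depth, invoking the strict Straubing--Th\'erien hierarchy on words, and to isolate an invariant of distributive aperiodic algebras that is preserved by wreath product yet grows across this family. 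Identifying the correct wreath-product-compatible invariant for CTL* --- one that avoids both the group-theoretic route of the PDL cases and the counting-horizontal route of the FO case --- is the principal hurdle.
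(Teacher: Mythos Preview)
Your PDL/graded PDL strategy---pick a prime $p$ outside the group divisors of the base and exhibit a PDL language whose syntactic vertical monoid contains $\mathbb Z/p\mathbb Z$---is exactly the paper's. But your specific witness is broken: over a one-letter alphabet, $\tle(a^p)^+$ defines precisely the forests of height $\ge p$, because any forest of height $\ge p$ already contains a root-path of length exactly $p$. That language has an aperiodic syntactic algebra and no $\mathbb Z/p\mathbb Z$ anywhere. You need a second letter to anchor the end of the path; the paper uses $\{a,b\}$ and the path language $\{a^m b: p\mid m\}$, and then the context $a\hole$ genuinely permutes $p$ distinct syntactic classes.

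Your $FO[\prec]$ argument via the horizontal aperiodic index is correct and different from the paper's route. It exploits the fact that the horizontal monoid of a wreath product is the direct product, so $e(H)$ is bounded across $\templ\algclass$, and the counting languages $L_n^{FO}$ overflow that bound. This is more elementary than what the paper does for $FO[\prec]$. The price is that it says nothing about CTL*, whereas the paper treats $FO[\prec]$ and CTL* with a single argument.

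The CTL* case is where you have a genuine gap. Your suggested quantifier-depth invariant is not spelled out, and I do not see how to make a depth-like invariant on the vertical side survive the wreath product, since the vertical monoid of $(G,W)\circ(H,V)$ is $W\times V^G$ rather than a direct product of $W$ and $V$. The paper abandons invariants and argues directly. After reducing to a single aperiodic base $(H,V)$ with vertical index $n$ (so $v^n=v^{n+1}$), it takes the CTL*-language
\[
L_n=\{\,s\in H_{\{a,b,c\}}:\text{some root-path of $s$ has label in }(a^n b)^*c\,\},
\]
and proves by induction on $k$ that there are $s_k\in L_n$, $t_k\notin L_n$ indistinguishable by every homomorphism into the $k$-fold wreath power of $(H,V)$. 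The base is $s_1=a^n bc$, $t_1=a^{n+1}bc$. For the step, with $(G,W)$ the $k$-fold power and $m$ its vertical index, set $p=a\hole+bt_k$ and $q=a\hole+bs_k$; these agree under the projection to $(G,W)$, and a short wreath-action calculation using $v^n=v^{n+1}$ and $w^m=w^{m+1}$ gives $\phi(p^n q^{m+1})=\phi(p^{n+1}q^{m+1})$ for every $\phi$ into $(G,W)\circ(H,V)$. One then takes $s_{k+1}=p^n q^{m+1}\cdot 0$ and $t_{k+1}=p^{n+1}q^{m+1}\cdot 0$. Each new wreath factor is defeated by one more round of pumping its vertical aperiodic index, with the previous indistinguishable pair hidden in sibling subtrees; this is the mechanism your plan is missing.
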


\proof
If a language class has an algebraic base consisting of a finite set of forest algebras
$$(H_1,V_1),\ldots,(H_k,V_k),$$
then it has a base containing just the single algebra
$$(H,V)=(H_1,V_1)\times\cdots\times (H_k,V_k).$$
This is because each of the $(H_i,V_i)$ divides $(H,V),$ and $(H,V)$ embeds into the wreath product of the $(H_i,V_i),$ in any order.  Consequently, iterated wreath products of the $(H_i,V_i)$ and iterated wreath products of $(H,V)$ have the same divisors, and so recognize the same languages.  

By these observations,  it suffices to show that none of the classes in the statement of the theorem has an algebraic base consisting of a single forest algebra $(H,V).$  We will give two different arguments for this, one applicable to the aperiodic classes CTL* and $FO[\prec],$ and the other for the nonaperiodic classes.

Suppose the language class $FO[\prec]$ is generated by a single
  algebra $(H,V).$ Since $(H,V)$ is required to recognize only
  languages in this class, $V$ is aperiodic, and thus there is an
  integer $n$ such that $v^n =v^{n+1}$ for all $v\in V.$ We will show
  that no iterated wreath product of copies of $(H,V)$ can recognize
  the language $L_n$ consisting of all forests over $A=\{a,b,c\}$ in
  which there is a path from the root with the label in $(a^nb)^*c.$
  Since $L_n$ is in $CTL^*\subseteq FO[\prec],$ this will give the
  desired conclusion also for $CTL^*$.

  We prove this by induction on the number of factors $k$ in the
  wreath product, showing that there are forests $s_k\in L_n$ and
  $t_k\notin L_n,$ such that $\phi(s_k)=\phi(t_k)$ is satisfied for
  every homomorphism $\phi$ from $A^{\Delta}$ into the $k$-fold wreath
  product of $(H,V)$.  For $k=1,$ we can simply take $s_1=a^nbc$ and
  $t_1=a^{n+1}bc.$ For the inductive step we suppose the claim holds
  for some $k\ge 1,$ and let $(G,W)$ denote the $k$-fold wreath
  product of the $(H,V).$ Consider a homomorphism $\phi$ from
  $A^{\Delta}$ into the $(k+1)$-fold wreath product $(G,W)\circ
  (H,V).$ Recalling the definition of the wreath product we have $\phi:
  A^\Delta \to (G\times H, W\times V^G)$. If we compose $\phi$ with
  the projection onto the left coordinate we obtain a homomorphism
  $\psi$ into $(G,W).$ Note that since aperiodicity is preserved under
  wreath products, there is an $m$ such that $w^m=w^{m+1}$ for all
  $w\in W.$

We first claim that if $p$ and $q$ are contexts in $V_A$ such that $\psi(p)=\psi(q),$ then
$$\phi(p^nq^{m+1})=\phi(p^{n+1}q^{m+1}).$$
To see this, first take $(g_0,h_0)$ in $G\times H.$  We have 
\begin{align*}
	\psi(q^{m})g_0=\psi(q^{m+1})g_0=\psi(p)\psi(q^m)g_0,
\end{align*}
so we have
$$\phi(q^{m+1})(g_0,h_0)=(g_1,h_1),$$ where
$\psi(p)g_1=g_1.$ Let us write $\phi(p)$ as $(\psi(p),f),$ where
$f:G\to V.$ We then have
$$\phi(p^n)(g_1,h_1)= (g_1,f(g_1)^nh_1)=(g_1,f(g_1)^{n+1}h_1)=\phi(p^{n+1})(g_1,h_1).$$ Since $g_0, h_0$ are arbitrary, this proves $\phi(p^nq^{m+1})=\phi(p^{n+1}q^{m+1}),$ as claimed. We now make particular choices for $p$ and $q,$ namely

$$p=a\hole + bt_k,\quad q=a\hole+bs_k.$$
Since $\psi(s_k)=\psi(t_k),$ we have $\psi(p)=\psi(q),$ and thus by
our claim above, $\phi(p^nq^{m+1})=\phi(p^{n+1}q^{m+1}).$ Set
$s_{k+1}=p^nq^{m+1}\cdot 0,$ and
  $t_{k+1}=p^{n+1}q^{m+1}\cdot 0.$ So
$\phi(s_{k+1})=\phi(t_{k+1}).$ For every path $w$ from the root in
$s_k,$ there is a path in $s_{k+1}$ with label $a^nbw.$ 
On the other
hand, for every path with a label $a^nbv$ from the root of $t_{k+1}$ we
have $v\in t_k$. Thus
$s_{k+1}\in L_n$ and $t_{k+1}\notin L_n,$ as claimed.

We now turn to the nonaperiodic case. Let $p$ be a prime that does not divide the order of any group in $(H,V),$ and let $L$ be the set of forests over $\{a,b\}$ in which there is a path from the root of the form $a^mb,$ where $p$ divides $m.$  We will show that $(H,V)$ cannot recognize $L.$  Since $L$ has the form $EK$ for a regular word language $K,$ $L$ is in PDL, so this will complete the proof.

It is easy to see that the vertical monoid of the syntactic forest algebra of $L$ contains a group of order $p$:  Let $0\le r <p,$ and let $H_r$ be the set of forests in which every path from the root has an initial segment of the form $a^jb,$  where $r=j\bmod p.$ Each $H_r$ is a class of the syntactic congruence, all $p$ of these classes are distinct, and the context $a\hole$ cyclically permutes them. On the other hand, the set of simple groups dividing a transformation monoid is preserved under wreath product, so no iterated wreath product of copies of $(H,V)$ can contain a group of order $p,$ and thus cannot recognize $L.$

\qed




\section{{\bf EF}}\label{section.ef}

The logic EF was one of the first logics over trees to have a
decidable characterization~\cite{bw}. The result has been since then
reproved several times with different
methods~\cite{Wu-IPL07,ES07}. Here we give a new proof based on wreath
product. Our argument is purely algebraic. It computes a decomposition
based on the ideal structure of the underlying forest algebra.

The following theorem is proved in ~\cite{bw}.
\begin{thm}\label{thm.efcharacterization}  A forest language  $L\subseteq H_A$ is defined by a forest formula of  EF if and only if (i) $H_L$ is idempotent and commutative, and (ii) for every $v\in V_L,$ $ h\in H_L,$ we have $vh+h=vh.$
\end{thm}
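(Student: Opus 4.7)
For the easy direction---EF-definability of $L$ implies (i) and (ii)---the plan is to check directly that $\mathcal{U}_1$ satisfies both conditions (the carrier $\{0,\infty\}$ is idempotent and commutative, and (ii) splits into $v=1$, which is idempotence, and $v=0$, where $vh=\infty$ absorbs $h$), to observe that both properties pass through wreath products (via the componentwise formulas for $+$ on $H_1\times H_2$ and for the action of $V_1\times V_2^{H_1}$), and to observe that both pass through division. Theorem~\ref{thm.main} for EF then gives (i) and (ii) for the syntactic algebra of any EF-definable language.

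For the converse direction, I would argue by induction, with the trivial forest algebra as the base case. The heart of the inductive step is the construction of a surjective forest algebra homomorphism $\phi\colon(H_L,V_L)\twoheadrightarrow\mathcal{U}_1$, defined by
\[
  \phi_H(0)=0,\qquad \phi_H(h)=\infty\ (h\ne 0),\qquad \phi_V(v)=1\ \text{iff } v\cdot 0=0.
\]
Identity (ii) is precisely what makes this a homomorphism: rewritten as $h\le vh$ in the semilattice order $h\le h'\iff h+h'=h'$ on $H_L$, it forces $H_L\setminus\{0\}$ to be closed under $+$ (since $h+g\ge h$) and $V_L$-invariant (since $vh\ge h$). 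With these closure properties plus (i), one routinely checks that $\phi_H$ is a semilattice homomorphism, $\phi_V$ is a monoid homomorphism, and the two are compatible with the action; surjectivity follows from $|H_L|\ge 2$ on the horizontal side and from the fact that $(1+h)\cdot 0=h\ne 0$ for any $h\ne 0$ on the vertical side.

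Given the quotient $\phi$, I would invoke a Kaloujnine--Krasner-style decomposition lemma for forest algebras to realise $(H_L,V_L)$ as a divisor of a wreath product $\mathcal{U}_1\circ(K,J)$, where the ``kernel'' algebra $(K,J)$ is built from the $\phi_H$-fibre $H_L\setminus\{0\}$ and the stabiliser submonoid $\{v\in V_L:v\cdot 0=0\}$ of $V_L$. Because the other $\phi_H$-fibre is a singleton, $(K,J)$ need only encode the structure of this one nontrivial fibre, so it can be made strictly smaller than $(H_L,V_L)$. One then checks that $(K,J)$ inherits properties (i) and (ii)---commutativity and idempotence transport directly, and (ii) is re-derived from (ii) on $(H_L,V_L)$ using the definition of the kernel action---after which the induction hypothesis makes $(K,J)$ an EF-algebra, hence $(H_L,V_L)$ one as well, and $L$ is EF-definable by Theorem~\ref{thm.main}.

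The main obstacle is the decomposition step above, where the hint ``a decomposition based on the ideal structure of the underlying forest algebra'' from the introduction really does its work: one has to produce $(K,J)$ concretely so that the wreath product still recognizes $L$, that the induction measure strictly decreases, and that identity (ii) is inherited. Conceptually the cleanest route I see is to view $\phi$ as peeling off the $V_L$-invariant ideal $H_L\setminus\{0\}$---collapsing it to a single element---and then to iterate this peeling through a descending chain of $V_L$-invariant ideals of the horizontal monoid, each step contributing one $\mathcal{U}_1$ factor to the overall wreath product and shrinking the horizontal carrier by one, until only the trivial forest algebra is left.
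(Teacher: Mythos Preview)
Your easy direction is correct and matches the paper exactly (the paper proves the algebraic reformulation, Theorem~\ref{thm.efdecomposition}, and deduces Theorem~\ref{thm.efcharacterization} via Theorem~\ref{thm.main}, just as you do).

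For the converse, the quotient $\phi:(H_L,V_L)\twoheadrightarrow\mathcal{U}_1$ you build is a genuine homomorphism, but the ``Kaloujnine--Krasner-style decomposition lemma'' you then invoke is the gap. Classical Kaloujnine--Krasner only embeds a monoid into the wreath product of a quotient with \emph{the monoid itself}, so it gives no decrease in the induction measure; you assert the kernel $(K,J)$ can be taken to have horizontal carrier the fibre $H_L\setminus\{0\}$, but that set is not a submonoid of $H_L$ (it lacks the identity $0$), and you supply no monoid structure, no $V$-action, and no check of (i)--(ii) for it. Concretely, your $\phi$ typically sends every nonempty forest to $\infty$, so in $\mathcal{U}_1\circ(K,J)$ the left factor only distinguishes leaves from internal nodes; there is no reason the right factor can then be made smaller than $(H_L,V_L)$. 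The paper's decomposition works at the opposite end of the semilattice: with the order $h_1\le h_2\iff h_1=h_2+h$ (so $\infty$ is the minimum and the identity reads $vh\le h$), it looks at the \emph{subminimal} elements just above $\infty$. If there are several, $(H,V)$ embeds into a direct product of strictly smaller algebras $(H_h,V_h)$, one for each subminimal $h$, with $H_h=\{\infty\}\cup\{g:h\in Vg\}$. If there is a unique subminimal $h$, then $G=H\setminus\{\infty\}$ is a genuine submonoid, one redefines the $V$-action on $G$ by redirecting anything that would fall into $\infty$ to $h$ instead, and one checks by hand that $(H,V)$ embeds in $(G,W)\circ\mathcal{U}_1$, with the $\mathcal{U}_1$ factor on the \emph{right} recording ``did the value reach $\infty$?''. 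This two-case argument is the forest-algebra analogue of Stiffler's theorem for $\mathcal{R}$-trivial monoids, not a Kaloujnine--Krasner construction.
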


Because this property can be effectively verified from the
multiplication tables of $H_L$ and $V_L,$ we have an effective characterization of EF. More specifically,  there is a decision procedure
for determining whether or not a forest language given, say, by an
automaton that recognizes it, is definable by a forest formula of EF.   This procedure can
also be adapted to testing whether a tree language is EF-definable
with tree semantics.

In light of Theorem~\ref{thm.main}, Theorem~\ref{thm.efcharacterization} can be formulated as follows. 

\begin{thm}\label{thm.efdecomposition} A forest algebra $(H,V)$ divides an iterated wreath product of copies of $\mathcal{U}_1$ if and only if $H$ is idempotent and commutative, and $vh+h=vh$ for all $h\in H,$ $v\in V.$
\end{thm}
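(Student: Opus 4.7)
I would split the proof into the easy ``if'' direction and the harder ``only if'' direction.

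For the ``if'' direction, I would first check directly that $\mathcal{U}_1$ satisfies both identities (a finite check on the four pairs $(v,h)\in\{1,0\}\times\{0,\infty\}$), and then verify that the class of forest algebras satisfying these identities is closed under division (the identities are equational) and under wreath product. For the wreath product, if $(H_1,V_1),(H_2,V_2)$ are in the class, the horizontal monoid of $(H_1,V_1)\circ(H_2,V_2)$ is the coordinate-wise direct product $H_1\times H_2$, so idempotency and commutativity transfer; and for $v=(v_1,f)$ and $h=(h_1,h_2)$ one computes
\[
vh+h=(v_1h_1+h_1,\,f(h_1)h_2+h_2)=(v_1h_1,\,f(h_1)h_2)=vh
\]
by applying the identity in each coordinate. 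Consequently every divisor of an iterated wreath product of $\mathcal{U}_1$ satisfies both identities.

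For the ``only if'' direction, assume $(H,V)$ satisfies both identities. Then $H$ is a finite semilattice with bottom $0$ under the order $h\le h'\iff h+h'=h'$, and the identity $vh+h=vh$ translates to $h\le vh$ for all $v\in V$, $h\in H$. I would prove by induction on $|H|$ that $(H,V)$ divides an iterated wreath product of $\mathcal{U}_1$. The base case $|H|=1$ is trivial, and $|H|=2$ forces $(H,V)\cong\mathcal{U}_1$: the only two actions of a vertical element on $\{0,\top\}$ compatible with $h\le vh$ are the identity and the constant $\top$, and faithfulness limits $V$ to these two.

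For the inductive step $|H|\ge 3$, introduce the canonical surjection $\pi\colon(H,V)\twoheadrightarrow\mathcal{U}_1$ defined by $\pi_H(h)=0\iff h=0$ and $\pi_V(v)=1\iff v\cdot 0=0$. This is a morphism: horizontally, $h+h'=0\iff h=h'=0$ because $0$ is the semilattice bottom; vertically, the set $V_1=\{v:v\cdot 0=0\}$ is a submonoid since $v_1v_2\cdot 0=v_1\cdot(v_2\cdot 0)=v_1\cdot 0=0$; and the action compatibility $\pi(vh)=\pi(v)\pi(h)$ follows from $vh\ge h$, which ensures nonzero elements remain nonzero under the action. The target of the inductive step is then to produce a strictly smaller forest algebra $(H',V')$ still in the class, together with a division $(H,V)\prec\mathcal{U}_1\circ(H',V')$; by the sequential composition theorem (Theorem~\ref{thm:sequential-composition}), this division reduces to exhibiting morphisms $\alpha_1=\pi\circ\alpha\colon A^\Delta\to\mathcal{U}_1$ and $\alpha_2\colon(A\times\{0,\infty\})^\Delta\to(H',V')$ whose sequential composition refines a given generating morphism $\alpha\colon A^\Delta\to(H,V)$. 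The construction of $(H',V')$ uses the ideal structure of $(H,V)$: once the $\pi$-bit at each node is known from $\alpha_1$, the remaining computation of $\alpha$ can be carried out by a ``reduced'' algebra that identifies elements of $H$ differing only in the zero-valued part of each subforest, which is now redundantly recorded by $\pi$.

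The main obstacle is to make precise the congruence defining $(H',V')$ so that it is strictly smaller, still satisfies the identities (automatic once the congruence is defined, since the identities are equational), and gives the desired division. The key use of the identity $vh+h=vh$ is that it guarantees the action of $V\setminus V_1$ on $H$ can be decomposed as ``moving out of the $\pi$-zero class'' (captured by $\mathcal{U}_1$) followed by a $V_1$-action (captured by $(H',V')$), so no information is lost in the decomposition. Granting this construction, induction gives $(H',V')\prec\mathcal{U}_1^{\circ n}$ for some $n$, and hence $(H,V)\prec\mathcal{U}_1\circ\mathcal{U}_1^{\circ n}=\mathcal{U}_1^{\circ(n+1)}$ by associativity of the wreath product, completing the induction.
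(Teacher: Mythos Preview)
Your easy direction is correct and is exactly the computation the paper gives. The base cases $|H|\le 2$ are also fine.

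The inductive step, however, is not a proof: you define a surjection $\pi:(H,V)\twoheadrightarrow\mathcal{U}_1$ (which is indeed a forest algebra morphism), but then you explicitly leave ``the main obstacle'' open, namely the construction of the smaller algebra $(H',V')$ and the division $(H,V)\prec\mathcal{U}_1\circ(H',V')$. Everything after ``Granting this construction'' is vacuous until that construction is supplied.

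There is also a structural problem with the direction you have chosen. In $\mathcal{U}_1\circ(H',V')$ the factor $\mathcal{U}_1$ runs first, so the only extra information fed to $(H',V')$ is, at each node, the bit ``is the subforest equal to $0$''. Under the generating morphism this just records which nodes are leaves, which does not obviously let you shrink the algebra. More concretely, the set $H\setminus\{0\}$ that your $\pi$ isolates is \emph{not} a submonoid of $H$ (it lacks the identity), so there is no evident smaller forest algebra living on it. By contrast the paper works at the other end of the order: the absorbing element $\infty=\sum_{h\in H}h$ is the unique $\leq$-minimal element, and $H\setminus\{\infty\}$ \emph{is} a submonoid. The paper therefore peels off $\mathcal{U}_1$ on the \emph{right}, not the left.

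The paper's inductive step looks at the subminimal elements (the minimal elements of $H\setminus\{\infty\}$) and splits into two cases. If there are several, then for each subminimal $h$ one forms $H_h=\{\infty\}\cup\{g:h\in Vg\}$, shows this is a proper sub--forest-algebra satisfying the identities, and embeds $(H,V)$ into the direct product $\prod_h(H_h,V_h)$; induction finishes. If there is a unique subminimal element $h$, then $G=H\setminus\{\infty\}$ is a submonoid, one defines a modified action $v\mathbin{**}g=vg$ if $vg\neq\infty$ and $v\mathbin{**}g=h$ otherwise, obtains a strictly smaller forest algebra $(G,W)$ in the class, and exhibits an explicit embedding $(H,V)\hookrightarrow(G,W)\circ\mathcal{U}_1$ by $g\mapsto(g,0)$ for $g\neq\infty$, $\infty\mapsto(h,\infty)$, and $v\mapsto(v,f_v)$ with $f_v(g)=0$ iff $vg=\infty$. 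This is the concrete content your argument is missing.
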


Note that Theorem~\ref{thm.efdecomposition} is   purely algebraic. It makes no mention of trees, forests, languages or logic.  This suggests that it might be proved reasoning solely from the structure of the forest algebra. 

Here we present such a proof. The easy direction is to show that every
divisor of an iterated wreath product of copies of $\mathcal{U}_1$ is
horizontally idempotent and commutative and satisfies the identity $vh+h=vh$.  Identities are always preserved under division, and obviously $\mathcal{U}_1$ itself satisfies the properties, so we just need to show that the properties are preserved under wreath product.  Let $(G,W)$ and $(H,V)$ be forest algebras satisfying the identity, with $G,H$ idempotent and commutative. The horizontal monoid of the wreath product is just $G\times H,$ which is idempotent and commutative. Let $h=(h_0,h_1)\in (G,W),$ $v=(v_0,f)\in W\times V^G$ be horizontal and vertical elements of the wreath product.  We have
\begin{eqnarray*}vh+h &=& (v_0,f)(h_0,h_1)+(h_0,h_1)\\&=&(v_0h_0+h_0,f(h_0)h_1+h_1)\\&=&(v_0h_0,f(h_0)h_1)\\&=&
(v_0,f)(h_0,h_1)\\&=&vh.
\end{eqnarray*}

For the converse, we suppose $(H,V)$ is horizontally idempotent and commutative and satisfies the identity. We prove by induction on $|H|$ that $(H,V)$ divides an iterated wreath product of copies of $\mathcal{U}_1.$ 
%

Since $H$ is idempotent and commutative, it is partially ordered by the relation $\leq$  defined by $h_1\leq h_2$ if and only if $h_1=h_2+h$ for some $h\in H.$  Transitivity and reflexivity of this relation are obvious.  Antisymmetry follows from the observation that if $h_1=h_2+h\leq h_2,$ then $h_1+h_2=h_2+h+h_2=h_2+h=h_1.$ Thus if we have both $h_1\leq h_2$ and $h_2\leq h_1,$ then $h_1=h_1+h_2=h_2.$  This is just the standard $\mathcal{J}$-ordering, one of the Green relations, on the monoid $H.$  Thus our identity $vh+h=vh$ implies $vh\leq h$ for all $v\in V,$ $h\in H.$  Conversely, if $vh\leq h,$ then there is some $h'\in H$ such that $vh=h+h',$ and thus $vh+h=h+h'+h=h+h'=vh.$  So we can replace the identity by the inequality $vh\leq h$ for all $v\in V,$ $h\in H.$

The sum of all the elements of $H$ is the (necessarily unique)
absorbing element, which, following our usual practice, we denote
$\infty.$ This is the unique $\leq$-minimal element, since obviously
$\infty + h =\infty$ for all $h\in H.$ If $|H|\leq 2,$ then $(H,V)$ is
either trivial, or isomorphic to $\mathcal{U}_1,$ so we can assume
$|H|>2.$ Thus there is at least one minimal element $h\not=0$ in
$H\setminus\set{\infty}$. We call such an element a {\it subminimal}
element. It has the property that for all
$v\in V,$ $vh=h$ or $vh=\infty.$

For
each subminimal $h,$ we define $H_h$ to be the {\it set}
$\{\infty\}\cup\{g:h\in Vg\}.$ Observe that  $H_h$ is a submonoid of $H,$ because if $v_1h_1=h$ and $v_2h_2=h,$ then 
\begin{align*}
	h=h+h=v_1h_1+v_2h_2+h_1+h_2=u(h_1+h_2) \qquad \mbox{where }u=v_1h_1+v_2h_2+1.
\end{align*}  For $v\in V$ and $g\in H_h$ we set $v*g=vg$ if $vg\in H_h,$ and otherwise set $v*g=\infty.$  It is straightforward to verify that 
for all $v_1,v_2\in V,$ $g\in H_h,$
$$v_1*g+g=v_1*g,$$
$$(v_1v_2)*g=v_1*(v_2*g),$$
so we get a well-defined action of $V$ on $H_h.$  We can collapse this action to make this faithful, and thus we get a well-defined forest algebra $(H_h,V_h),$ that satisfies the hypotheses of the theorem.  If there is more than one subminimal element, then each $H_h$ has strictly smaller cardinality than $H.$  Further, consider the map
$$\iota:(H,V)\to\prod (H_h,V_h),$$
where the direct product is over all subminimal elements $h,$ defined by setting the $h$-component of $\iota(g)$ to be $g$ if $g\in H_h,$ and $\infty$ otherwise.  It is straightforward to verify that $\iota$ is a homomorphism embedding $(H,V)$ into the direct product.  Since the direct product in turn embeds into the wreath product, we get the result by the inductive hypothesis.

It remains to consider the case where there is just one subminimal element $h.$  In this case $(H_h,V_h)$ is identical to $(H,V).$  The elements of $H$ different from $\infty$ form a submonoid $G$ of $H.$  We get a well-defined action ${**}$ of $V$ on $G$ by setting $v{**}g=vg$ if $g\in G,$ and $v{**}g=h$ otherwise.  Once again, the resulting forest algebra $(G,W)$ satisfies the necessary identities, so by the inductive hypothesis $(G,W)$ divides a wreath product of copies of $\mathcal{U}_1.$  We complete the proof by showing that $(H,V)$ embeds in the wreath product
$(G,W)\circ\mathcal{U}_1.$ We map $g\in H-\{\infty\}$ to $\alpha(g)=(g,0)$ and $\infty$ to
$\alpha(\infty)=(h,\infty).$ We further map $v\in V$ to $(v,f_v),$
where $f_v(g)=0$ if $vg=\infty,$ and $f_v(g)=1,$ otherwise. This is
obviously an injective homomorphism on the additive structure.  
To show that it is a homomorphism on the multiplicative structure, it suffices to show that for all $v\in V,$ $g\in H,$ $\alpha(vg)=(v,f_v)\alpha(g).$  There are several cases to consider.  First, if $g=\infty,$ then $vg=\infty,$ so we have
$$\alpha(vg)=(h,\infty)=(v{**}h,f_v(h)\infty)=(v,f_v)(h,\infty)=(v,f_v)\alpha(g).$$
If $g\neq\infty$ but $vg=\infty,$ we have
$$\alpha(vg)=(h,\infty)=(v{**}g,0\cdot 0)=(v{**}g,f_v(g)\cdot 0)=(v,f_v)(g,0)=(v,f_v)\alpha(g).$$
Finally, if neither $g$ nor $vg$ is $\infty,$ we have
$$\alpha(vg)=(vg,0)=(v{**}g,1\cdot 0)=(v,f_v)(g,0)=(v,f_v)\alpha(g).$$
\qed

Theorem~\ref{thm.efdecomposition} is the exact analogue for forest algebras of a Theorem of Stiffler~\cite{stiffler} showing that a finite monoid is $\mathcal{R}$-trivial if and only if it divides a wreath product of copies of $\mathcal{U}_1.$  Because of our conventions on the direction of the action, all our EF-algebras have $\mathcal{L}$-trivial, rather than $\mathcal{R}$-trivial vertical monoids.



\section{Path Algebras and Distributive Algebras}
\label{sec:path-algebras-path}
In this section we prove
Theorems~\ref{thm:distributive-algebra} and~\ref{thm:path-algebra}. 

\subsection{Distributive algebras}
We begin with Theorem~\ref{thm:distributive-algebra}, whose proof is
significantly simpler than the proof of
Theorem~\ref{thm:path-algebra}.  Recall that a {\it distributive
  algebra} is a forest algebra $(H,V)$ where $H$ is commutative and
which satisfies
\begin{eqnarray*}
  v(h_1+h_2)=vh_1+vh_2\ .
\end{eqnarray*}
Note that instead of the two requirements, horizontal commutativity
and the above identity, we could use a single identity
\begin{eqnarray*}
  v(h_1+h_2)=vh_2+vh_1\ ,
\end{eqnarray*}
which, when $v=1$, gives also horizontal
commutativity. Nevertheless, we prefer separating the two conditions.

Theorem~\ref{thm:distributive-algebra} says that a forest language is
a boolean combination of languages $\tle L$ (respectively, languages
$\tle L$ with $L$ first-order definable) if and only if it is
recognized by a distributive forest algebra (respectively, an
aperiodic distributive forest algebra). 

The ``only if''part is fairly straightforward, applying any of the
identities required from a distributive algebra does not change the
set of paths in a tree. For  the ``if'' part, only a little bit of
effort is needed. The idea is that by applying the conditions on
distributivity, one can show that if $\alpha$ is a homomorphism into a
distributive algebra, then a forest is equal to the sum of its paths. More precisely, if $t$ is a forest with nodes $x_1,\ldots,x_n$ then 
\begin{eqnarray}\label{eq:t-sum}
  \alpha(t)=\alpha(t_1 + \cdots + t_n)
\end{eqnarray}
where each tree $t_i$ is obtained by taking the node $x_i$  and
removing all nodes from $t$ that are not ancestors of $x_i$. This is depicted in the
picture below.
\begin{center}
  \includegraphics[scale=0.7]{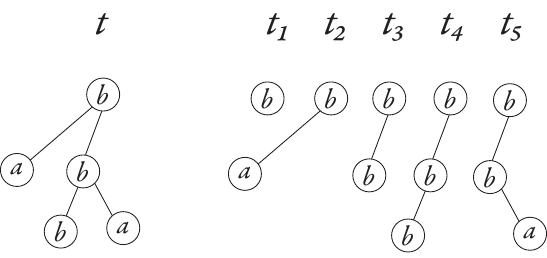}
\end{center}
Note that $H$, apart from being a commutative monoid, is also
idempotent, by
\begin{eqnarray*}
h=  (h+1)(0 + 0) = (h+1)0 + (h+1)0= h+h\ .
\end{eqnarray*}
In particular, the  value  of 
\begin{align*}
\alpha(t) = 	\alpha(t_1 + \cdots + t_n) = \alpha(t_1)+ \cdots + \alpha(t_n)
\end{align*}
does not depend on the order or multiplicity of types in the sequence $\alpha(t_1),\ldots,\alpha(t_n)$, and only on the set of values  $\set{\alpha(t_1),\ldots,\alpha(t_n)}$. For each $g \in H$, we define the word language
\begin{eqnarray*}
  L_g= \set{a_1 \cdots a_i \in A^* : \alpha(a_1 \cdots a_i 0)=g}.
\end{eqnarray*}
It is not difficult to see that a forest $t$ from~\eqref{eq:t-sum} satisfies the formula $\tle L_h$ if and only if one of the types $\alpha(t_1),\ldots,\alpha(t_n)$ is $g$.
Combining the observations above, we conclude  that for every $h \in H$
\begin{align*}
	\alpha(t)={g} \qquad \mbox{iff} \qquad  \bigvee_{\substack{G \subseteq H \\ h =  \sum_{g \in G}g}} \big( \bigwedge_{g \in G} \tle L_g \quad \land \quad  \bigwedge_{g \in H-G} \neg \tle L_g \big).
\end{align*}
Furthermore, if the monoid $V$ is aperiodic, each word  language $L_h$, as a word language recognized by  $V$, is
first-order definable by the McNaughton-Papert theorem.

\subsection{Path algebras}
We now proceed to prove Theorem~\ref{thm:path-algebra}.  We use the
term \emph{path algebra} for a forest algebra that satisfies the
conditions in the theorem, namely that the horizontal monoid is  aperiodic and
commutative, along with identities identities~\eqref{eq:path-ident}
and~\eqref{eq:fall-apart}, which we recall here
\begin{equation*}
	\tag{\ref{eq:path-ident}}  
    vg + vh = v(g+h) + v0 
\end{equation*}

\begin{equation*}
  \tag{\ref{eq:fall-apart}}
  u(g+h)  = u (g + uh) \qquad\text{for $u$ s.t. $u^2=u$}
\end{equation*}
Recall that a path language is a boolean combination of the
  languages 
from the base of graded PDL: languages of the form ``at least $k$
paths in $L$, for some regular $L$. A first-order definable path language is defined
similarly but $L$ is required to be definable in $FO_A[<]$.

 Theorem~\ref{thm:path-algebra} says that a
forest language is a path language (respectively, a first-order definable path language) if
and only if it is recognized by a path algebra (respectively, a
vertically aperiodic path algebra).

The ``only if'' part is simple; the identities are designed to hold in
any syntactic algebra of a path language (respectively, a first-order
definable path language). The rest of
this section is devoted to showing the ``if''
implication of the theorem.

For the moment, we concentrate on path algebras, as opposed to
aperiodic path algebras. After doing the proof, we show how it can be
modified to obtain the case for aperiodic path algebras.

We begin with the following lemma, which illustrates the significance
of identity~\eqref{eq:path-ident}. When speaking of paths, we refer to paths that begin in one of the roots of a forest, but that end in any node, not necessarily a leaf.

\begin{lem}\label{lemma:multiset}
  Forests with the same multisets of paths have the same image under
  any homomorphism into an algebra satisfying~\eqref{eq:path-ident} and
  horizontal commutativity.
\end{lem}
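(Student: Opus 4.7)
The plan is to prove by induction on the number of nodes in $t$ that $\alpha(t)$ is a function of the multiset $P(t)$ of paths of $t$; this immediately yields the lemma. The base case where $t$ is the empty forest is trivial.

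For the inductive step, I would first use horizontal commutativity to group the roots of $t$ by their label. For each letter $a$ appearing as a root label, let $n_a$ denote the number of $a$-rooted trees in $t$ and let $s_a^{(1)}, \ldots, s_a^{(n_a)}$ be their subforests. A straightforward sub-induction on $n_a$, using~\eqref{eq:path-ident} with $v = a\hole$, gives
\begin{equation*}
  \alpha\bigl(a s_a^{(1)} + \cdots + a s_a^{(n_a)}\bigr) = \alpha(a\, s_a) + (n_a - 1)\, \alpha(a 0),
\end{equation*}
where $s_a = s_a^{(1)} + \cdots + s_a^{(n_a)}$. Summing over all root labels $a$ of $t$ expresses $\alpha(t)$ as
\begin{equation*}
  \alpha(t) = \sum_{a} \bigl( \alpha(a\, s_a) + (n_a - 1)\, \alpha(a 0) \bigr),
\end{equation*}
a canonical ``merged'' decomposition in which no two sibling subtrees share the same root label.

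The final step is to observe that the data $(n_a, P(s_a))_a$ is determined by $P(t)$: $n_a$ is the multiplicity in $P(t)$ of the length-one path $a$, and $P(s_a)$ consists precisely of the paths in $P(t)$ that start with $a$ and have length at least two, each with its leading $a$ removed. Since each $s_a$ has strictly fewer nodes than $t$, the inductive hypothesis applies and ensures that $\alpha(s_a)$, and therefore the entire right-hand side of the display, depends on $t$ only through $P(t)$. I do not anticipate any serious obstacle: identity~\eqref{eq:path-ident} is precisely the algebraic rule needed to trade ``$k$ roots with label $a$'' for ``one root with label $a$ plus $k-1$ leaves with label $a$'', and this is exactly the ambiguity the multiset of paths cannot detect (for instance, $a(b) + a(c)$ and $a(b+c) + a$ share the path multiset $\{a, a, ab, ac\}$ but have different tree shapes).
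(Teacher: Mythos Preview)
Your proof is correct and follows essentially the same approach as the paper. The paper phrases the argument as rewriting forests into a normal form (merge all same-labelled roots via~\eqref{eq:path-ident}, leaving $n_a-1$ leaf copies of $a0$, then recurse on the merged subforest), while you phrase it as an induction showing $\alpha(t)$ depends only on $P(t)$; the key step---applying~\eqref{eq:path-ident} to collapse $as_a^{(1)}+\cdots+as_a^{(n_a)}$ into $a(s_a^{(1)}+\cdots+s_a^{(n_a)})+(n_a-1)\cdot a0$---is identical in both.
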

\proof
  We will show that two forests with the same multisets of paths are
  equal in the quotient of the free forest algebra under the
  identities~\eqref{eq:path-ident} and $h+g=g+h$. In other words, we
  show that if~\eqref{eq:path-ident} and $h+g=g+h$ are treated as
  rewriting rules on real forests (and not elements of the
    forest algebra), then each
  two forests with the same multisets of paths can be rewritten into
  each other.  The idea is to transform each forest into a normal
  form, such that the normal form is uniquely determined by the
  multiset of paths. The transformation into normal form works as
  follows. Let $t$ be a forest. Let $a_1,\ldots,a_n$ be the labels
  that appear in the roots of $t$. By applying horizontal
  commutativity, the forest $t$ is rewritten into a forest
  \begin{eqnarray*}
\sum_i a_it_{i,1} + a_it_{i,2} + \cdots + a_it_{i,n_i}.
  \end{eqnarray*}
  By applying the identity~\eqref{eq:path-ident} and horizontal commutativity, the above is
  rewritten into
  \begin{eqnarray*}
\sum_i a_i(t_{i,1} + t_{i,2} + \cdots + t_{i,n_i}) +  \overbrace{a_i0+ \cdots +a_i0}^{(n_i-1) \mbox{ times}}
  \end{eqnarray*}
  Finally, for each $i$, we rewrite the forest $t_{i,1} + t_{i,2} +
  \cdots + t_{i,n_i}$ into normal form. The result of this rewriting
  is a forest where every two different non-leaf nodes have a
  different sequence of labels on their paths. Such a forest, modulo
  commutativity, is uniquely determined by the multiset of paths.
\qed

The above lemma shows membership in a language $L$ recognized by an
algebra satisfying~\eqref{eq:path-ident} is uniquely determined by the
multiset of paths in a forest.  However, this on its own does not mean
that $L$ is a path language (otherwise, we would not need
identity~\eqref{eq:fall-apart}), as witnessed by the following
example.

\bigskip
\noindent{\bf Example. }
Consider the language $a^*(a+a)$. A forest belongs to this language if
and only if for some $n \in \Nat$, the multiset of paths is
\begin{eqnarray*}
  \epsilon, a, a^2, \ldots, a^n, a^{n+1}, a^{n+1}
\end{eqnarray*}
This language is not a path language. It does not even belong to a quite general class defined
below. Let $\alpha : A^* \to M$ be a morphism from words into a finite
monoid. The $\alpha$-profile of a forest $s$ is a vector in $\Nat^M$
that says, for each $m \in M$, how many times a path with value $m$
appears in the forest $s$. A language is called \emph{path-profile
  testable} if for some morphism $\alpha$, membership in the language
is uniquely determined by the $\alpha$-profile of a forest. It is not
difficult to see that the language $a^*(a+a)$ is not even path-profile
testable, since a path-profile testable language will confuse $a^n(a+
a^na)$ with $a^n a^n(a+a)$ for certain large values of $n$ (more
precisely for $n=\omega$, the notion of $\omega$ will be defined
below).
\bigskip

We now return to proving the ``if'' implication in
Theorem~\ref{thm:path-algebra}. The theorem follows immediately from the
proposition below, by taking $v$ to be the empty context.
\begin{prop}\label{prop:ind-prop}
  Let $(H,V)$ be a path algebra. For any $v \in V$ and $h \in
  H$, the forest language $\set{t : v\alpha(t)=h}$ is a path language.
\end{prop}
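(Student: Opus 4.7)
The approach is to reduce to the case of the empty context and then to show that the language $\alpha^{-1}(g)=\{t:\alpha(t)=g\}$ is a path language for every $g\in H$. The reduction is immediate: $\{t:v\alpha(t)=h\}$ equals the finite union $\bigcup_{g\in H,\ vg=h}\alpha^{-1}(g)$, and path languages are closed under boolean combinations. By Lemma~\ref{lemma:multiset}, the image $\alpha(t)$ depends only on the multiset of root-to-node label words of $t$. For each word $w\in A^+$ put $\pi(w)=\alpha(w\cdot 0)\in H$; each fibre $L_g=\pi^{-1}(g)\subseteq A^+$ is a regular word language, recognized by the action of $V$ on $0$, so the primitive ``at least $k$ paths in $L_g$'' is available as a building block for path languages.

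Fix an integer $\omega>0$ large enough that $\omega\cdot g=(\omega+1)\cdot g$ for every $g\in H$ (which exists by aperiodicity and commutativity of $H$) and that $u^\omega$ is idempotent in $V$ for every $u\in V$ (which exists by finiteness of $V$). Define the \emph{capped profile} of a forest $t$ to be the vector $(\min(n_g(t),\omega))_{g\in H}$, where $n_g(t)=|\{x\in t:\pi(w_x)=g\}|$ and $w_x$ is the root-to-$x$ label word. Each coordinate is a boolean combination of the primitives ``at least $k$ paths in $L_g$'' for $k\le\omega$, so any condition on the capped profile is already a path-language condition. The \textbf{key claim} is that $\alpha(t)$ depends only on the capped profile of $t$; granted this, $\alpha^{-1}(g)$ is a finite union of capped-profile conditions and hence a path language, completing the proof.

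The key claim is the main obstacle, and it is where identity~\eqref{eq:fall-apart} becomes essential. The plan is to show that whenever a forest $t$ has strictly more than $\omega$ paths with a common $\pi$-image, one such path can be removed without changing $\alpha(t)$. The argument will proceed in two parts. First, the normal form of Lemma~\ref{lemma:multiset} concentrates identically-labelled siblings at each node, after which horizontal aperiodicity absorbs any local duplicates in excess of $\omega$. Second, when the redundant paths only diverge deeper in the tree, one picks a maximal common context $c\in V$ through which $\omega+1$ of these paths travel, takes the idempotent power $u=c^\omega$ (which satisfies $u^2=u$ by the choice of $\omega$, the precondition for~\eqref{eq:fall-apart}), and applies~\eqref{eq:fall-apart} to lift a redundant suffix across $u$; a subsequent application of~\eqref{eq:path-ident} merges the sibling copy produced by the lift into the common subforest and removes the redundant path. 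Iterating this reduction along the tree structure reduces every count above $\omega$ down to exactly $\omega$, establishing the key claim. The combinatorial bookkeeping around shared prefixes, and ensuring that the chosen reduction is canonical and terminating, is the main technical difficulty; this subtlety is precisely what makes the path-algebra case substantially harder than the distributive-algebra case treated earlier in the section.
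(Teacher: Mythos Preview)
Your key claim is false, and with it the whole strategy collapses. Take the path language $L=$ ``at least $2$ paths equal to $ab$'' over $\{a,b\}$; its syntactic algebra is a path algebra with $H\cong\{0,1,2\}^2$ recording $(N,M)=(\#\text{paths }ab,\ \#\,b\text{-roots})$ capped at $2$. One computes $\pi(a)=(0,0)$, $\pi(ab)=(1,0)$, and also $\pi(abb)=\alpha(a)\alpha(b)\alpha(b)\cdot 0=\alpha(a)(0,1)=(1,0)$. Now compare
\[
s_1=ab+ab \quad\text{and}\quad s_2=abb+a.
\]
Both have path multiset of size four with $\pi$-values $\{(0,0),(0,0),(1,0),(1,0)\}$, hence identical capped profiles. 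Yet $\alpha(s_1)=(2,0)\neq(1,0)=\alpha(s_2)$, and indeed $s_1\in L$, $s_2\notin L$. The point is that $\pi(w)=\alpha(w\cdot 0)$ is strictly coarser than $w\mapsto\alpha(w\hole)\in V$: it collapses $ab$ and $abb$, losing exactly the information the language needs. Your reduction to $v=1$ is formally valid but unhelpful; in the paper's induction on $|vV|$, the case $v=1$ is the \emph{largest}, not a base case.

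The paper's proof is organized in the opposite direction. It inducts on $|vV|$: the base case is $vV$ minimal, where $v$ is shown to be a constant context. For the inductive step one introduces \emph{guarded contexts} (path contexts that preserve $v$ up to the last letter and then leave the $\Rr$-class of $v$), shows every forest is \emph{negligibly equivalent} to a guarded forest (Lemma~\ref{lemma:negligible}), and proves that on guarded forests $v\alpha(t)$ is determined by a \emph{guarded profile} which records, for each guard type $(w,a)$, both a threshold count \emph{and} the $v{+}$-equivalence class of the subforest beneath---the latter being a path-language invariant by the inductive hypothesis (Lemma~\ref{lemma:induction-assumption}). Identity~\eqref{eq:fall-apart} is used not to rewrite forests directly, but algebraically in Lemmas~\ref{lemma:stab} and~\ref{lemma:non-stab} to show that negligible forests can be absorbed and that guards genuinely leave the $\Rr$-class. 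Your proposed use of~\eqref{eq:fall-apart}---passing to $u=c^\omega$ for a common prefix $c$ and ``lifting a suffix''---has no purchase, since no context of $\alpha$-value $c^\omega$ need occur anywhere in the forest.
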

For the rest of this section we fix a path algebra $(H,V)$
  and a homomorphism $\a: A^\Delta\to (H,V)$. For a tree $t$ 
  we will often refer to $\a(t)$ as \emph{type of $t$}. Similarly for
  contexts. 

We will prove the proposition by induction on the size of the set $vV
\subseteq V$. We write $v \sim w$ if $vV=wV$ (this is Green's
$\Rr$-equivalence in the context monoid).

Apart from Green's relations, we will also use the $\omega$ power from
monoid theory. For a finite monoid  -- in this case, the monoid is $V$ --
we define $\omega$ to be a number such that $v^\omega$ is idempotent
for any $v \in V$. Such a number always exists in a finite monoid, it
suffices to take $\omega$ to be the factorial of the size of $V$.

The induction base is when the set  $vV $ is  minimal. 
\begin{lem}
	If $vV$ is minimal, then the context $v$ is constant, which means that $vg=vh$ holds for every $g,h \in H$
\end{lem}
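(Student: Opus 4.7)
The plan is to produce a constant context inside the minimal right ideal $vV$, and then use minimality to force $v$ itself to act as a constant on $H$. The candidate constant context will come from an absorbing element of the horizontal monoid.

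The first step is to secure an element $\infty \in H$ satisfying $\infty + g = \infty$ for every $g \in H$. The path algebra axioms guarantee that $H$ is finite, commutative and aperiodic, and every such monoid has a unique absorbing element: any two $\mathcal{J}$-minimal elements of a finite commutative monoid coincide (their sum is $\mathcal{J}$-below each), and aperiodicity collapses the minimal $\mathcal{J}$-class to a single idempotent, which then absorbs every element under $+$. Concretely, $\infty$ may be taken to be a sufficiently large multiple of $\sum_{h \in H} h$. With $\infty$ in hand, the forest algebra structure provides a context $w := (\infty + 1) \in V$, and $w \cdot g = \infty + g = \infty$ for every $g \in H$, so $w$ is already a constant context on $H$.

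Minimality of $vV$ now does the rest. The principal right ideal $vwV$ is contained in $vV$, so by minimality of $vV$ we get $vwV = vV$, whence $v \in vwV$ and therefore $v = vww'$ for some $w' \in V$. For every $g \in H$ this gives
\begin{equation*}
vg \;=\; vww'g \;=\; v(\infty + w'g) \;=\; v\infty,
\end{equation*}
a value that does not depend on $g$, so $vg = vh = v\infty$ for all $g, h \in H$ as required.

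The main obstacle in this proof is really just pinning down the absorbing element $\infty$; once that is available, the argument is a one-line application of minimality. It is worth noting that the two characteristic path algebra identities~\eqref{eq:path-ident} and~\eqref{eq:fall-apart} are not actually needed here, only that $H$ is commutative and aperiodic and that $V$ contains the contexts of the form $h+1$.
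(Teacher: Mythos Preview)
Your proof is correct and follows essentially the same strategy as the paper: both arguments exploit that the sum of all elements of $H$ is absorbing (the paper observes this implicitly via $h_1+\cdots+h_n+h=h_1+\cdots+h_n$, you name it $\infty$), build a constant context from it, and then use minimality of $vV$ to force $v$ itself to be constant. The only cosmetic difference is that the paper folds $v$ into its constant context $w=v(h_1+\cdots+h_n+1)$ and concludes $vV=\{w\}$ directly, whereas you keep $w=\infty+1$ separate and recover $v=vww'$; your closing observation that only commutativity and aperiodicity of $H$ (and the $h+1$ contexts) are used, not the special path-algebra identities, is also correct and applies equally to the paper's argument.
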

\proof

Let $h_1,\ldots,h_n$ be all elements of $H$. Consider the context 
\begin{align*}
	w = v(h_1+\ldots+h_n+1).
\end{align*}
We show that the context  $w$ is constant.   It suffices to show that $wh = w0$ for every $h \in H$. Because $h_1,\ldots,h_n$ contains all elements of $H$, then there must be some $i$ such that $h_i$ is $\omega \cdot h$, which is defined by
\begin{align*}
	 \omega \cdot h = \overbrace{h+ \cdots + h}^{\text{$\omega$ times}}.
\end{align*}
By aperiodicity of $H$, we know that $h_i+h = h_i$. By commutativity of $H$, we see that 
\begin{align*}
	h_1+ \cdots + h_n = h_1 + \cdots + h_n + h
\end{align*}
and therefore $wh=w0$. We have thus established that $w$ is constant. Since $w$ is constant, one can easily see that $wu=w$ holds for every $u \in V$, and therefore $wV = \set w$. Since $w \in vV$, it follows that $wV \subseteq vV$. By minimality of $vV$, we infer that $vV = \set w$. Because $V$ contains an identity context, it follows that $v \in vV$ and therefore $w=v$, and therefore $v$ is constant.
\qed
 When the context
$v$ is constant, the language in  the proposition is either empty, or
all forests, in either case it is a path language.

We now proceed to the induction step. We fix $v$ and $h$ as in the
statement of the proposition.

A \emph{path context} is a context of the form $a_1\cdots a_n\hole$.  A
\emph{preserving context} is a context $p$ whose type satisfies
$v\alpha(p)=v$, for our fixed $v$.  A forest is called \emph{negligible} if it is a
concatenation of trees of the form $p0$, where $p$ is a preserving context. 

\begin{lem}\label{lemma:stab}
  If $vu \sim v$ then $v=v(1 + u0)$. In particular, if $g$ is the
  type of a negligible forest, then $v=v(g+1)$.
\end{lem}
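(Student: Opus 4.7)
The plan uses Green's $\mathcal R$-equivalence witness together with the defining identities of a path algebra. From $vu \sim v$ we obtain $y \in V$ with $v = vuy$, hence $v = v(uy)^k$ for every $k \ge 1$. Taking $k = \omega$, the element $e := (uy)^\omega$ is idempotent and $v = ve$; by replacing $y$ with $y(uy)^{\omega-1}$ we may further assume $uy$ itself is idempotent, so~\eqref{eq:fall-apart} applies to $uy$ directly.

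Applying~\eqref{eq:fall-apart} with $u \to uy$ gives $uy(g+u0) = uy(g + uyu\cdot 0)$. Composing on the left with $v$ and using $v\cdot uy = v$ yields $v(g+u0) = v(g+uyu\cdot 0)$. Since $v\cdot uyu = (vuy)u = vu$, applying identity~\eqref{eq:path-ident} to $v$ then gives
\begin{align*}
  v(g+u0) + v0 \;=\; vg + v\cdot uyu\cdot 0 \;=\; vg + vu\cdot 0.
\end{align*}
The key remaining step is $vu\cdot 0 = v\cdot 0$, which I would extract from $v = vuy$ applied to $0$, combined with the preorder $a \le b \iff a+b=b$ on $H$. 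Identity~\eqref{eq:path-ident} (together with horizontal commutativity and aperiodicity) makes the action of each context monotone and gives $v0 \le vh$ for every $h$; hence the sandwich $v0 \le vu\cdot 0 \le vu\cdot y0 = v\cdot 0$ forces equality. Substituting $vu\cdot 0 = v\cdot 0$ into the displayed equation, the common summand $v0$ is absorbed on either side (since $v0 \le vg$ and $v0 \le v(g+u0)$ in the preorder), giving $v(g+u0) = vg$ for every $g \in H$. Faithfulness of the action then delivers $v = v(1+u0)$ in $V$.

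For the ``in particular'' clause, a negligible forest has the form $p_1 0 + \cdots + p_k 0$ with each $u_i := \alpha(p_i)$ satisfying $v u_i = v$, so trivially $v u_i \sim v$; its type is $g = u_1 0 + \cdots + u_k 0$. A straightforward induction on $k$, at each step peeling off one summand and invoking the first part of the lemma with $u = u_i$, yields $v(h+g) = vh$ for every $h \in H$, so $v = v(g+1)$.

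The main obstacle is the cancellation step in the first part: upgrading the weakened identity $v(g+u0) + v0 = vg + v0$ to the genuine equality $v(g+u0) = vg$ rests on the preorder structure on $H$ supplied by~\eqref{eq:path-ident}, on the monotonicity of the context action, and on the role of $v\cdot 0$ as a minimum of the image $vH$ under this preorder. These ingredients are also what allow $vu\cdot 0 = v\cdot 0$ to be drawn from the purely $\mathcal R$-theoretic hypothesis $vu \sim v$.
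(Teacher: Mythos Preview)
Your argument has a genuine gap in the preorder step, and the same gap recurs in the final cancellation.

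The relation $a \le b \iff a+b=b$ that you invoke is a partial order only when $H$ is idempotent: reflexivity already requires $a+a=a$. In a path algebra $H$ is merely aperiodic and commutative, not idempotent. A small concrete witness: take $A=\{a\}$ and the language $\tle^2 A^+$ (``at least two nodes''). Its horizontal monoid is $\{0,1,2\}$ with addition truncated at $2$; the letter $a$ acts by $h\mapsto\min(h+1,2)$. Here $a\cdot 0=1$ and $1+1=2\neq 1$, so your key claim ``$v0 \le vh$ for every $h$'' (that is, $v0+vh=vh$) fails even for $h=0$. This simultaneously breaks the sandwich $v0 \le vu\cdot 0 \le vuy\cdot 0 = v0$ and the absorption ``$v0$ is a common summand that can be dropped from both sides of $v(g+u0)+v0 = vg+v0$''. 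In fact, the displayed identity $v(g+u0)+v0 = vg + vu\cdot 0$ is nothing more than~\eqref{eq:path-ident} with $h=u0$; your earlier manipulation via~\eqref{eq:fall-apart} is not needed to reach it, and neither~\eqref{eq:path-ident} nor~\eqref{eq:fall-apart} lets you cancel a non-idempotent summand.

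The paper's proof circumvents exactly this issue by never trying to drop a single copy of $v0$ or $u0$. Instead it passes to $\omega$-fold sums, which \emph{are} idempotent in an aperiodic $H$: from $v=v(uw)$ with $uw$ idempotent one first gets $v=v(1+\omega\cdot uw0)$ using~\eqref{eq:fall-apart}, and then shows $\omega\cdot uw0 = \omega\cdot uw0 + u0$ via~\eqref{eq:path-ident}, so that the extra $u0$ can be slipped in for free. The whole point is that absorbing one more summand into an $\omega$-fold sum is legitimate, whereas cancelling a single summand is not.

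Your ``in particular'' paragraph (the induction over the summands of a negligible forest) is fine once the first assertion is established.
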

\proof
 In the proof, we will use the identity
\begin{equation}
  \label{eq:add-side}
  v^\omega=v^\omega(1 + v^\omega0)\ .
\end{equation}
Note that by  iterating the above $\omega$ times, we get
\begin{equation}
  \label{eq:add-side-omega}
  v^\omega=v^\omega(1 + \omega \cdot v^\omega0)\ .
\end{equation}
In the above, $\omega \cdot v^\omega 0$, or more generally $n \cdot h$ for any number $n$ and forest type $h$, denotes the $n$-fold sum $h + \cdots + h$.
The proof of~\eqref{eq:add-side} is by applying the
identity~\eqref{eq:fall-apart} from the definition of path algebras:
  \begin{eqnarray*}
    v^\omega g=v^\omega(g+0)=v^\omega(g+v^\omega 0)\ .
  \end{eqnarray*}

  We now proceed to prove the lemma.  If $vu \sim v$ then $vuw=v$ for
  some $w$. We can assume that $uw$ is idempotent, by replacing $uw$
  with $(uw)^\omega$.  By the identity~\eqref{eq:add-side-omega}, we
  get
\begin{eqnarray*}
  v = vuw=v(uw)(uw+\omega uw0) = v(uw + \omega uw0)\ .
\end{eqnarray*}
Let $x = (uw + \omega uw0)$. We know that $x=x+ \omega uw0$, and
therefore also $x^\omega = x^\omega + \omega uw0$. 
\begin{eqnarray*}
  v = vx^\omega = vx^\omega (   x^\omega + \omega uw0) \ .
\end{eqnarray*}
By applying~\eqref{eq:fall-apart} the above becomes
\begin{eqnarray*}
vx^\omega(1
  + \omega uw0) = v( 1 + \omega uw0).
\end{eqnarray*}
If we can show that $\omega uw0 = \omega uw0 + u0$, then we would be
done, by
\begin{eqnarray*}
  v = v(1 + \omega uw0) = v(1 + \omega uw0 + u0) = v(1 +
  u0).
\end{eqnarray*}
It remains to show  $\omega uw0 = \omega uw0 + u0$:
\begin{eqnarray*}
  \omega uw0 = \omega uw0 + \omega uw0 \stackrel{\eqref{eq:path-ident}}= \omega u(w0 + w0) + \omega u0
\end{eqnarray*}
and the last expression is clearly invariant under adding $u0$.
\qed

A \emph{guarded context} is a path context $pa\hole$ where the prefix
$p\hole$ is preserving, but the whole context $pa\hole$ is not.  A
forest is in \emph{guarded form} if it is a concatenation of trees of
the form $pat$, where $pa\hole$ is a guarded context.  The following
lemma shows that, up to negligible forests, each forest has the same
multiset of types as some guarded context.

We say two forests $t,t'$ are \emph{negligibly equivalent} if for some
negligible forests $s$ and $s'$, the forests $t+s$ and
$\impadded{t'}+s'$ have the same multiset of paths. This is indeed
an equivalence relation (it is transitive since a concatenation of
negligible forests is also negligible).
\begin{lem}\label{lemma:negligible}
  Each forest is negligibly equivalent to a guarded forest.
\end{lem}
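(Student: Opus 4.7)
I proceed in two steps.  I prove by induction on the number of nodes a strengthened statement: for every preserving path context $p\hole$ and every tree $t$, the forest $pt$ is negligibly equivalent to a guarded forest.  Applying this with the trivially preserving context $p = \hole$ shows that every tree is negligibly equivalent to a guarded forest, from which the lemma follows because negligible equivalence is closed under forest concatenation.

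Write $t = bs$ with $b \in A$ and $s \in H_A$, and split on whether $pb\hole$ is preserving.  If $pb\hole$ is \emph{not} preserving, then $pb\hole$ is by definition a guarded context and $pt = (pb)s$ is itself a guarded tree.  If $pb\hole$ is preserving and $s = 0$, then $pt = (pb)0$ is a negligible tree, and hence negligibly equivalent to the empty guarded forest.

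In the remaining case, $pb\hole$ is preserving and $s = t_1 + \cdots + t_k$ with $k \ge 1$.  Iterating identity~\eqref{eq:path-ident} with $v = pb$ applied to the summands of $s$ yields
\begin{align*}
  pt + (k-1)(pb)0\ =\ \sum_{j=1}^{k} (pb)\, t_j,
\end{align*}
and a direct path count verifies that these two forests also have the same multiset of paths (both carry $k$ copies of each prefix of $pb\hole$, together with the paths of each $t_j$ prefixed by $pb$).  Since $pb\hole$ is preserving, $(k-1)(pb)0$ is a negligible forest, so $pt$ is negligibly equivalent to $\sum_{j} (pb)\,t_j$.  Each summand $(pb)\,t_j$ is a tree with strictly fewer nodes than $pt$, prefixed by the preserving context $pb\hole$, so by the inductive hypothesis it is negligibly equivalent to a guarded forest $g_j$.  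Transitivity of negligible equivalence then gives $pt$ negligibly equivalent to the guarded forest $\sum_{j} g_j$.

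The main subtlety is setting up the induction correctly: the statement must be strengthened to carry an arbitrary preserving prefix $p\hole$ through the recursion, and the negligible correction has to be built from copies of $(pb)0$ rather than of $p0$, because the paths to be supplemented are precisely the prefixes of $pb\hole$, and intermediate prefixes of a preserving context are in general not themselves preserving.
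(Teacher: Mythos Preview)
Your inductive argument is essentially sound, but the induction variable is set up incorrectly. You appear to induct on the number of nodes of $pt$ (you write ``$(pb)\,t_j$ is a tree with strictly fewer nodes than $pt$''), but when $k=1$ this is false: if $s=t_1$ then $(pb)\,t_1 = pt$ and no progress is made. The fix is immediate: induct on the number of nodes of $t$ rather than of $pt$. In the recursive call you apply the strengthened statement to the pair $(pb,\,t_j)$, and since $t = b(t_1+\cdots+t_k)$ we always have $|t_j| < |t|$, regardless of $k$. With this correction the argument goes through; the path-count verification and the use of $(k-1)(pb)0$ as the negligible correction are both correct.

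Your approach is genuinely different from the paper's. The paper gives a single global construction: it identifies the antichain $X$ of nodes where the root-to-node path first becomes non-preserving, and splits the forest there in one shot, observing that
\[
  t + \sum_{x\in X} p_x 0 \quad\text{and}\quad t' + \sum_{x\in X} p_x\,t|_x
\]
have the same multiset of paths, with the left correction negligible and the right sum guarded. Your version reaches the same frontier by peeling off one root letter at a time and branching over the children, accumulating negligible corrections along the way. The paper's argument is shorter and more conceptual; yours is more elementary and makes the bookkeeping of the negligible corrections completely explicit, at the cost of having to manage the induction carefully (as above).
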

\proof Let $t$ be a forest.  For each node $x$ in $t$, let $q_x$ be
the path context $a_1 \cdots a_m \hole$ obtained by reading the path
that leads to $x$ inside $t$, including the node $x$ (which has the
last label $a_m$). Let $X$ be the set of nodes $x$ for which the
context $q_x$ is a guarded context, in particular, $q_x =p_x a_x
\hole$, with $p_x$ a preserving path context and $a_x \in A$. Note
that the set $X$ is an antichain: a node $x \in X$ is
chosen as the first time when the path leading to $x$ stops preserving
$v$.  For each $x \in X$, let $t|_x$ be the subtree of the node $x$,
the subtree includes $x$. Let $t'$ be the
forest obtained from $t$ by removing all subtrees $t|_x$, for $x \in
X$. The forests
  \begin{eqnarray*}
    t + \sum_{x \in X} p_x0  \qquad \mbox{and} \qquad t'  + \sum_{x \in X} p_x t|_x
  \end{eqnarray*}
  clearly have the same multisets of paths. Since $\sum_{x \in X}
  p_x0$ is a negligible forest, and $\sum_{x \in X} p_x t|_x$ is a
  guarded forest, it remains to prove that $t'$ is negligibly
  equivalent to the empty forest. But this follows since all paths
  inside $t'$ correspond to preserving contexts, by construction of
  $t'$
\qed

A path language $L$ is called  \emph{guarded} if it is invariant
under concatenation with negligible forests, i.e.
\begin{eqnarray*}
  t + s  \in L  \qquad \mbox{iff} \qquad t  \in L
\end{eqnarray*}
holds for any negligible forest $s$. 

\begin{lem}\label{lemma:guarded-lemma}
  For any $h \in H$ there is a guarded path language $L_h$ such that
  for any guarded forest $t$,
  \begin{equation}
    \label{eq:guarded-formula}
        t \in L_h \qquad\mbox{iff}\qquad v\alpha(t)=vh\ .
  \end{equation}
\end{lem}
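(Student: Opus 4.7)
My plan is to define $L_h$ as an explicit boolean combination of path-counting predicates extracted from an algebraic analysis of $v\alpha(t)$ on guarded forests. First, I would decompose the target quantity: writing a guarded forest as $t=\sum_j p_j a_j t'_j$ with each $p_j\hole$ preserving, and using $v\alpha(p_j)=v$ together with identity~\eqref{eq:path-ident} applied iteratively, one obtains
\[
  \sum_j v_{a_j}\alpha(t'_j)\;=\;v\alpha(t)+(n-1)\,v0,
  \qquad v_a:=v\alpha(a\hole),
\]
where $n$ is the number of root trees. Grouping by guard letter and reapplying~\eqref{eq:path-ident} shows that $v\alpha(t)$ is determined, modulo the $H$-aperiodicity threshold $\omega$, by the data $(n_a,\ v_a S_a(t))_{a\in A}$, where $n_a$ counts the root trees with guard letter $a$ and $S_a(t):=\alpha\bigl(\sum_{j:a_j=a}t'_j\bigr)$ is the $\alpha$-type of the combined subforest sitting below the $a$-guards.

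Next, I would translate both ingredients into path counts on $t$. The count $n_a$ equals the number of paths in $t$ whose label lies in the regular word language
\[
W_a\;=\;\{w'a\in A^+ : v\alpha(w'\hole)=v,\ v\alpha(w'a\hole)\ne v\},
\]
so conditions on $n_a\bmod\omega$ are directly boolean combinations of path-counting predicates. For $v_a S_a(t)$, since $v_a\ne v$, in the favorable subcase $|v_aV|<|vV|$ I invoke the inductive hypothesis of Proposition~\ref{prop:ind-prop} at $v_a$: for each $x\in H$, the language $K_{a,x}:=\{s:v_a\alpha(s)=x\}$ is a path language. Paths in the combined subforest $\sum_{j:a_j=a}t'_j$ correspond bijectively to paths in $t$ whose label factors as $w'a\cdot\ell$ with $w'a\in W_a$, so any path-counting predicate on the combined subforest using regular language $L\subseteq A^+$ pulls back to the path-counting predicate on $t$ using $W_a\cdot L$; thus ``$v_a S_a(t)=x$'' becomes a boolean combination of path-counting predicates on $t$. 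Assembling the finitely many ingredients produces the desired path language $L_h$, and its guardedness is automatic since, by Lemma~\ref{lemma:stab}, $v\alpha(t)$ is invariant under adding negligible forests.

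The main obstacle I expect is the subcase where some guard letter $a$ satisfies $v_a\sim v$ with $v_a\ne v$ (i.e.\ $|v_a V|=|vV|$), where the primary induction cannot be applied directly. I would handle this via Lemma~\ref{lemma:stab}: the identity $v=v(1+\alpha(a\hole)0)$ lets one absorb tree types of the form $\alpha(a\hole)\cdot 0$ at $v$, and by iterating with the $\omega$-power in the finite monoid $V$, any sequence of $\mathcal R$-stable guard letters can be extended until some subsequent letter escapes the $\mathcal R$-class of $v$, at which point $|v_w V|<|vV|$ for the resulting context word $w$ and the primary induction applies to the tail. After this refinement the same path-counting construction yields $L_h$.
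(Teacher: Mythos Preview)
There is a genuine gap. Your displayed identity $\sum_j v_{a_j}\alpha(t'_j)=v\alpha(t)+(n-1)\,v0$ is correct, and after grouping by $a$ you indeed obtain that $v\alpha(t)+(n-1)\,v0$ is determined by the data $(n_a,\,v_aS_a(t))_a$. But this does \emph{not} show that $v\alpha(t)$ itself is determined: you would need to cancel $(n-1)\,v0$, and a commutative aperiodic monoid is not cancellative. Concretely, two guarded forests $t,t'$ with the same $(n_a,v_aS_a)_a$ satisfy $v\alpha(t)+v0=v\alpha(t')+v0$ when $n=2$, and nothing in the path-algebra identities lets you strip off $v0$. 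The related problem is that the single value $v_aS_a$ is too coarse: to swap $S_a$ for $S_a'$ inside $v(\alpha(a\hole)S_a+h)$ and keep the result unchanged, you need $S_a$ and $S_a'$ to agree under \emph{every} context $vu$ with $vu\not\sim v$ (this is exactly the $v{+}$-equivalence used in the paper), not just under the one context $v_a$; again identity~\eqref{eq:path-ident} only gives $v(\alpha(a\hole)S_a+h)+v0=v_aS_a+vh$, so equality of $v_aS_a$ and $v_aS_a'$ yields equality only after adding $v0$.

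The paper sidesteps the cancellation problem by \emph{not} pushing $v$ inside the sum. It groups by the pair $(w,a)$ with $w=\alpha(p_j)$, so that $\alpha(t)$ itself---before applying $v$---is expressed as $\sum_i w_i\alpha(a_i)h_i+\sum_i(k_i{-}1)w_i\alpha(a_i)0$; only then is $v$ applied. Because the ``rest'' of the sum sits \emph{inside} the argument of $v$, the context seen by each $h_i$ is $v(w_i\alpha(a_i)\hole+h)$, which Lemma~\ref{lemma:non-stab} shows is $\not\sim v$, so the $v{+}$-class of $h_i$ (obtained from the induction hypothesis via Lemma~\ref{lemma:induction-assumption}) suffices. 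Your plan collapses $w$ using $v\alpha(p_j)=v$, which forces $v$ inside the sum too early and creates the cancellation obstruction; to repair it you would have to retain the prefix type $w$ and replace $v_aS_a$ by the full $v{+}$-class of the grouped subforest---at which point you have essentially reconstructed the paper's guarded profile. (Your proposed handling of the case $v_a\sim v$, $v_a\neq v$ is also too sketchy, but the decisive issue is the one above.)
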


Before showing the lemma above, we show in the lemma below that  it concludes the proof of
Proposition~\ref{prop:ind-prop}. 

\begin{lem}
  Let $h$ and $L_h$ be as in Lemma~\ref{lemma:guarded-lemma}. Then the
  equivalence~(\ref{eq:guarded-formula}) holds for all forests $t$,
  and not only guarded forests.
\end{lem}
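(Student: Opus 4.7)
The plan is to reduce the general case to the guarded case by using Lemma~\ref{lemma:negligible}, then check that both sides of the biconditional are insensitive to the reduction.

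Fix an arbitrary forest $t$. By Lemma~\ref{lemma:negligible} there exist a guarded forest $t'$ and negligible forests $s, s'$ such that $t+s$ and $t'+s'$ have the same multiset of paths. I will show that
\begin{align*}
v\alpha(t) = v\alpha(t') \qquad\text{and}\qquad t \in L_h \iff t' \in L_h,
\end{align*}
after which the lemma follows by applying Lemma~\ref{lemma:guarded-lemma} to the guarded forest $t'$.

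For the first equality, I would proceed as follows. Lemma~\ref{lemma:multiset} gives $\alpha(t+s)=\alpha(t'+s')$, i.e.\ $\alpha(t)+\alpha(s)=\alpha(t')+\alpha(s')$. Let $g=\alpha(s)$ and $g'=\alpha(s')$; these are types of negligible forests, so Lemma~\ref{lemma:stab} yields $v=v(g+1)=v(g'+1)$. Using horizontal commutativity of $H$ and the defining property of $g+1$,
\begin{align*}
v\alpha(t)=v(g+1)\alpha(t)=v(g+\alpha(t))=v(\alpha(t)+g),
\end{align*}
and symmetrically $v\alpha(t')=v(\alpha(t')+g')$. Since $\alpha(t)+g=\alpha(t')+g'$, we conclude $v\alpha(t)=v\alpha(t')$, so $v\alpha(t)=vh$ iff $v\alpha(t')=vh$.

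For the second equivalence, observe that any path language is, by definition, a Boolean combination of languages of the form ``at least $k$ paths in $K$,'' and membership in such languages depends only on the multiset of paths in the forest. Hence $t+s \in L_h \iff t'+s' \in L_h$. Because $L_h$ is guarded and $s, s'$ are negligible, $t+s\in L_h \iff t\in L_h$ and $t'+s'\in L_h \iff t'\in L_h$, so $t\in L_h \iff t'\in L_h$. Combining the two equivalences with Lemma~\ref{lemma:guarded-lemma} applied to $t'$ finishes the proof. The only slightly delicate step is the first one, where one must exploit Lemma~\ref{lemma:stab} to ``absorb'' the negligible forests $s, s'$ into $v$; the rest is essentially bookkeeping on top of the lemmas already proven.
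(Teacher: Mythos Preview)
Your proof is correct and follows essentially the same approach as the paper: both use Lemma~\ref{lemma:negligible} to obtain $t',s,s'$, then combine Lemma~\ref{lemma:multiset}, Lemma~\ref{lemma:stab}, the guardedness of $L_h$, and the fact that path languages depend only on the multiset of paths to reduce to the guarded case. Your organization (establishing $v\alpha(t)=v\alpha(t')$ and $t\in L_h\iff t'\in L_h$ separately, then applying Lemma~\ref{lemma:guarded-lemma} to the guarded forest $t'$) is in fact slightly cleaner than the paper's, which applies Lemma~\ref{lemma:guarded-lemma} to $t'+s'$ and leaves implicit why that is justified.
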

\proof
  Let $t$ be a forest.  By applying Lemma~\ref{lemma:negligible}, we
  can find negligible forests $s,s'$ and a guarded forest $t'$ such
  that $t+s$ and $t'+s'$ have the same multiset of paths.

  We begin with the left to right implication
  in~(\ref{eq:guarded-formula}). Assume that $t \in L_h$. Since
  $s$ is negligible and the language $L_h$ is guarded, $L_h$ also
  contains $t+s$. Since $L_h$ is a path language, and the forests $t+s$ and $t'+s'$ have the same multiset of paths, then $L_h$  also contains
  $t'+s'$. We now apply Lemma~\ref{lemma:guarded-lemma} to conclude
  that $v\alpha(t' + s')=vh$. Since $t' + s'$ and $t+s$ have the same
  multiset of paths, they have the same value under $\alpha$ by
  Lemma~\ref{lemma:multiset}. This gives us
  \begin{eqnarray*}
    vh =v\alpha(t'+s')=v\alpha(t+s)=v\alpha(t)\ ,
  \end{eqnarray*}
  where the last equality is by Lemma~\ref{lemma:stab}.

  The right to left implication is by reversing the above reasoning.
\qed

\subsection{The path language $L_h$}
We are only left with proving Lemma~\ref{lemma:guarded-lemma}.

We say that two types $g,h$ are $v+$-equivalent if $vug=vuh$ holds for any context $vu \not \sim v$. 
\begin{lem}\label{lemma:induction-assumption}
  For every $h \in H$, there is a path language $M_h$ that contains
  all forests whose type is $v+$-equivalent to $h$.
\end{lem}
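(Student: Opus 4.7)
My plan is to derive this lemma directly from the induction hypothesis carrying the proof of Proposition~\ref{prop:ind-prop}, rather than arguing from scratch. The key observation is that the condition $vu \not\sim v$ is exactly the statement $vuV \subsetneq vV$ (proper inclusion), since $vuV \subseteq vV$ holds automatically and equality is precisely the definition of $\sim$. Consequently $|vuV| < |vV|$, so the inductive hypothesis of Proposition~\ref{prop:ind-prop} applies to each such context $vu$ and each element $g \in H$, producing a path language equal to $\set{t : vu\,\alpha(t) = g}$.

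From here I would simply intersect: define
\[
M_h \;=\; \bigcap_{u \in V:\, vu \not\sim v} \set{t : vu\,\alpha(t) = vu\,h}\ .
\]
Because $V$ is finite, this is a finite intersection of path languages; and since path languages were defined as boolean combinations of languages from the graded PDL base, finite intersections of path languages are again path languages, so $M_h$ is a path language. Unwinding the definition of $v+$-equivalence, $t \in M_h$ iff $vu\,\alpha(t) = vu\,h$ for all $u$ with $vu \not\sim v$, which is precisely the statement that $\alpha(t)$ is $v+$-equivalent to $h$; in particular every forest whose type is $v+$-equivalent to $h$ lies in $M_h$, which is what the lemma demands (we in fact get equality).

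I do not anticipate a real obstacle here: the statement is essentially a packaging of the inductive hypothesis via a finite boolean combination. The only points to take care with are (i) checking the strict drop $|vuV| < |vV|$ so that the induction really applies, and (ii) being sure that the inductive hypothesis is available uniformly for all finitely many $u$ with $vu \not\sim v$, not just for a single one. Both are immediate, so in the grand scheme of the proof of Theorem~\ref{thm:path-algebra} this lemma should serve as a mechanical bridge that reduces a question about the full context $v$ to questions about strictly smaller Green's $\mathcal{R}$-classes, with all the genuinely delicate work concentrated elsewhere (notably in Lemma~\ref{lemma:guarded-lemma}).
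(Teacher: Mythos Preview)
Your proposal is correct and is essentially the same as the paper's own proof: both invoke the induction hypothesis of Proposition~\ref{prop:ind-prop} for each context $w=vu$ with $vu\not\sim v$ (the paper packages these as $W=\{w\in vV:w\not\sim v\}$) to obtain the path languages $\{t:w\alpha(t)=wh\}$, and then take the finite intersection over all such $w$.
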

\proof
Define 
\begin{align*}
	W = \set { w \in vV : w \not \sim v}
\end{align*}
By definition, a forest type $g \in H$ is $v+$-equivalent to $h$ if and only if $wg=wh$ holds for all $w \in W$. By the induction assumption in Proposition~\ref{prop:ind-prop}, we know that for every $w \in W$ and $g \in H$, the forest language
\begin{align*}
	L_{w,g} = \set{t : w \cdot \alpha(t) = g}
\end{align*}
is a path language.  The type of a forest $t$ is $v+$-equivalent to $h$ if for every context $w \in W$, the result of placing $t$ in a context of type $w$ is the same as the result of placing $h$ in $w$. In other words, the set of forests whose type is $v+$-equivalent to $h$ is 
\begin{align*}
	\bigcap_{w \in W} L_{w,wh},
\end{align*}
which is a path language, as an intersection of path languages.
\qed

Below, we write $guard$ for the set of pairs $(w,a) \in V \times A$
such that $vw \sim v$  but $vw\alpha(a) \not \sim v$. In other
words, a pair $(w,a) \in guard$ describes a guarded context $pa$.
Consider a forest in guarded form
\begin{eqnarray*}
t=       p_1a_1t_1  + \cdots + p_na_nt_n\ .
\end{eqnarray*}
For each $(w,a) \in guard$, let $I_{w,a}$ be the set of indexes $i$
such that $\alpha(p_i)=w$ and $a_i=a$.  The \emph{guarded profile} of
this forest is the function
\begin{eqnarray*}
  \tau_t \quad : \quad guard \quad \to \quad \set{0,1,\ldots,\omega} \times [H]_{\equiv_{v+}}
\end{eqnarray*}
which maps a pair $(w,a)$ to the pair $(n,h)$, where $n$ is the size
of $I_{w,a}$ (up to threshold $\omega$) and $h$ is the equivalence
class 
\begin{eqnarray*}
 [ \alpha(\sum_{i \in I_{w,a}} t_i)]_{\equiv_{v+}}\ .
\end{eqnarray*}

Lemma~\ref{lemma:guarded-lemma}, and thus also
Proposition~\ref{prop:ind-prop} and Theorem~\ref{thm:path-algebra},  will
follow from the two lemmas below.

\begin{lem}\label{lemma:profile-is-enough}
  For a guarded forest $t$, the guarded profile determines the value
  $v\alpha(t)$. In other words, if $s,t$ are guarded forests with the
  same guarded profile, then $v\alpha(s)=v\alpha(t)$.
\end{lem}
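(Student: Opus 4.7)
The plan is to compute $v\alpha(t)$ explicitly from the guarded profile of $t$ using the identities of the path algebra. First I would use horizontal commutativity to arrange the $n$ guarded trees of $t$ by their $(w,a)$-type in $guard$, and then within each group of size $k_{w,a}$ apply the path identity~\eqref{eq:path-ident} repeatedly (with the shared outer context $u_{w,a}=w\alpha(a)$) to merge the group into a single summand
\[
u_{w,a}\, g_{w,a} + (k_{w,a}-1)\, u_{w,a}\, 0,
\]
where $g_{w,a}$ is the sum of the subtree types in the group. Summing over $(w,a)$ yields a canonical decomposition of $\alpha(t)$.

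Next I would apply $v$ and iterate~\eqref{eq:path-ident} once more, this time with the ambient context $v$, to distribute $v$ over the $n$ summands of $\alpha(t)$. Using that each preserving prefix $p_{w,a,i}$ satisfies $v\,\alpha(p_{w,a,i})=v$ (so $vu_{w,a}=v\alpha(a)$) and that $vu_{w,a}\not\sim v$ by guardedness, I obtain the core identity
\[
v\alpha(t) + (n-1)\, v\, 0 \;=\; \sum_{(w,a)} \bigl[\, v u_{w,a}\, g_{w,a} \,+\, (k_{w,a}-1)\, v u_{w,a}\, 0 \,\bigr].
\]
Its right-hand side depends on $t$ only through the profile: the term $vu_{w,a}\, g_{w,a}$ depends on $g_{w,a}$ only through its $v+$-equivalence class (by the very definition of $v+$-equivalence, since $vu_{w,a}\not\sim v$), and the multiplicities enter only through $\min(k_{w,a},\omega)$ by aperiodicity and commutativity of $H$. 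Hence for two guarded forests $s,t$ with the same profile we obtain $v\alpha(s)+C=v\alpha(t)+C$, where the correction $C=(n-1)\, v\, 0$ is itself determined by the profile.

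The remaining---and main---obstacle is to cancel the common $C$, since $H$ is not cancellative. Here I would combine Lemma~\ref{lemma:stab} (applied to a preserving witness to produce extra $v\, 0$-contributions inside $v\alpha(t)$) with the fall-apart identity~\eqref{eq:fall-apart} on the idempotent $v^\omega$ to show that $v\alpha(t)$ already absorbs as many copies of $v\, 0$ as are ever needed, reducing both $v\alpha(s)$ and $v\alpha(t)$ to a common normal form in which the correction disappears. This absorption step is the delicate algebraic point of the proof; the earlier stages are essentially bookkeeping driven by~\eqref{eq:path-ident}.
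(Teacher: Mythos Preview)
Your decomposition of $\alpha(t)$ via~\eqref{eq:path-ident} is exactly right and matches the paper. The divergence, and the gap, is in what you do after applying $v$.

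You distribute $v$ over the $n$ summands of $\alpha(t)$, obtaining
\[
v\alpha(t) + (n-1)\,v0 \;=\; \sum_{(w,a)} \bigl[\, v u_{w,a}\, g_{w,a} + (k_{w,a}-1)\, v u_{w,a}\, 0 \,\bigr],
\]
and correctly observe that the right-hand side is profile-determined. But then you need to cancel $(n-1)\,v0$, and your proposed mechanism does not do this. Lemma~\ref{lemma:stab} gives $v = v(1+u0)$ for \emph{preserving} $u$, so it lets you absorb terms of the form $u0$ (with $vu\sim v$) \emph{inside} the argument of $v$; it does not let you absorb an external summand $v0$ added to $v\alpha(t)$. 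Concretely, $v\alpha(t)+v0 = v\alpha(t)$ would amount to $v(x_1+\cdots+x_n) = vx_1 + v(x_2+\cdots+x_n)$, i.e.\ distributivity of $v$, which path algebras simply do not satisfy. Your ``fall-apart on $v^\omega$'' idea does not bridge this either, since $v$ itself need not be idempotent and the action of $v^\omega$ on $\alpha(t)$ is not the quantity you are computing.

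The paper sidesteps the cancellation problem entirely by \emph{not} distributing $v$. After writing $\alpha(t) = w_i\alpha(a_i)h_i + h$ (isolating one summand, with $h$ the rest), it observes that the context $v\bigl(w_i\alpha(a_i)\hole + h\bigr)$ is of the form $vu'$ with $vu'\not\sim v$; this is precisely Lemma~\ref{lemma:non-stab} applied with $u=w_i\alpha(a_i)$. The definition of $v+$-equivalence then gives
\[
v\bigl(w_i\alpha(a_i)h_i + h\bigr) \;=\; v\bigl(w_i\alpha(a_i)g_i + h\bigr)
\]
directly, allowing the swap $h_i\leadsto g_i$ one coordinate at a time while $v$ stays on the outside. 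No correction term ever appears. The missing ingredient in your outline is exactly this use of Lemma~\ref{lemma:non-stab}; once you have it, the bookkeeping you did in the first stage suffices.
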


\begin{lem}\label{lemma:profile-can-be-described}
  For a guarded forest, the guarded profile can be determined by a
  path language. In other words, for each guarded profile $\tau$,
  there is a path language $L_\tau$ such that $t \in L_\tau \iff
  \tau_t = \tau$ holds for all guarded forests $t$.
\end{lem}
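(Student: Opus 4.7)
My plan is to construct $L_\tau$ as a boolean combination of path-counting conditions, describing the two coordinates of $\tau(w,a)$ separately for each $(w,a)\in guard$.

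For each $(w,a)\in guard$ I will single out a regular word language $\hat K_{w,a}\subseteq A^+$ that picks out the first cut of signature $(w,a)$ along a path: it consists of words $b_1\cdots b_k$ with $b_k=a$, $\alpha(b_1\cdots b_{k-1}\hole)=w$, and every proper prefix $b_1\cdots b_j$ ($j<k$) yielding a preserving context, i.e.\ $v\alpha(b_1\cdots b_j\hole)=v$. Since $A^*\to V$ sending $b_1\cdots b_k\mapsto\alpha(b_1\cdots b_k\hole)$ is a monoid homomorphism, these constraints cut out a regular language. The counting claim is that in a guarded forest $t=p_1a_1t_1+\cdots+p_na_nt_n$, the nodes whose root-to-node label sequence lies in $\hat K_{w,a}$ are exactly $\set{a_i : i\in I_{w,a}}$. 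A node strictly above $a_i$ inside $p_i$ has a path-word sitting entirely inside the preserving context $p_i$; if that prefix were in $\hat K_{w,a}$, one more step inside $p_i$ would already be non-preserving, contradicting the preservingness of $p_i$. A descendant of $a_i$ has a path-word with a non-preserving prefix (the one ending at $a_i$), so it fails the preserving-prefix requirement. Hence the count $|I_{w,a}|$ is definable by a boolean combination of path formulas ``at least $k$ paths in $\hat K_{w,a}$''.

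For the $v+$-class coordinate I invoke Lemma~\ref{lemma:induction-assumption}, which provides a path language $M_h$ recognising exactly the forests of type $v+$-equivalent to $h$. I need to apply this to the virtual forest $s_{w,a}=\sum_{i\in I_{w,a}}t_i$ rather than to $t$. The bridge is the analogous path bijection: for any regular $L\subseteq A^+$, the number of root-paths in $s_{w,a}$ with label sequence in $L$ equals the number of root-paths in $t$ with label sequence in $\hat K_{w,a}\cdot L$. The decomposition is unique because the $\hat K_{w,a}$-prefix of such a path is forced to end at one of the $a_i$-nodes with $i\in I_{w,a}$, by the same argument as in step one. Substituting $\hat K_{w,a}\cdot L$ for $L$ inside each atomic subformula of $M_h$ produces a path language $\widetilde M_{w,a,h}$ on $t$ that is satisfied by $t$ iff $\alpha(s_{w,a})\equiv_{v+} h$.

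I then set
\begin{equation*}
  L_\tau \;=\; \bigwedge_{(w,a)\in guard} \Big[\text{``count in $\hat K_{w,a}$ equals $\tau(w,a)_1$ up to $\omega$''} \;\wedge\; \widetilde M_{w,a,\,h_{w,a}}\Big],
\end{equation*}
where $h_{w,a}$ is any representative of the $v+$-class $\tau(w,a)_2$. As a boolean combination of path-counting conditions, $L_\tau$ is a path language, and by the two claims above it captures precisely the guarded forests $t$ with $\tau_t=\tau$. The main obstacle will be step one, namely the verification that no spurious node of $t$ produces a path in $\hat K_{w,a}$ and that the split of any $\hat K_{w,a}\cdot L$-path is unique; both reduce to a careful use of the definitions of ``preserving'' and ``guarded context''.
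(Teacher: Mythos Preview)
Your argument is correct and follows the same architecture as the paper's proof: express $L_\tau$ as an intersection over $(w,a)\in guard$ of languages $L_{(w,a),\tau(w,a)}$, capture the counting coordinate by a threshold path-count on a regular word language, and capture the $v+$-class coordinate by taking the boolean combination defining $M_x$ from Lemma~\ref{lemma:induction-assumption} and prefixing each of its word languages.

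The only difference is cosmetic. The paper uses the simpler
\[
K_{w,a}=\{b_1\cdots b_{k-1}a\in A^+:\alpha(b_1\cdots b_{k-1}\hole)=w\},
\]
without your ``all proper prefixes preserving'' clause. Your extra clause makes $\hat K_{w,a}$ prefix-free by construction, so the unique-factorisation step for $\hat K_{w,a}\cdot L$ is immediate. The paper gets by without it because, in a guarded forest, any root-path whose label lies in $K_{w,a}$ must already terminate at one of the marked nodes $a_i$ with $i\in I_{w,a}$ (prefixes inside $p_j$ stay $\mathcal R$-equivalent to $v$, while anything past $a_j$ drops out of that $\mathcal R$-class), so no spurious factorisations arise anyway.

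One direction you glossed over: you argued that no node other than the $a_i$ contributes, but not that the path to each $a_i$ actually lies in $\hat K_{w,a}$. For that you need every prefix $p'$ of the preserving context $p_i$ to be preserving. This is true in the $\mathcal R$-sense used in the definition of $guard$: from $v\alpha(p_i)\sim v$ and $p_i=p'p''$ one gets $v\in v\alpha(p')V$, hence $v\alpha(p')\sim v$. With that remark added, your proof is complete and matches the paper's.
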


The above two lemmas give us Lemma~\ref{lemma:guarded-lemma}, by
taking $L_h$ to be the union of all $L_\tau$, for profiles $\tau$ of
the form $\tau=\tau_t$ where $t$ is a guarded forest with
$v\alpha(t)=h$.  We begin with the proof of
Lemma~\ref{lemma:profile-is-enough}.

\proof {\it (of Lemma~\ref{lemma:profile-is-enough})}
Let $s,t$ be guarded forests with
the same guarded profile. Our goal is to show that
$v\alpha(s)=v\alpha(t)$. 

Let $\tau$ be the guarded profile of $s$ and $t$. Let
$(w_1,a_1),\ldots,(w_m,a_m)$ be all elements of $guard$, and let
$(k_i,x_i)$ be the value $\tau(w_i,a_i)$. Recall that $k_i$ is the
number of times a path of the form $pa_i$ appears in the forest with
$\alpha(p)=w_i$.  By repeatedly applying~\eqref{eq:path-ident},
horizontal commutativity and aperiodicity, we know that the type of
$s$ is 
\begin{eqnarray*}
  \alpha(s)  = \quad  \sum_{i : k_i \ge 1} w_i \alpha(a_i)h_i \quad + \quad \sum_i  (k_i-1)\cdot w_i
  \alpha(a_i)0\ ,
\end{eqnarray*}
for some $h_1,\ldots,h_n \in H$ such that each $h_i$ belongs to the
$v+$-equivalence class $x_i$. Likewise, we can decompose 
\begin{eqnarray*}
  \alpha(t)  = \quad  \sum_{i : k_i \ge 1} w_i \alpha(a_i)g_i \quad + \quad \sum_i  (k_i-1)\cdot w_i
  \alpha(a_i)0\ .
\end{eqnarray*}

So the only difference between the types of $s$ and $t$ is that the
first type uses $h_1,\ldots,h_n$ and the second type uses
$g_1,\ldots,g_n$. However, we know that the types $h_i$ and $g_i$ are
$v+$-equivalent, for any $i$. We will conclude the proof by showing
that
\begin{eqnarray*}
        v(w_i\alpha(a_i)h_i+h)=v(w_i\alpha(a_i)g_i+h)
\end{eqnarray*}
holds for any $h \in H$. By applying the equality above for all
$i=1,\ldots,n$, we get the desired $v\alpha(s)=v\alpha(t)$. By
definition of $v+$-equivalence, the above equality would follow if we
showed that $v(v_i\alpha(a_i)\hole+h) \not \sim v$. This will be shown
in Lemma~\ref{lemma:non-stab}.
\qed

\begin{lem}\label{lemma:non-stab}
  If $vu \not \sim v$ then $v(u+h) \not \sim v$.
\end{lem}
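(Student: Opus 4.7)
The plan is to prove the contrapositive: assume $v(u+h)\sim v$ and derive $vu\sim v$.

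First, I would rewrite the context $u+h$ in a form that separates the ``$h$ on the side'' from the ``$u$ below the hole.'' Observe that, as transformations on any $g\in H$, both $(u+h)$ and $(1+h)\cdot u$ map $g\mapsto ug+h$. Since the action of $V$ on $H$ is faithful, this gives the identity $u+h=(1+h)\cdot u$ in $V$, and hence $(u+h)w=(1+h)\cdot uw$ for every $w\in V$.

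Starting from $v(u+h)\sim v$, I pick $w\in V$ with $v=v(u+h)w$, and rewrite this using the factorisation above as
\begin{equation*}
v \;=\; v(1+h)\cdot uw.
\end{equation*}
This equation shows that $v\in v(1+h)V$; since the reverse inclusion $v(1+h)V\subseteq vV$ is automatic, it follows that $v(1+h)\sim v$. Now I would invoke Lemma~\ref{lemma:stab} with $1+h$ playing the role of $u$: because $(1+h)\cdot 0=h$, the lemma yields $v=v(1+(1+h)\cdot 0)=v(1+h)$, i.e.\ $(1+h)$ right-stabilises $v$. Substituting $v(1+h)=v$ into the displayed equation, I obtain $v=v\cdot uw=vu\cdot w$, so $v\in vuV$ and thus $vu\sim v$, contradicting the hypothesis $vu\not\sim v$.

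The one piece of cleverness required is spotting the factorisation $u+h=(1+h)\cdot u$; after that the argument is a direct application of the already-proven Lemma~\ref{lemma:stab}, with no need for the $\omega$-power machinery used elsewhere in this section.
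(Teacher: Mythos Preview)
Your proof is correct and considerably more economical than the paper's. The key insight—the factorisation $u+h=(1+h)\cdot u$, valid in any forest algebra by faithfulness of the action—reduces the lemma to a one-line application of Lemma~\ref{lemma:stab}: from $v=v(1+h)(uw)$ you read off $v(1+h)\sim v$, Lemma~\ref{lemma:stab} (with $1+h$ in the role of $u$) then gives $v=v(1+(1+h)0)=v(1+h)$, and substituting back collapses the original equation to $v=vuw$, whence $vu\sim v$.

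The paper follows a different route. It also starts from $v(uw+h)=v$, but then works directly with the $\omega$-power machinery: it passes to an idempotent, applies the auxiliary identity $v^\omega=v^\omega(1+\omega\cdot v^\omega 0)$ together with identity~\eqref{eq:fall-apart} several times, and after a chain of manipulations arrives at the same intermediate conclusion $v=v(1+h)$, from which the contradiction follows just as in your argument. So both proofs pivot on establishing $v=v(1+h)$; the difference is that you obtain it by invoking the already-proved Lemma~\ref{lemma:stab} (which has absorbed all the necessary $\omega$-computations), whereas the paper essentially re-derives a variant of that lemma inline. Your approach cleanly separates the general forest-algebra step (the factorisation, which needs no path-algebra hypothesis) from the single place where the path-algebra axioms are genuinely required (the appeal to Lemma~\ref{lemma:stab}); the paper's argument is more self-contained but correspondingly longer.
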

\proof
  Toward a contradiction, assume that $w$ is such that 
  \begin{eqnarray*}
    v(uw+h)  = v
  \end{eqnarray*}
  We assume that $(uw+h)$ is idempotent. By~\eqref{eq:add-side-omega},
  we get
  \begin{eqnarray*}
    v=     v(uw+h)(uw+h + \omega(uw0+h))
  \end{eqnarray*}
  \begin{eqnarray*}
     v(uw+h + \omega(uw0+h)) =  v(uw+\omega(uw0+h))
  \end{eqnarray*}
  Let $x= uw+\omega(uw0+h)$. By the above we know that $vx=v$. By
  definition of $x$ we know that $x=x+h$, and in particular $x^\omega
  = x^\omega +h$. By identity~\eqref{eq:fall-apart}, we get
  \begin{eqnarray*}
      x^\omega=x^\omega (x^\omega +h) = x^\omega(1 + h)
  \end{eqnarray*}
  Therefore,
  \begin{eqnarray*}
    v = vx = vx^\omega = vx^\omega(1  + h) = v(1 +h)
  \end{eqnarray*}
\qed

Now we proceed to prove Lemma~\ref{lemma:profile-can-be-described}.







\proof
We will show that for each 
\begin{eqnarray*}
(w,a) \in guard \quad  \mbox{and} \quad 
(i,x) \in   \set{0,1,\ldots,\omega} \times [H]_{\equiv_{v+}}
\end{eqnarray*}
there is a path language $L_{(w,a),(i,x)}$ such that 
\begin{eqnarray*}
 t \in   L_{(w,a),(i,x)} \qquad \iff \qquad \tau_t(w,a)=(i,x)
\end{eqnarray*}
holds for any guarded forest $t$. This gives 
Lemma~\ref{lemma:profile-can-be-described} by setting
\begin{eqnarray*}
  L_\tau = \bigcap_{(w,a) \in guard} L_{(w,a),\tau(w,a)}\ .
\end{eqnarray*}
Fix $(w,a)$ and $(i,x)$. The easier part is to enforce that the first
coordinate of $\tau_t(w,a)$ is $i$: we just have to say that the
forest $t$ has $i$ paths in the word language 
\begin{eqnarray*}
  K_{w,a}  = \set {a_1 \cdots a_n a \in A^+  : \alpha(a_1 \cdots a_n \hole) =
    w} \ .
\end{eqnarray*}
Only slightly more effort is required in enforcing that the second
coordinate of $\tau_t(w,a)$ is $x$.  By
Lemma~\ref{lemma:induction-assumption}, we know that the set $M_x$ of
forests whose type is in the $v+$-equivalence class $x$ is defined by
a boolean combination of path formulas. To enforce that the second
coordinate of $\tau_t(w,a)$ is $x$, we use the same boolean
combination, except that every word language is prefixed by $K_{w,a}$.
\qed

As we promised before, we now prove that if the path algebra $(H,V)$
in the statement of Theorem~\ref{thm:path-algebra} is vertically
aperiodic, then the path language only needs to use first-order
definable word languages.  It suffices to look at the only place where
we actually wrote word languages: in the lemma above. The word
language $K_{w,a}$ is a word language obtained by concatenating $a$ to
a word language that is recognized by the vertical monoid $V$, via the
morphism $a \mapsto \alpha(a\hole)$.  Since $V$ is aperiodic, we can
use the Sch\"utzenberger and McNaughton-Papert theorem to conclude
that $K_{w,a}$ is first-order definable.

Actually, the argument above can be further generalized to any variety
of word languages given by monoids such that the corresponding
language class is closed under concatenation and contains the one
letter languages $\set{a}$. Note that any such language class
necessarily contains all first-order logic, since it captures all
star-free expressions.



\section{Multicontexts and Confusion}\label{section.multicontexts}

Here we find necessary conditions for a forest algebra to be a
CTL-algebra, an FO-algebra or a graded PDL-algebra.  We use these conditions to show that certain languages cannot be expressed in CTL, FO, or PDL. The conditions we find are essentially the
absence of certain kinds of configurations in the forest algebra,
analogous to the `forbidden patterns' of
Cohen-Perrin-Pin~\cite{perrin-pin} and Wilke~\cite{wilke}.

Let $A$ be a finite alphabet.   A {\it multicontext} $p$ over $A$ is a
forest in which some of the leaves have been replaced by a special
symbol $\hole,$ each occurrence of which is called a {\it hole} of the
multicontext.  A special kind of multicontext, called a {\it uniform}
multicontext, is one in which every leaf node is a hole, and all
subtrees at the same level are identical. For example
$$a(b(c\hole+c\hole))+a(b(c\hole+c\hole))$$
is a uniform multicontext.

The holes are used for substitution. The holes are independent in the sense that different forests can be substituted into different holes. The set of holes of a multicontext $p$ is denoted ${\rm holes}(p).$ A
{\it valuation} on $p$ is a map $\mu:{\rm holes}(p)\to X,$ where $X$
can be a set of forests, or of multicontexts, or elements of $H,$
where $(H,V)$ is a forest algebra.  The resulting value, $p[\mu],$
found by substituting $\mu(x)$ for each hole $x,$ is consequently
either a multicontext, a forest, or an element of $H.$ In the last
case, we are assuming the existence of a homomorphism
$\alpha:A^{\Delta}\to (H,V),$ evaluated at the nodes of $p.$

Given a set $G\incl H$ we write $p[G]$ for the set of all possible
values of $p[\m]$ where $\m:{\rm holes}(p)\to G$.  When $G=\set{g}$ is
a singleton, we just write $p[g]$.  For $g\in G$ and $x\in{\rm holes}(p)$ we
define $p[g/x]$ to be the multicontext
 that results from $p$ by putting a tree that evaluates to $g$ in the hole
$x$.  (In particular, $p[g/x]$ has one less hole than $p$.)

We now define the various type of forbidden patterns for forest algebra.

\subsection{Horizontal confusion}
Let $(H,V)$ be a forest algebra.  As above, we assume the existence of
a homomorphism from $A^{\Delta}$ into $(H,V)$ in order to define the
valuations on $p$ with values in $H$. We say that $(H,V)$ has
\emph{horizontal confusion} with respect to a multicontext $p$ and a
set $G\incl H$ with $|G|>1$ if for every $g\in G$ and $x\in{\rm holes}(p)$:
\begin{equation*}
 G \subseteq p[g/x][G].
\end{equation*}
Intuitively, this means that fixing the value of one of the holes of
$p$ still allows us to obtain any element of $G$ by putting suitable
elements of $G$ into the remaining holes. 

\subsection{$k$-ary horizontal confusion}
We can define a stronger version of confusion, which seems to be satisfied by fewer forest algebras. In the stronger version, we are allowed to fix the value in not just one, but in  $k \ge 1$ holes: We say that the forest algebra $(H,V)$ has \emph{$k$-ary horizontal confusion} with respect to a multicontext $p$ and a set $G\incl H,$ with $|G|>1,$ if for all $g_1,\ldots, g_k\in G$ and $x_1,\ldots,x_k\in{\rm holes}(p),$
\begin{equation*}
 G =  p[g_1/x_1,\cdots,g_k/x_k][G].
\end{equation*}

The following lemma shows that the stronger notion is in fact equivalent to horizontal confusion, because we can always amplify horizontal confusion to $k$-ary horizontal confusion for arbitrary $k.$

\begin{lem}
\label{lem:amplifying-confusion}
Suppose $(H,V)$ has horizontal confusion with respect to a multicontext $p$ and a subset $G$ of $H,$ with underlying homomorphism $\phi:A^{\Delta}\to (H,V).$ Let $k>0.$  Then there is a multicontext $p_k$ such that $(H,V)$ has $k$-ary horizontal confusion with respect to $p_k,$  $G$ and $\phi.$
\end{lem}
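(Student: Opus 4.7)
The plan is to construct $p_k$ by iterating the substitution of $p$ into itself and to prove $k$-ary horizontal confusion by induction on $k$. For the base case $k=1$ we take $p_1=p$; the ``reach'' direction $G\subseteq p_1[g_1/x_1][G]$ is exactly the hypothesized horizontal confusion, while the complementary ``stay in $G$'' direction $p_1[g_1/x_1][G]\subseteq G$ is verified separately from the definition of $p_k$ as a composition of copies of $p$, and in particular it persists through the iterated construction used below.

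For the inductive step I would define $p_{k+1}$ to be the multicontext obtained by substituting a fresh copy of $p$ into every hole of $p_k$, so that each hole of $p_{k+1}$ is named by a pair consisting of an outer hole of $p_k$ together with an inner hole of $p$. Given any $k+1$ holes of $p_{k+1}$ fixed to values in $G$, I would examine how those fixed holes are distributed among the inner copies of $p$, and argue the favorable case first: if no inner copy of $p$ contains more than one fixed hole, then horizontal confusion on $p$ lets us fill the remaining holes of each affected inner copy from $G$ so that the copy evaluates to any chosen element of $G$, while any inner copy containing no fixed hole enjoys even greater freedom. This leaves every hole of the outer $p_k$ free to take any prescribed value in $G$, and the inductive hypothesis (used here in a strictly weaker form, with no constraints at all on $p_k$) produces any desired $g\in G$.

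The main obstacle is the unfavorable case, in which two or more fixed holes cluster inside a single inner copy of $p$, so that horizontal confusion on $p$ alone does not control the value produced by that sub-copy. I would handle this by refining the construction: instead of substituting a single copy of $p$ into each hole of $p_k$, one substitutes a composite multicontext built from $p$ of sufficient depth and breadth, arranged so that every possible cluster of up to $k+1$ fixed holes lies inside a sub-multicontext that already inherits an amplified confusion property by a secondary induction. An alternative route is to strengthen the inductive hypothesis to record how the constructed multicontext behaves when further nested, so that the clustering case is handled automatically; in either approach, the technical heart of the argument is to check that a single construction uniformly accommodates every distribution of the $k+1$ fixed holes, after which the ``reach'' and ``stay in $G$'' properties combine to give the required equality $G = p_{k+1}[g_1/x_1,\ldots,g_{k+1}/x_{k+1}][G]$.
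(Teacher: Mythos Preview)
Your iterated multicontext is the same object the paper builds --- substituting $p$ into every hole of $p_k$ and substituting $p_{k-1}$ into every hole of $p$ both yield the $k$-fold composite --- but you have set up the induction from the wrong end, and that is exactly what manufactures the ``unfavorable case'' you then cannot close. The paper decomposes $p_k$ as the \emph{outer} $p$ with a copy of $p_{k-1}$ in each of its holes, and splits cases on those few copies of $p_{k-1}$ rather than on the many innermost copies of $p$. Either the $k$ fixed holes meet at least two distinct copies of $p_{k-1}$, so every copy carries at most $k-1$ fixed holes and the inductive hypothesis applies to each one separately, leaving every hole of the outer $p$ free to take any value in $G$; or all $k$ fixed holes lie inside a single copy of $p_{k-1}$, so that copy may be forced to a particular value, but only \emph{one} hole of the outer $p$ is thereby constrained --- and the base hypothesis, ordinary horizontal confusion on $p$, is precisely the statement that one constrained hole still allows every element of $G$ as output.

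Your bad case, two or more fixed holes clustered inside one innermost copy of $p$, has no analogue in this arrangement: the object that might swallow all the fixed holes is $p_{k-1}$, to which the full inductive hypothesis applies, never the bare $p$, to which only the $1$-ary assumption applies. No ``refined construction'' or ``strengthened inductive hypothesis'' is needed; the missing idea is simply to reverse which factor plays the outer role in the recursion. (Your side remark about the ``stay in $G$'' inclusion is a separate issue: that containment does not follow for $p_1=p$ from the stated hypothesis, and neither your proposal nor the paper's proof supplies it; the applications of the lemma in the paper only ever use the inclusion $G\subseteq p_k[g_1/x_1,\ldots,g_k/x_k][G]$.)
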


\proof
We prove this by induction on $k.$  We have $p_1=p,$ by hypothesis.  If $k>1,$ we define $p_k$ by placing a copy of $p_{k-1}$ in each of the holes of $p.$  To see that this works, fix the values in $G$ of $k$ of the holes holes of $p_k.$  If the $k$ holes do not all belong to the same copy of $p_{k-1},$ then each copy has fewer than $k-1$ holes fixed, and thus we can set the values in the remaining holes to get any elements of $G$ we want in the holes of $p,$ and consequently any element of $G$ as a value of $p_k.$  If the $k$ holes all belong to the same copy of $p_{k-1},$ then the resulting value $g\in G$ produced by this copy might be determined, but this will only constrain the value in one of the holes of $p.$  Since $p$ has horizontal confusion, we can set the remaining holes of $p$ to values $g_1,\ldots, g_r$ to obtain any desired value as output, and we can in turn set the values of the other copies of $p_{k-1}$ to obtain these values $g_1,\ldots, g_r.$
\qed

\subsection{Vertical confusion}
We say  that the forest algebra $(H,V)$ has \emph{vertical  confusion} with respect to
a multicontext $p$ and a set  $\set{g_0,\dots,g_{k-1}}\incl H$ with $k>1$
if for every $i=0,\dots,k-1$:
\begin{quote}
  $p[g_i]=g_j$ where $j=(i+1)\pmod k$.
\end{quote}
This condition is weaker than periodicity of vertical monoid, because
$p$ is a multicontext, and not just a context. For instance, consider the syntactic forest algebra of the tree language $L$, which consists of trees where every node has two or zero children, and where every leaf is at even depth. 

\subsection{Confusion Theorem}
The next theorem shows how the various types of confusion are forbidden in CTL-, FO- and PDL-algebras.

\medskip
\begin{thm}[Confusion Theorem]~\label{thm:confusion}
\begin{iteMize}{$\bullet$}

\item If $(H,V)$ is a CTL-algebra, it does not have  vertical confusion  with respect to any multicontext. 

\item If $(H,V)$ is an FO-algebra, it does not have  vertical confusion with respect to any uniform multicontext.

\item If $(H,V)$ is a  graded PDL-algebra, it does not have  horizontal
  confusion with respect to any multicontext.
\end{iteMize}
\end{thm}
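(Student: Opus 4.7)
The plan is to prove all three items by one general strategy, based on the wreath-product characterization of Theorem~\ref{thm.main}. Each of the logic classes is the class of divisors of iterated wreath products over an algebraic base---respectively $\mathcal U_2$, aperiodic path algebras, and path algebras---so for each part it suffices to verify three things: (i) the base algebras have no confusion of the relevant kind, (ii) absence of the confusion is preserved under wreath products, and (iii) it is preserved under division. Together these imply the theorem.

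The uniform ingredient is the projection argument for (ii). Suppose $(G,W)\circ(H,V)$ exhibits the relevant confusion, witnessed by a multicontext $p$, a set $\mathcal G\subseteq G\times H$, and a homomorphism $\phi=\alpha\otimes\beta$ (using Theorem~\ref{thm:sequential-composition}). Let $\pi_1$ be the projection to the first component. If $|\pi_1(\mathcal G)|>1$, then $(p,\pi_1(\mathcal G),\alpha)$ witnesses confusion already in $(G,W)$. Otherwise $\pi_1(\mathcal G)=\{g_0\}$ is a singleton, and because every element of $\mathcal G$ has the same $\alpha$-value $g_0$, the $\alpha$-relabeling $p^\alpha$ does not depend on the substitution; it yields a multicontext $\tilde p$ over $A\times G$ which, together with the projection of $\mathcal G$ to its second coordinate and with $\beta$, witnesses confusion in $(H,V)$. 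This applies uniformly to vertical, uniform-vertical, and horizontal confusion.

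For step (i) the base cases are algebra-specific. For $\mathcal U_2$, an induction on the structure of $p$ shows that the map $g\mapsto p[g]$ on $\{0,\infty\}$ is either constant or the identity, so it cannot cycle through two distinct values. For aperiodic path algebras and uniform multicontexts, the uniformity reduces iterating $p$ to iterating a single context, and aperiodicity of the vertical monoid then precludes non-trivial cycles. For path algebras and horizontal confusion the argument is by contradiction: assuming confusion, apply Lemma~\ref{lem:amplifying-confusion} to pass to $k$-ary confusion, and then combine identity~\eqref{eq:path-ident} with Lemma~\ref{lemma:multiset} to exhibit two valuations of $p_k$ whose substituted forests have identical multisets of paths but different values, contradicting Lemma~\ref{lemma:multiset}.

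Step (iii) splits into a subalgebra case, which is immediate since a confusion in a subalgebra is a confusion in the ambient algebra, and a quotient case. For vertical confusion the quotient case admits a clean orbit argument: iterating the self-map $\tilde h\mapsto p[\tilde h]$ on the finite set of preimages of $g_0$ and applying pigeonhole gives a cycle of length $km\ge k>1$, which is the desired vertical confusion in the larger algebra. The main obstacle is the quotient case for horizontal confusion: its universal-existential form does not lift automatically, since a lifted valuation is only guaranteed to produce some preimage of the intended target. The plan is to handle this through amplification combined with a pigeonhole/saturation argument on the finite collection of $p_k$-profiles over the preimage of $G$, extracting a non-trivial subset $\tilde G$ on which the full horizontal-confusion condition genuinely holds; this is the technical heart of the argument for graded PDL.
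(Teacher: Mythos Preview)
Your three-step architecture (base, wreath product, division) is exactly the paper's, and your treatment of the wreath-product step and of the quotient case for vertical confusion matches the paper's argument closely. The problems lie in the base cases for $FO[\prec]$ and graded PDL.

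For the FO base case you write that ``uniformity reduces iterating $p$ to iterating a single context, and aperiodicity of the vertical monoid then precludes non-trivial cycles.'' This is only true when $p$ has a \emph{single} hole; the paper explicitly separates that case out. When $p$ has two or more holes, the map $g\mapsto p[g]$ is not of the form $g\mapsto vg$ for any $v\in V$, and aperiodicity of $V$ alone does not force it to be eventually idempotent---indeed the whole point of vertical confusion is that multicontexts can cycle even when contexts cannot. The paper's argument here is genuinely different: it invokes Theorem~\ref{thm:path-algebra} to say that every language recognized by an aperiodic path algebra is an fo path language, then uses an aperiodic word congruence $\sim$ together with a threshold $k$ to compare the path profiles of $p^{q+1}[s]$ and $p^{q+2}[s]$. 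You need this counting argument; the shortcut via $V$-aperiodicity does not work.

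For the graded PDL base case, your plan to ``exhibit two valuations of $p_k$ whose substituted forests have identical multisets of paths but different values'' does not go through as stated: Lemma~\ref{lemma:multiset} says precisely that identical multisets of paths force identical values, so you would need to manufacture two valuations with the same multiset of paths yet forced by confusion to have distinct values, and you give no mechanism for producing such valuations (permuting hole-fillers does not preserve path multisets, since different holes sit under different root-to-hole paths). The paper instead uses Theorem~\ref{thm:path-algebra} again: it fixes a congruence $\sim$ of index $m$ and a threshold $k$, amplifies to $km$-ary confusion, and then \emph{greedily} fills at most $km$ holes to saturate each $\sim$-class up to threshold $k$; after that, no further substitution can change the value, contradicting $km$-ary confusion. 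Your sketch of the quotient case for horizontal confusion is in the right spirit but vague; the paper implements it as a specific decreasing chain $G\supseteq G_1\supseteq G_2\supseteq\cdots$ obtained by iterating $p$ to stability and then plugging one hole at a time, using $|H_1|$-ary confusion to guarantee the projection stays onto $G'$ throughout.
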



\proof
For each of the three kinds of confusion and each of the corresponding language classes, we will show that the nonconfusing property  {\it (a)} holds for the elements of the algebraic base of the class, {\it (b)} is preserved by wreath products, and  {\it (c)} is preserved by quotients and subalgebras.

We begin with vertical confusion and the class CTL which has $\mathcal{U}_2$ as an algebraic base.  Let $\phi:A^{\Delta}\to \mathcal{U}_2$ be a homomorphism, and suppose $p$ is a multicontext over $A$ such that $U_2$ has vertical confusion with respect to $p$ and $\phi.$    Since $\mathcal{U}_2$ is distributive, we have
$$p[g]=\sum_{u\in\pi}\phi(u)g +\sum_{v\in\rho}\phi(v)\cdot 0,$$
where the first sum ranges over the set $\pi$ of paths in $p$ from a root to the parent of a hole, and the second over the set $\rho$ of paths from the root to a leaf.  We claim that for $g\in\{0,\infty\},$  $p[p[g]]=p[g].$  This follows easily from an enumeration of the possible cases:  If $g=p[g],$ then the claim is trivial, so we can assume that either $g=\infty$ and $p[g]=0,$ or $g=0$ and $p[g]=\infty.$  In the first case, every path in $\pi$ has a prefix $wa$ with $a\in A,$ $\phi(a)=c_0,$ and $\phi(b)=1$ for every letter $b$ of $w,$ and every path in $\rho$ has either this form or has $\phi(b)=1$ for every letter $b.$  It follows that $p[0]=0.$  In the second case, some path in $p$ has a prefix $wa$ with $\phi(a)=c_{\infty}$ and $\phi(b)=1$ for every letter $b$ of $w,$ and thus $p[\infty]=\infty.$  Since $p[p[g]]=p[g]$ for all $g$ in the horizontal monoid of ${\mathcal U}_2,$ we cannot have vertical confusion.

We now consider the base algebras for $FO[\prec].$ Suppose that
$\phi:A^{\Delta}\to (H,V)$ is a homomorphism into an aperiodic path
algebra.  Then, by Theorem~\ref{thm:path-algebra}, every
language recognized by $\phi$ is an fo path language---that is, a
boolean combination of languages of the form $E^kL,$ where $L\incl
A^*$ is a first-order definable word language. Let $p$ be a uniform
multicontext over $A.$ Since $p$ is uniform, every maximal path in $p$
has the same label $u\in A^*.$ We can dispense with the case where $p$
has a single hole, because then $p[g]$ reduces to $\phi(u)\cdot g,$
and by aperiodicity of the vertical monoid we have, for some $n\ge 0,$
$p^{n+1}[g]=\phi(u^{n+1})g=\phi(u^n)g=p^n[g],$ so there is no vertical
confusion. We thus suppose that $p$ has at least two holes, so that
$p^n$ is a multicontext with at least $2^n$ holes. Since every
language recognized by $\phi$ is an fo path language, there exists a
congruence $\sim$ of finite index on $A^*$ and an integer $k>0$ such
that $A^*/\sim$ is aperiodic, with the following property: If $s,t\in
H_A$ are such that for every $\sim$-class $\kappa,$ the number of
paths from the root of $s$ in $\kappa$ is equal, up to threshold $k,$
to the number of paths from the root of $t$ in $\kappa,$ then
$\phi(s)=\phi(t).$ (`Equal up to threshold $k$' means
either equal, or both at least $k.$) Since $A^*/\sim$ is aperiodic,
there is an integer $r$ such that $u^r\sim u^{r+1}.$ Let $g\in H,$ and
let $s$ be any forest such that $\phi(s)=g.$ Choose $q$
such that both $q>r$ and $2^q>k.$ Now consider the forests
$p^{q+1}[s]$ and $p^{q+2}[s].$ Suppose that a word occurs as the label
of a path from the root in $p^{q+2}[s]$ more times than it does in
$p^{q+1}[s].$ Then, since $p$ is uniform, the word must have the form $u^{q+1}v,$ and since
$u^{q+1}v\sim u^qv,$ a word in the same $\sim$-class occurs at least
$2^q>k$ times in $p^{q+1}[s].$ It follows that $p^{q+1}[g]=
\phi(p^{q+1}[s])=\phi(p^{q+2}[s])=p^{q+2}[g],$ so there is no vertical
confusion.

We now consider the base algebras for graded PDL, so we suppose $\phi:A^{\Delta}\to(H,V)$ is a homomorphism onto a path algebra $(H,V)$ which has horizontal confusion with respect to a multicontext $p$ and a set $G\incl H,$ with $|G|>1.$  Let $G=\{g_1,\ldots, g_n\},$ and let $s_1,\ldots, s_n$ be forests such that $\phi(s_i)=g_i$ for all $1\le i\le n.$ As above, there is a congruence $\sim$ of finite index on $A^*$ and an integer $k,$ such that if two forests agree on the number of paths threshold $k$ and modulo $\sim,$ then they have the same image under $\phi.$ Let $m$ be the index of $\sim.$ (The only difference from the previous case is that we no longer have $A^*/\sim$ aperiodic.) By Lemma~\ref{lem:amplifying-confusion}, there is a context $q$ such that $(H,V)$ has $km$-ary horizontal confusion with respect to $q.$  We order the classes of $\sim$ arbitrarily as $\kappa_1,\ldots, \kappa_m.$ We proceed to insert forests from $s_1,\ldots, s_n$ into the holes of $q$ according to the following algorithm:  For each $\kappa_i$ in turn, we ask if there is a way to substitute copies of the $s_j$ into the holes we have not yet filled in order to obtain at least $k$ paths in $\kappa_i.$  If so, we perform the necessary insertions; if not we insert enough copies of the $s_j$ to obtain the maximum possible number of paths in $\kappa_i.$  At the end of the process, we will have filled no more than $km$ holes.  However, no further substitution of forests $s_j$ for the remaining holes can increase the number, threshold $k$ of paths in any class of $\sim,$ and thus no matter how we fill the remaining holes, the value under $\phi$ will be the same.  But because of the $km$-ary confusion, we should be able to obtain any value in $G$ by appropriately filling the remaining holes.  Thus $|G|=1,$ so there is no horizontal confusion.

We now show closure under wreath product.  Suppose first that neither $(H_1,V_1)$ nor $(H_2,V_2)$ has vertical confusion with respect to any multicontext. Let $\gamma$ be a homomorphism from $A^{\Delta}$ into the wreath product $(H,V)=(H_1,V_1)\circ (H_2,V_2).$ Suppose $(H,V)$ has vertical confusion with respect to some multicontext $p$ with underlying homomorphism $\gamma.$  There thus exist $g_i=(g_i^{(1)},g_i^{(2)})\in H=H_1\times H_2,$ with $i=0,\ldots, n-1,$ such that $p_{\gamma}[g_i]=p_{\gamma}[g_{(i+1)\bmod n}]$ for $0\le i < n.$  (Note that here we explicitly indicate the  homomorphism $\gamma,$  since we will be shortly be applying the multicontext $p$ with respect to other homomorphisms.) By Theorem~\ref{thm:sequential-composition}, $\gamma=\alpha\otimes\beta,$ where $\alpha:A^{\Delta}\to(H_1,V_1)$ and $\beta:(A\times H_1)^{\Delta}\to(H_2,V_2)$ are homomorphisms.  When we project onto the left co-ordinate, we obtain 
$$p_{\alpha}[g_i^{(1)}]=g_{(i+1)\bmod n}^{(1)}.$$
Since $(H_1,V_1)$ does not have vertical confusion, all the $g_i^{(1)}$ must be equal.  We will denote their common value by $g^{(1)}.$   We now form a new multicontext $p^{(\alpha,g^{(1)})}$ by first substituting any forest  evaluating to $g^{(1)}$ for the holes in $p,$ which gives a forest $t,$ then forming the forest $t^{\alpha},$ and finally restoring the original holes.  The resulting multicontext has the same shape as $p,$ but its nodes are now labeled by elements of $A\times H_1.$  Because the value $g^{(1)}$ is stable after each application of $p_{\alpha},$ we find that
$p_{\beta}^{(\alpha,g^{(1)})}[g_i^{(2)}]$ is identical to the right-hand coordinate of
$p_{\gamma}(g_i),$ and thus we have 
$$p_{\beta}^{(\alpha,g^{(1)})}[g_i^{(2)}]=g_{(i+1)\bmod n}^{(2)}$$
for all $0\le i < n.$  Since $(H_2,V_2)$ does not have vertical confusion, we find that all the $g_i^{(2)},$ and consequently all the $g_i,$ are identical.  So $(H,V)$ does not have vertical confusion.

In the case of vertical confusion with respect to uniform multicontexts, the proof is the same; we simply note that the multicontext $p^{(\alpha,g^{(1)})}$ defined above is uniform whenever $p$ is.  In the case of horizontal confusion with respect to some $G\subseteq H_1\times H_2,$ we use essentially the same argument:  absence of confusion in the left coordinate permits us to reduce $G$ to a set of the form $\{g^{(1)}\}\times G_2,$ and we find that $H_2$ has horizontal confusion with respect to $p^{(\alpha,g^{(1)})}$ and $G_2,$ so that $|G_2|=1,$ and hence $|G|=1.$

We now show that in each case the non-confusing property is preserved under division.  For subalgebras, this is trivial, but for quotients, there is something to prove.  Accordingly, suppose that $\psi:(H_1,V_1)\to (H_2,V_2)$ is a surjective homomorphism of forest algebras.  Let $\phi:A^{\Delta}\to (H_2,V_2)$ be a homomorphism. We can lift this to a homomorphism $\pi:A^{\Delta}\to (H_1,V_1)$ such that $\psi\pi=\phi.$ First suppose $(H_2,V_2)$ has vertical confusion with respect to some multicontext $p$ and $\phi.$  We will show $(H_1,V_1)$ has vertical confusion with respect to $p$ and $\pi.$  Vertical confusion in $(H_2,V_2)$ gives us a sequence $g_0,\ldots, g_{n-1}$ of elements of $H_2$ with $n>1$ such that $p_{\phi}[g_i]=g_{(i+1)\bmod n}$ for all $0\le i< n.$  Choose an element $h_0\in H_1$ such that  $\psi(h_0)=g_0,$ and define $h_1,h_2,\ldots$ by $h_{i+1}=p_{\pi}[h_i].$  By finiteness, there exist $j<k$ such that $h_j$ = $h_k.$  Since $\psi(h_j)=g_{j\bmod n}$ and $\psi(h_k)=g_{k\bmod n},$ we have $k-j$ is a multiple of $n,$ and in particular, $k-j>1.$  We thus have
$$p_{\pi}[h_{j+i}]= h_{j+(i+1)\bmod (k-j)},$$
which gives vertical confusion in $(H_1,V_1).$   

Now suppose that we have  horizontal confusion in $(H_2,V_2)$ with respect to $\phi.$ We will show how to obtain  horizontal confusion in $(H_1,V_1).$ Let $m=|H_1|.$  By Lemma~\ref{lem:amplifying-confusion}, there is a multicontext $p$ such that $(H_2,V_2)$ has $m$-ary horizontal confusion with respect to  $\phi$ and some set $G'\incl H_2.$ Let $G=\psi^{-1}(G').$  For $k>0,$ set $G^k_1=p^k[G]$.  Since $p[G']=G',$ we have $\psi(G^k_1)=G'$ for all $k.$  In particular, $G^1_1\subseteq G,$ and by repeatedly applying $p$ to both sides of this inclusion we obtain $G^{k+1}_1\subseteq G^k_1$ for all $k.$  Thus this sequence eventually stabilizes, so we have some $n$ for which $p[G^n_1]=G^n_1.$  Let us set $G_1=G^n_1.$  Now it may be that $(H_1,V_1)$ has horizontal confusion with respect to $p,\pi,$ and $G_1.$  If not, there is some hole $x$ of $p$ and $g_1\in G_1$ such that $p[g_1/x][G_1]\subsetneq G_1.$  So we let $p'$ be the multicontext that results from substituting a forest that evaluates under $\pi$ to $g_1$ for $x,$ and set $G_2^1=p'[G_1].$  Note that $(H_2,V_2)$ has $(m-1)$-ary horizontal confusion with respect to $p'$ and $\phi,$ so we still have $\psi(G_2^1)=G',$  as well as $G_2^1=p'[G_1]=\subsetneq G_1,$ so that $|G_2^1|<|G|.$  We now repeat the procedure above, applying $p'$ to $G_2^1$ until the sequence stabilizes at a set $G_2,$ then checking if the result is a horizontal confusion for $(H_1,V_1),$ and filling a hole of $p'$ if it is not.  We have
$$|G'|\le \cdots |G_k| < |G_{k-1}<\cdots |G_1|\le |G|,$$
so the process will terminate after no more than $|G|-|G'|$ generations, giving a horizontal confusion in $(H_1,V_1).$
\qed

\begin{thm}
  It is decidable if a given forest algebra has  horizontal confusion,  vertical
  confusion, or vertical confusion with respect to a uniform context.
\end{thm}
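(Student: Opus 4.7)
My plan is to give, for each of the three kinds of confusion, an algorithm that saturates a finite semantic abstraction of multicontexts over the finite algebra $(H,V)$. Since $(H,V)$ is finite, I may fix $A=V$ and $\alpha(v\hole)=v$, so that multicontexts are built freely from the operations of $(H,V)$; this loses no generality, because every $v\in V$ is then available as a label.

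For vertical confusion, the abstraction of a multicontext $p$ is the function $f_p:H\to H$ defined by $f_p(g)=p[g]$ (all holes filled with $g$). By induction on the structure of $p$, $f_{\hole}$ is the identity, $f_{p+q}=f_p+f_q$ pointwise, $f_{ap}(g)=\alpha(a\hole)\cdot f_p(g)$, and the empty forest contributes the constant $0$. Hence the set $\mathcal F\subseteq H^H$ of realized functions is the closure of the identity and the constants under pointwise addition and left multiplication by each $v\in V$. Since $H^H$ is finite, $\mathcal F$ can be computed by saturation, and vertical confusion exists iff some $f\in\mathcal F$ has a periodic point of period $>1$ (i.e., $f^k(h)=h$ for some $k>1$ with $f(h)\neq h$). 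For uniform vertical confusion the same strategy works with a smaller closure: a uniform multicontext with spine labels $a_0,\ldots,a_{d-1}$ and branching $n_0,\ldots,n_d$ induces
\[
f_p(g)=n_0\cdot\alpha(a_0\hole)\bigl(n_1\cdot\alpha(a_1\hole)\bigl(\cdots n_d\cdot g\cdots\bigr)\bigr),
\]
so I take the closure of the identity under pre-composition with $g\mapsto n\cdot g$ and $g\mapsto v\cdot g$, again finite and computable.

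Horizontal confusion requires a richer abstraction. For a candidate $G\subseteq H$ with $|G|>1$, I attach to each multicontext $p$ a pair $(V_p,\Pi_p)$, where $V_p=\{p[\mu]:\mu:{\rm holes}(p)\to G\}\subseteq H$ is the value set of $p$ under $G$-valuations, and $\Pi_p\subseteq G\times 2^H$ collects, as $x$ ranges over ${\rm holes}(p)$ and $g$ over $G$, the pairs $(g,\,p[g/x][G])$. Horizontal confusion for $p$ and $G$ is exactly the statement that $\Pi_p$ is nonempty and every $(g,R)\in\Pi_p$ has $R\supseteq G$. The abstraction transforms cleanly under the multicontext constructors: for sum, $V_{p+q}=V_p+V_q$ and
\[
\Pi_{p+q}=\{(g,R+V_q):(g,R)\in\Pi_p\}\cup\{(g,V_p+R):(g,R)\in\Pi_q\};
\]
for vertical action by $v=\alpha(a\hole)$, $V_{ap}=v\cdot V_p$ and $\Pi_{ap}=\{(g,v\cdot R):(g,R)\in\Pi_p\}$. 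The base states are $(\{0\},\emptyset)$ for the empty forest and $(G,\{(g,\{g\}):g\in G\})$ for the single hole. Since $2^H\times 2^{G\times 2^H}$ is finite, saturating from the base states yields all reachable abstract states, and horizontal confusion for $G$ holds iff some reachable state is a witness. Iterating over candidate $G$ completes the algorithm.

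The hard part will be verifying that the horizontal abstraction is both sound and complete for realizability, and that collapsing $\Pi_p$ into a set (rather than indexing by individual holes) loses no information. Soundness is immediate from the transition rules mirroring the multicontext constructors, and completeness follows by structural induction: each reachable abstract state is exhibited by some multicontext, built in parallel with the saturation. The set-valued $\Pi_p$ suffices because the horizontal confusion condition is a universal quantifier over holes and is insensitive to repetition of hole-profiles; Lemma~\ref{lem:amplifying-confusion} provides intuitive backing, showing that useful multiplicities of hole-profiles can always be realized by composition.
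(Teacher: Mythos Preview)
Your proposal is correct and follows essentially the same route as the paper. Both you and the paper first reduce to the alphabet $A=V$ with the identity morphism, then compute a finite saturation. For vertical (respectively uniform vertical) confusion, the paper closes $V$ under addition and composition (respectively under composition and the maps $v\mapsto v+\cdots+v$) to obtain a monoid $\hat V$ (respectively $\tilde V$) of transformations on $H$, and checks aperiodicity; your set $\mathcal F$ of functions $f_p$ is the same set of transformations (constants aside, which cannot witness vertical confusion), and your periodic-point test is exactly aperiodicity of this monoid. For horizontal confusion, the paper's profile $\pi(p)=\{p[g/x][G]:g\in G,\ x\in{\rm holes}(p)\}\times p[G]$ and its fixpoint rules coincide with your $(V_p,\Pi_p)$ abstraction and transition rules, except that the paper drops the tag $g$ from each pair $(g,R)$; since the confusion condition is a universal quantifier over $g$, this coarsening is harmless, and the two saturations compute the same witnesses.
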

\begin{proof}
  Confusion in a forest algebra $(H, V)$ appears to depend on the
  choice of alphabet $A$, a multicontext $p$ over $A$, and a morphism
  from $A^\Delta$ into $(H, V)$. Observe, however, that we can
  restrict attention to a single alphabet and morphism: Consider $V$
  as a finite alphabet, and the morphism $\beta : V^\Delta \to (H,V)$
  induced by the identity map on $V$. If $(H,V)$ has a confusion with
  respect to a multicontext $p$ over $A$ and morphism $\alpha :
  A^\Delta \to (H, V)$, then we can transform it into a confusion of
  the same type with respect to $V$ and $\beta$ in the obvious
  fashion, replacing each node label $a\in A$ of $p$ labeled by
  $\alpha(a) \in V$. Thus in the argument below, we suppress explicit
  mention of an alphabet and morphism and work simply with the
  elements of $V$.

%
%

\paragraph*{\it Vertical confusion.} Testing whether $(H,V)$ has vertical confusion with respect to some multicontext reduces to verifying whether a certain monoid containing $V$ is aperiodic.  If $v, w\in V,$ we define $v+w$ to be the transformation on $H$ given by
$$(v+w)h=vh+wh$$
for all $h\in H.$
Let $\hat V$ be the collection of all maps on $H$ containing $V$ and closed under composition and addition.  $\hat V$ then consists of all multicontexts over $(H,V).$  Furthermore, $\hat V$ is effectively computable from $V,$ since whenever we have a set $U$ of transformations on $H,$ we can check for each $v,w\in U$ whether $v+w$ and $vw$ belong to $U,$ and if not, adjoin them to $U.$ Since there are only finitely many transformations on $H,$ we eventually reach a stage at which we can add no new elements to $U,$ at which point the algorithm terminates.

$\hat V$ is a monoid under composition, and $(H,V)$ is free of vertical confusion if and only if this monoid is aperiodic; {\it i.e.,} if and only if $p^k=p^{k+1}$ for all $p\in \hat V$ and sufficiently large $k,$ which we can determine effectively.

\paragraph*{\it Vertical confusion with respect to a uniform multicontext.} The argument is the same as above, however now we must build a monoid containing $V$ that consists of exactly all the uniform multicontexts.  We accordingly close $V$ under composition and the operations
$$v\mapsto v+v+\cdots +v.$$
Observe that the number of summands in this expression can be bounded above by the size of $H,$ so we can compute this closure effectively as well.  Let us denote the resulting monoid $\tilde{V}.$ $(H,V)$ does not have vertical confusion with respect to any uniform multicontext if and only if $\tilde{V}$ is aperiodic.

\paragraph*{\it Horizontal confusion.} We now test if $(H,V)$ has horizontal confusion. The algorithm first guesses the set $G$. For a multicontext $p$, we define its profile to be the set 
	\begin{align*}
		\pi(p) = \set{p[g/x][G] : g \in G, x \in {\rm holes}(p)} \times p[G]\qquad  \in P(P(H)) \times P(H).
	\end{align*}
	The forest algebra  has horizontal confusion with respect to a multicontext $p$ and $G$ if and only if the profile $\pi(p)$ only has supersets of $G$ on the first coordinate. Therefore, to determine if the forest algebra has horizontal confusion, it suffices to compute the set 
	\begin{align*}
	Y = \set{ \pi(p) : \mbox{$p$ is a multicontext}}.
	\end{align*}
This set is computed using a fix-point algorithm, since it is the least set that satisfies the properties listed below. (In the implications, we lift the forest algebra operations to sets $F \subseteq H$ and families of sets $\Ff \subseteq P(H)$ in the natural way.)
\begin{eqnarray*}
	(\set{\set{g} : g \in G}, G) \in Y\\
	(\Ff,F)\in Y \Rightarrow (v\Ff,vF) \in Y & & \mbox{ for every $v \in V$}\\
	(\Ff_1,F_1),(\Ff_2,F_2) \in Y \Rightarrow (\Ff_1+F_2 \cup F_1+\Ff_2, F_1+F_2)
\end{eqnarray*}
	\end{proof}

\section{Applications}\label{section.applications}

Here we apply the results of the preceding section to exhibit a forest language in CTL* that is not in CTL, a language in PDL that is not in $FO[\prec],$ and a language that is not in graded PDL.   All of our examples have syntactic forest algebras with aperiodic vertical monoids, and all the classes in question contain languages with arbitrarily complicated aperiodic vertical monoids, so we really do need  machinery of forest algebras to give algebraic proofs of these separations.

\subsection{ Forests with a maximal path in $(ab)^*$ }
Consider the set $L_1$ of forests over $A=\{a,b\}$ in which there is a maximal path---that is, a path from a root to a leaf--- in $(ab)^*.$  This language is in CTL*.  To see this, note that $\phi= \tle (A^+)$ is a forest formula in CTL* defining the set of nonempty forests. Consider the formally disjoint formulas
\begin{eqnarray*}
	\phi_1 = b \land \phi \qquad \phi_2 = b \land \neg \phi_1 \qquad \phi_3 = \neg \phi_1 \land \neg \phi_2.
\end{eqnarray*}
The formula $\phi_1$ holds in non-leaf nodes with label $b$, the formula $\phi_2$ holds in leaves with label $b$, and the formula $\phi_3$ holds in nodes with label $a$. Then $L_1$ is defined by the CTL* forest formula $\tle((\phi_3\phi_1)^*(\phi_3\phi_2)).$  We claim that $L_1$ is not in CTL.
To do this, by Theorem~\ref{thm:confusion}, we need only exhibit a multicontext $p$ with respect to which the syntactic forest algebra of $L_1$ has vertical confusion. Let $p=a\hole + b\hole.$  Let $h_0$ be the class of the tree $b$ in the syntactic congruence of $L_1,$ and let $h_1$ be the class of the tree $ab.$  Observe that $h_0$ and $h_1$ are distinct horizontal elements of the syntactic algebra, since $h_1$ contains elements of $L_1$ and $h_0$ does not.  We have vertical confusion, because $p[h_0]$ is then the class of $ab+bb,$ which is $h_1,$ and $p[h_1]$ is the class of $aab+bab,$ which is $h_0.$  
 \subsection{Binary trees with even path length}
  This example uses  {\it unlabeled binary trees}, which are 
  trees over a one-letter alphabet $\{a\}$ where every node has zero or two children. 
Let $L_2$ be the set of unlabeled binary trees where  every path from the
  root to a leaf has even length.    Let $p$ be the uniform multicontext $a(\hole+\hole).$ Let $h_0$ denote the set  of binary trees in which every maximal path has even length, and $h_1$ the set of binary trees in which every maximal path has odd length. These are distinct classes in the syntactic congruence of $L_2.$  Obviously $p[h_0]=h_1$ and $p[h_1]=h_0,$ so we have vertical confusion with respect to a uniform multicontext, and thus by  Theorem~\ref{thm:confusion}, $L_2$ is not in $FO[\prec].$

 An argument due to Potthoff~\cite{potthoff} can be used to show that $L_2$ is definable in first-order logic in which there is both the ancestor and the next-sibling relations. Languages definable in $FO[\prec]$ are obviously in the intersection
  of the class of languages definable in $FO$ with $\prec$ and the
  next-sibling relationship, and the class of languages $L$ with
  commutative $H_L.$ This example shows that the containment is
  strict. Note that $L_2$ is expressible in graded PDL so we have also
  established that the languages in graded PDL with aperiodic forest
  algebras need not be definable in $FO[\prec]$ (there is even an
  example, also due to Potthoff, which shows that languages definable
  in graded PDL with aperiodic forest algebras need not be definable
  in $FO$ with $\prec$ and the next-sibling relationship).

\subsection {\it (Boolean expressions).} Consider the set $L_3$ of
  trees over the alphabet $\{0,1,\vee,\wedge\}$ that are well-formed
  boolean expressions ({\it i.e.}, all the leaf nodes are labeled 0 or
  1, and all the interior nodes are labeled $\vee$ or $\wedge$) that
  evaluate to 1. $L_3$ is contained in a single equivalence class of
  the syntactic congruence, as is the set of well-formed trees that
  evaluate to 0.  We denote the corresponding elements of $H_{L_3}$ by
  $h_1$ and $h_0.$

  Now consider the  multicontext
  $p=\vee(\wedge(\hole+\hole)+\wedge(\hole+\hole)).$ We can fix a value $1$ or $0$ in any single hole, and then set the remaining holes to obtain either a tree evaluating to 1 or a tree evaluating to 0.  Thus the syntactic algebra of $L_3$
  has horizontal confusion with respect to the multicontext $p$ and the set $\{h_0, h_1\},$ and so is not in graded PDL. Observe that the vertical component of the syntactic
  algebra of $L_3$ is aperiodic: In contrast to the word case,
  languages recognized by aperiodic algebras are not necessarily
  expressible in first-order logic, or even in graded PDL.

\subsection{Horizontally idempotent and commutative algebras}

Obviously, we can separate CTL$^*$ and PDL from $FO[\prec]$ and graded
PDL, respectively, because the syntactic algebras for the former
classes have idempotent and commutative horizontal parts, while for
the latter the horizontal components need only be aperiodic and
commutative. Thus, for example, any language in $FO[\prec]$ that fails
to satisfy the idempotency condition is not in CTL$^*$. We can use our
algebraic methods to show that this is in fact the {\it only}
distinction:


\bigskip
\begin{thm}~\label{thm:slwreath} Let $(H,V),$ $(H_j,V_j),$ $j=1,\ldots,k$ be forest
  algebras such that $H$ is idempotent and commutative, each
  $(H_i,V_i)$ is a path algebra, and such that $(H,V)$ divides
  $(H_1,V_1)\circ\cdots\circ(H_k,V_k).$ Then each $(H_i,V_i)$ has a
  distributive homomorphic image $(H_i',V_i')$ such that $(H,V)$
  divides $(H_1',V_1')\circ\cdots\circ (H_k',V_k').$

\end{thm}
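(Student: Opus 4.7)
The plan is to construct explicit distributive quotients of the factors and then transfer the division. First, for each $i$, let $\sim_i$ be the smallest forest-algebra congruence on $(H_i,V_i)$ containing every pair $(h, h+h)$ with $h \in H_i$, and set $(H_i',V_i') = (H_i,V_i)/{\sim_i}$. The horizontal monoid $H_i'$ is then commutative (inherited) and idempotent (by construction), and $(H_i',V_i')$ remains a path algebra since quotients preserve the defining identities. To verify that $(H_i',V_i')$ is in fact distributive, I set $g=h$ in the path identity $vg+vh=v(g+h)+v0$ and use idempotence to get $vh = vh + v0$, so $v0 + vh = vh$ for every $v,h$. Applied to $v(g+h)$, this yields $v(g+h) + v0 = v(g+h)$, and combining with the path identity gives $v(g+h) = vg + vh$, the distributive law.

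Second, I would fix a subalgebra $(K,U) \subseteq W := (H_1,V_1) \circ \cdots \circ (H_k,V_k)$ with a surjection $\rho \colon (K,U) \twoheadrightarrow (H,V)$ witnessing the division, and a surjective homomorphism $\phi \colon A^\Delta \twoheadrightarrow (K,U)$, decomposed by iterated use of Theorem~\ref{thm:sequential-composition} as $\phi = \alpha_1 \otimes \cdots \otimes \alpha_k$. Writing $\pi_j \colon (H_j,V_j) \to (H_j',V_j')$ for the canonical quotients, I would construct $\phi' = \alpha_1' \otimes \cdots \otimes \alpha_k' \colon A^\Delta \to W' := (H_1',V_1') \circ \cdots \circ (H_k',V_k')$ stage by stage: at stage $i$, fix a representative $\hat h_j \in H_j$ for each class of $\sim_j$ and set $\alpha_i'((a,h_1',\ldots,h_{i-1}')\hole) = \pi_i(\alpha_i((a,\hat h_1,\ldots,\hat h_{i-1})\hole))$. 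The image of $\phi'$ is a subalgebra of $W'$, and it suffices to show $\ker \phi' \subseteq \ker(\rho \circ \phi)$: then $\rho \circ \phi$ factors through this image to yield $(H,V)$ as the required quotient.

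This last inclusion is the main obstacle. Idempotence of $H$ tells us that $\rho \circ \phi$ already identifies every forest with its duplicate at every level, which is essentially what the congruence $\sim_i$ imposes on the $i$-th factor; but at stage $i$, replacing one lift $\hat h_j$ by another lift of the same $\sim_j$-class introduces an ``error'' that must be shown to vanish after composing with $\rho$. Control will come from the path-algebra identities $vg+vh = v(g+h)+v0$ and $u(g+h) = u(g+uh)$ for $u^2=u$ inside each quotient, combined with the absorption $v0 \leq vh$ from the distributivity verification, which together allow the error terms to be swallowed by the idempotent behaviour of $H$. Carrying this out amounts to a double induction---on the stage $i$ and on the length of a derivation in $\sim_i$---and tracking how the errors vanish under $\rho$ is the technical heart of the argument.
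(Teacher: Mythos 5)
Your first step is sound: the quotient of $(H_i,V_i)$ by the congruence generated by the pairs $(h,h+h)$ is horizontally idempotent and commutative, remains a path algebra (identities pass to quotients), and your derivation of distributivity from identity~\eqref{eq:path-ident} is correct ($vh=vh+v0$ from the case $g=h$ plus idempotence, then $v(g+h)+v0=v(g+h)$ absorbs the extra term). This quotient is at least as large as the one the paper uses, namely $(e(H_i),\bar V_i)$ with $e(H_i)$ the idempotents of $H_i$ and $\alpha(h)=\omega\cdot h$, $\alpha(v)=\bar v$ where $\bar v e=\omega(ve)$; your congruence is contained in the kernel of that map, so either quotient would do for the statement. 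A side benefit of your formulation is that the homomorphism property of the quotient map is automatic, whereas the paper must verify $\overline{v_1v_2}e=\bar v_1(\bar v_2 e)$ by a nontrivial computation with \eqref{eq:path-ident} and aperiodicity.

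The problem is that the second half of your argument --- the transfer of the division to the quotiented wreath product --- is not a proof. The inclusion $\ker\phi'\subseteq\ker(\rho\circ\phi)$ is precisely the content of the theorem, and you explicitly defer it to an unexecuted ``double induction'' while only gesturing at how the representative-choice errors might be absorbed. As it stands this could fail for all the reader knows: at stage $i$ your $\alpha_i'$ evaluates $\alpha_i$ at arbitrary lifts $\hat h_j$, so $\phi'(t)$ and $\pi(\phi(t))$ genuinely differ, and nothing you have written controls that difference after composing with $\rho$. The paper avoids this entirely by invoking Lemma~\ref{lemma.division} to recast the hypothesis in tm-division form: there is a submonoid $H'$ of $H_1\times\cdots\times H_k$ and a surjection $f:H'\to H$ with $f(\hat v h)=vf(h)$. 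Idempotence of $H$ gives $f(\omega h)=f(h)$, so one may restrict $f$ to $e(H')\subseteq e(H_1)\times\cdots\times e(H_k)$ (still surjective), and replace $\hat v=(u,g_2,\ldots,g_k)$ by $\tilde v=(\bar u,\bar g_2,\ldots,\bar g_k)$; then $f(\tilde v e)=f(\omega(\hat v e))=f(\hat v e)=vf(e)$, because $\tilde v e$ is the componentwise idempotent power of $\hat v e$. No free algebras, no representatives, no kernels. I would recommend you either adopt that formulation or actually carry out your induction; in its current form the heart of the theorem is missing.
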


\proof
Let $(H,V)$ be a path algebra.  We define
$e(H)$ to be the set of idempotents of $H.$  By the commutativity of $H,$ the sum of two idempotents is idempotent.  Thus $e(H)$ is an idempotent and commutative submonoid of $H.$  If $h\in H$ and $k$ is a nonnegative integer, we denote by $k\cdot h$ the sum of $k$ copies of $h.$  We also denote by $\omega h$ the unique idempotent in $\{k\cdot h:k\in  \Nat\}.$ Since $H$ is aperiodic and commutative, there exists $k$ such that $\omega\cdot h=k\cdot h=(k+1)\cdot h$ for all $h\in H.$

For every  $v\in V,$ we define a function $\bar v : e(H) \to e(H)$ by

$$\bar{v}\cdot e=\omega(ve).$$  
We define a forest algebra $(e(H),\bar V)$ as follows. The horizontal monoid is $e(H)$. The vertical monoid is $\bar V = \set{\bar v : v \in V}$, with function composition. The action is by applying the function $\bar v$ to an argument $e \in e(H)$. To prove that this is a forest algebra, we need to show that for any element $e \in e(H)$, there is an element $\bar v \in \bar V$ such that $\bar v f = e +f $ holds for any $f \in e(H)$. This element is simply $\bar{e + \hole}$. Indeed,
$$(\overline{e + \hole})\cdot f=\omega ((e + \hole)f)=\omega (e+f)=e+f.$$
This concludes the proof that $(e(H),\bar V)$ is a forest algebra.

We now show that $(e(H),\bar V)$   is distributive. In other words, the following identity holds for any $v \in V$ and $e_1,e_2 \in e(H)$.
\begin{equation}\label{eq:sldist}{\bar v}(e_1+e_2)={\bar v}e_1+\bar{v}e_2\end{equation}
	Using the first identity in the definition of path algebras~\eqref{eq:path-ident}, we obtain
	\begin{eqnarray*}
	\bar{ v}(e_1+e_2)&=&\omega  v(e_1+e_2)\\
	                                    &=&\omega (v(e_1+e_2)+v(e_1+e_2))\\
	                                    &=&\omega (v(e_1+e_2+e_1+e_2)+v\cdot 0)\\
	                                    &=&\omega  (v(e_1+e_2)+v\cdot 0)\\
	                                    &=&\omega (ve_1+ve_2)\\
	                                    &=&\omega  ve_1+\omega\cdot ve_2\\
	                                    &=&{\bar v}e_1+\bar{v}e_2.
	\end{eqnarray*}

Finally, we prove that  the function
\begin{align*}
	\alpha(h)= \omega \cdot h \qquad \alpha(v)=\bar v
\end{align*}
is a forest algebra homomorphism
\begin{align*}
	\alpha : (H,V) \to (e(H),\bar V).
\end{align*}

Clearly $\alpha$ preserves $+$. It remains to show that it
preserves the remaining two operations of forest algebra, namely inserting a forest into a context and composition of two contexts. For inserting a forest into a context, we have
\begin{align*}
	\alpha(vh) = \omega \cdot vh = \bar v h =  \alpha(v)  \alpha(h).
\end{align*}
For composition of two contexts we need to show $\alpha(v_1) \alpha(v_2) = \alpha(v_1 v_2)$. Since $\bar V$ is defined as a set of functions on $e(H)$, we need to show that both sides of the equality describe the same function on $e(H)$. In other words, we have to prove that for every $e \in e(H)$,
\begin{equation}\label{eq:slmult}\bar{v_1}(\bar{v_2}e)=\overline{v_1v_2}e\end{equation}
First note that the path algebras property~\eqref{eq:path-ident} implies that for all $h\in H,$ $v\in V,$ $m\in \set{1,2,\ldots},$ we have
$$m\cdot(vh)=v(m\cdot h)+(m-1)\cdot(v0).$$
Thus by aperiodicity of $H,$
$$\omega(vh)=v(\omega h)+\omega(v 0).$$
If $e\in H$ is idempotent, this becomes
$$\omega  (ve)=ve+\omega( v0).$$  Consequently we have

\begin{eqnarray*}
\overline{v_1v_2}e &=& \omega (v_1v_2e)\\
                                   &=& v_1\omega  (v_2e)+\omega  (v_1\cdot 0)\\
                                   &=&  v_1(v_2e+\omega (v_2\cdot 0))+\omega (v_1\cdot 0)\\
                                   &=&  v_1(v_2e+\omega (v_2\cdot 0)+\omega (v_2\cdot 0))+\omega (v_1\cdot 0)\\
                                   &=& v_1(\omega (v_2e)+\omega (v_2\cdot 0))+\omega (v_1\cdot 0)\\
                                   &=& v_1\omega (v_2e+\omega (v_2\cdot 0))+\omega (v_1\cdot 0)\\
                                   &=& \omega  v_1(v_2e+\omega (v_2\cdot 0))\\
                                   &=&\omega v_1(\omega v_2e)\\
                                   &=& {\bar v_1}({\bar v_2}e).
\end{eqnarray*}

Summing up: We have defined a forest algebra homomorphism
\begin{align*}
	\alpha : (H,V) \to (e(H),\bar V)
\end{align*}
where the target forest algebra is distributive and horizontally commutative and idempotent.

Suppose now that $(H,V)$ is a forest algebra with idempotent and commutative $H$ that divides a wreath product
$$(H_1,V_1)\circ\cdots\circ(H_k,V_k),$$
where each $(H_i,V_i)$ is a path algebra.  To complete the proof of the theorem, we will show that $(H,V)$ divides
$$(e(H_1),\bar{V_1})\circ\cdots\circ (e(H_k),\bar{V_k}).$$
%

We now apply Lemma~\ref{lemma.division} on the equivalence of the two definitions of division. The hypothesis is then that there is a submonoid $H'$ of $H_1\times\cdots\times H_k,$  and a homomorphism $f$ from $H'$ onto $H$ with the following property:  For each $v\in V$ there is $\hat v$ in the vertical monoid of $(H_1,V_1)\circ\cdots\circ(H_k,V_k)$ such that for all $h\in H',$
$$f({\hat v}h)=vf(h).$$
Note that $h$ has the form $(h_1,\ldots,h_k),$ where $h_i\in H_i$ for $i=1,\ldots,k,$ and 
$${\hat v} = (u,g_2,\ldots,g_k),$$
 where $u\in V_1$ and each
$$g_j:H_1\times\cdots\times H_{j-1}\to V_j$$
is a map.  Since $H$ is idempotent, $f(\omega h)=f(h)$ for all $h\in H'.$  We consider the restriction of $f$ to $e(H'),$ which is a subset of $e(H_1)\times\cdots\times e(H_k).$  We will show that for each $v\in V$ there is an element $\tilde v$ of the vertical monoid of $(e(H_1),\bar{V_1})\circ\cdots\circ (e(H_k),\bar{V_k})$ such that 
for all $e\in e(H),$
$$f({\tilde v}e)=v f(e).$$
To do this, we simply alter $\hat v=(u,g_2,\ldots,g_k)$ in the obvious fashion:
$${\tilde  v} = (\bar{u},\bar{g_2},\ldots,\bar{g_k}),$$
where by definition 
$${\bar g_j}(x)=\overline{g_j(x)}.$$

Set $e=(e_1,\ldots,e_k)\in e(H').$ 
We have
\begin{eqnarray*}   
f({\tilde v}e) & =& f ( {\bar u}e_1,\overline{g_2(e_1)}e_2,\ldots,\overline{g_k(e_1,\ldots,e_{k-1})}e_k  )\\
&=& f ( {\bar u}e_1,\overline{g_2(e_1)}e_2,\ldots,\overline{g_k(e_1,\ldots,e_{k-1})}e_k  )\\
			&=& f (\omega(ue_1),\omega g_2(e_1)e_2 ,\ldots,\omega g_k(e_1,\ldots,e_{k-1})e_k) \\
			&=& f(ue_1,g_2(e_1)e_2,\ldots,  g_k(e_1,\ldots,e_{k-1})e_k)\\
			&=&f({\hat v}e)\\
			&=&vf(e),
\end{eqnarray*}
which completes the proof.

\qed

Theorem~\ref{thm.main} immediately yields the following corollary:

\begin{thm}\label{thm.foctlstar} A forest language is definable in CTL$^*$ (respectively
  PDL) if and only if it is definable in $FO[\prec]$ (respectively
  graded PDL) and its syntactic algebra is horizontally idempotent.
\end{thm}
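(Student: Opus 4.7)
The plan is to derive the theorem as an immediate corollary of Theorems~\ref{thm.main} and~\ref{thm:slwreath}, handling the CTL$^*$ vs.\ $FO[\prec]$ case in detail; the PDL vs.\ graded PDL case is completely analogous, since the only difference is dropping the aperiodicity condition on the vertical monoid everywhere.

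For the easy direction, suppose $L$ is in CTL$^*$. CTL$^*$ is contained in $FO[\prec]$ (this was observed right after the definition of CTL$^*$ in Section~\ref{section.logic}), so the first half holds. By Theorem~\ref{thm.main}, $(H_L, V_L)$ divides an iterated wreath product of aperiodic distributive algebras. Every distributive algebra is horizontally idempotent, by the calculation $h = (h+1)(0+0) = (h+1)0 + (h+1)0 = h+h$ that already appears in the proof of Theorem~\ref{thm:distributive-algebra}. The horizontal monoid of a wreath product is a direct product of the horizontal components, so horizontal idempotency is preserved under wreath products, and it is preserved under division as well, so $H_L$ is idempotent.

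For the hard direction, suppose $L$ is in $FO[\prec]$ and $H_L$ is idempotent. Since $FO[\prec]$ has no sibling predicates, $H_L$ is also commutative. By Theorem~\ref{thm.main}, $(H_L, V_L)$ divides a wreath product $(H_1, V_1) \circ \cdots \circ (H_k, V_k)$ of aperiodic path algebras. Now apply Theorem~\ref{thm:slwreath}: since $H_L$ is idempotent and commutative and each $(H_i, V_i)$ is a path algebra, each $(H_i, V_i)$ admits a distributive homomorphic image $(H_i', V_i')$ such that $(H_L, V_L)$ divides $(H_1', V_1') \circ \cdots \circ (H_k', V_k')$. Aperiodicity of $V_i$ passes to the quotient $V_i'$, so each $(H_i', V_i')$ is an aperiodic distributive algebra. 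Applying the other direction of Theorem~\ref{thm.main}, we conclude that $L$ is definable in CTL$^*$.

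The PDL case goes through verbatim after deleting every occurrence of the word ``aperiodic'': one starts from graded PDL, uses Theorem~\ref{thm.main} to get divisibility by an iterated wreath product of path algebras, invokes Theorem~\ref{thm:slwreath} to replace each factor by a distributive image, and applies Theorem~\ref{thm.main} once more to land in PDL. There is essentially no real work — the two ingredients (the wreath-product characterization and the ``distributive quotient'' construction $h \mapsto \omega h$, $v \mapsto \bar v$) have been prepared precisely so this corollary falls out; the only minor point worth checking explicitly is horizontal commutativity of $H_L$ in the hypothesis, which is free from the absence of sibling order in $FO[\prec]$ and graded PDL.
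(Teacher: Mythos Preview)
Your proposal is correct and matches the paper's approach exactly: the paper simply states that Theorem~\ref{thm.foctlstar} is an immediate corollary of Theorem~\ref{thm.main} (with Theorem~\ref{thm:slwreath} supplying the key step), and you have spelled out precisely those details, including the minor points about horizontal commutativity and preservation of aperiodicity under quotients.
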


The first of these facts follows from a result of  Moller and
Rabinovich~\cite{moller-rabinovich2} who show that over infinite trees 
properties expressible in CTL$^*$ are exactly the bisimulation-invariant properties expressible in monadic path logic.



\section{Conclusion and further research}

Results like those in Section~\ref{section.applications} are typically proved by model-theoretic methods. Here we have demonstrated a fruitful and fundamentally new way, based on algebra, to study the expressive power of these logics.  

Of course, the big question left unanswered is whether we can
establish effective necessary and sufficient conditions for membership
in any of these classes.  We do not expect that the conditions
established in Theorem~\ref{thm:confusion} are sufficient.  The approach outlined
in Section~\ref{section.ef} may constitute a model for how to proceed:
a deeper understanding of the ideal structure of forest algebras can
lead to new wreath product decomposition theorems.

In a sense, we are searching for the right generalization of
aperiodicity.  For regular languages of words, aperiodicity of the
syntactic monoid, expressibility in first-order logic with linear
ordering, expressibility in linear temporal logic, and recognizability
by an iterated wreath product of copies of the aperiodic unit $U_2$
are all equivalent.  For forest algebras, the obvious analogues are,
respectively, aperiodicity of the vertical component of the syntactic
algebra, expressibility in $FO[\prec],$ expressibility in {CTL}, and
recognizability by an iterated wreath product of copies of
$\mathcal{U}_2.$ As we have seen, only the last two
coincide. Understanding the precise relationship among these different
formulations of aperiodicity for forest algebras is an important goal
of this research.

Another way of looking at this research is that it sets the scene for
a Krohn-Rhodes theorem for trees. The Krohn-Rhodes theorem states that
every transition monoid divides an iterated wreath product of transition monoids
which are either $U _2$ or groups that divide the original
monoid. The ingredients of the theorem are therefore: a notion of
wreath product, a notion of an easy transition monoid $U_2$,
and a notion of a difficult transition monoid (a group). For our purposes here,
we are particularly interested in the (already quite difficult) version of the theorem
which states that every aperiodic transition monoid divides a wreath product of copies of $U_2.$
In this
paper, we have provided some of the ingredients: the wreath product
and the easy objects.  (There are several candidates for the easy
objects, e.g.~simply $\mathcal{U}_2$ or maybe path algebras. 
There are probably several Krohn-Rhodes theorems). We have
provided examples of properties one expects from the difficult objects
(the various types of confusion), but we still have no clear idea what
they are (in other words, what is a tree group?). We have also shown
that the wreath product is strongly related to logics and composition.  Finding (at least one) Krohn-Rhodes
theorem for trees is probably the most ambitious goal of this
research.



\bibliographystyle{plain}
\bibliography{wreathproducts}

\appendix

\end{document}